\newtheorem{theorem}{Theorem}
\newtheorem{remark}{Remark}
\newtheorem{example}{Example}
\newtheorem{lemma}{Lemma}
\newtheorem{corollary}{Corollary}
\newtheorem{proposition}{Proposition}
\newif\ifsupp
\newcommand{\mTo}{\mathrel{\To_m}}
\newcommand{\cTo}{\mathrel{\To_c}}
\newcommand{\sorts}{\vdash}
\newcommand{\types}{\vdash}
\newcommand{\monotypes}{\vdash}
\newcommand{\atheory}{\mathit{Th}}
\newcommand{\dto}[3]{#1{:}\term{#2} \to \term{#3}}
\newcommand{\boolsort}{o}
\newcommand{\intsort}{\mathsf{int}}
\renewcommand{\implies}{\Rightarrow}
\renewcommand{\iff}{\Leftrightarrow}
\newcommand{\mng}[1]{\llbracket #1 \rrbracket}
\newcommand{\mmng}[1]{\mathcal{M}\llbracket #1 \rrbracket}
\newcommand{\cmng}[1]{\mathcal{C}\llbracket #1 \rrbracket}
\newcommand{\smng}[1]{\mathcal{S}\llbracket #1 \rrbracket}
\newcommand{\rmmng}[1]{\mathcal{M}\llparenthesis #1 \rrparenthesis}
\newcommand{\rsmng}[1]{\mathcal{S}\llparenthesis #1 \rrparenthesis}
\newcommand{\subtype}{\sqsubseteq}
\newcommand{\order}{\mathsf{order}}
\newcommand{\abra}[1]{\langle #1 \rangle}
\newcommand{\ind}{\iota}
\newcommand{\truetm}{\mathsf{true}}
\newcommand{\falsetm}{\mathsf{false}}
\newcommand{\To}{\Rightarrow}
\newcommand{\andfn}{\mathsf{and}}
\newcommand{\orfn}{\mathsf{or}}
\newcommand{\notfn}{\mathsf{not}}
\newcommand{\existsfn}{\mathsf{exists}}
\newcommand{\mexistsfn}{\mathsf{mexists}}
\newcommand{\impliesfn}{\mathsf{implies}}
\newcommand{\forallfn}{\mathsf{forall}}
\newcommand{\aformulas}{\mathit{Fm}}
\newcommand{\aterms}{\mathit{Tm}}
\newcommand{\dom}{\mathsf{dom}}
\newcommand{\ifterm}[3]{\term{\mathsf{if} #1 \mathsf{then} #2 \mathsf{else} #3}}
\newcommand{\abs}[2]{\lambda #1.\,\term{#2}}
\newcommand{\basety}[2]{#1\langle \term{#2} \rangle}
\newcommand{\boolty}[1]{\basety{\boolsort}{#1}}
\newcommand{\intty}{\intsort}
\newcommand{\freshrel}[2]{\mathsf{freshRel}(#1)(#2)}
\newcommand{\freshty}[2]{\mathsf{freshTy}(#1)(#2)}
\newcommand{\freshenv}[1]{\mathsf{freshEnv(#1)}}
\newcommand{\vv}[1]{\overline{#1}}
\newcommand{\infers}{\Vdash}
\newcommand{\pideal}{{\Downarrow}}
\newcommand{\mm}{\medmid\medmid}
\newcommand{\FormatLabel}[1]{%
  \textsf{(#1)}
}
\newcommand{\RuleName}[2]{%
  \newcommand{#1}{\FormatLabel{#2}}
}
\newcommand{\makeset}[1]{\{ #1 \}}
\newcommand{\LSym}{\mathsf{LSym}}
\renewcommand{\phi}{\varphi}
\newcommand{\sfunc}{T^\mathcal{S}}
\newcommand{\mfunc}{T^\mathcal{M}}
\newcommand{\com}{\mathsf{Com}}
\newcommand{\lar}{\mathsf{Lar}}
\newcommand{\toptype}[1]{\top_{\!\!#1}}
\newcommand{\bottype}[1]{\bot_{#1}}
\RuleName{\ERefl}{ERefl}
\RuleName{\EDelta}{E$\delta$}
\RuleName{\EBeta}{E$\beta$}
\RuleName{\EAppL}{EAppL}
\RuleName{\EAppR}{EAppR}
\RuleName{\EAbs}{EAbs}
\RuleName{\EExTrue}{EExTrue}
\RuleName{\EExFalse}{EExFalse}
\RuleName{\EBinOp}{EBinOp}
\RuleName{\EUnOp}{EUnOp}
\RuleName{\SCst}{SCst}
\RuleName{\SVar}{SVar}
\RuleName{\SApp}{SApp}
\RuleName{\SAbs}{SAbs}
\RuleName{\TInt}{SubInt}
\RuleName{\TBool}{SubBool}
\RuleName{\TArrow}{SubArr}
\RuleName{\TProduct}{SubProd}
\RuleName{\TVar}{TVar}
\RuleName{\TConst}{TConst}
\RuleName{\TGen}{TGen}
\RuleName{\TConstraint}{TConstraint}
\RuleName{\TAnd}{TAnd}
\RuleName{\TOr}{TOr}
\RuleName{\TExists}{TExists}
\RuleName{\TSub}{TSub}
\RuleName{\TEquiv}{TEquiv}
\RuleName{\TStrengthen}{TStrengthen}
\RuleName{\TAppR}{TAppR}
\RuleName{\TAppI}{TAppI}
\RuleName{\TAbsR}{TAbsR}
\RuleName{\TAbsI}{TAbsI}
\RuleName{\TCond}{TCond}
\RuleName{\TLet}{TLet}
\RuleName{\TProg}{TProg}
\RuleName{\RInt}{RInt}
\RuleName{\RBool}{RBool}
\RuleName{\RArrow}{RArrow}
\RuleName{\RProd}{RProd}
\RuleName{\RScheme}{RScheme}
\RuleName{\REnvA}{REnv$_1$}
\RuleName{\REnvB}{REnv$_2$}
\RuleName{\REnvC}{REnv$_3$}
\RuleName{\ISubBool}{ISubBool}
\RuleName{\ISubInt}{ISubInt}
\RuleName{\ISubArrow}{ISubArrow}
\RuleName{\ISubProd}{ISubProd}
\RuleName{\IArrow}{IArrow}
\RuleName{\IAbsI}{IAbsI}
\RuleName{\IAbsR}{IAbsR}
\RuleName{\IVar}{IVar}
\RuleName{\IConst}{IConstraint}
\RuleName{\IPromote}{IPromote}
\RuleName{\IAppI}{IAppI}
\RuleName{\IAppR}{IAppR}
\RuleName{\IRedex}{IRedex}
\RuleName{\IAbs}{IAbs}
\RuleName{\IAnd}{IAnd}
\RuleName{\IOr}{IOr}
\RuleName{\IExists}{IExists}
\RuleName{\ILet}{ILet}
\RuleName{\IProg}{IProg}
\RuleName{\DBase}{TBase'}
\RuleName{\DArrow}{TArrow'}
\RuleName{\DVar}{TVar'}
\RuleName{\DConst}{TConst'}
\RuleName{\DPromote}{TPromote'}
\RuleName{\DApp}{TApp'}
\RuleName{\DRedex}{TRedex'}
\RuleName{\DAbs}{TAbs'}
\RuleName{\DCond}{TCond'}
\RuleName{\DLet}{TLet'}
\RuleName{\DSub}{TSub'}
\RuleName{\DWeaken}{TEquiv'}
\RuleName{\GVar}{GVar}
\RuleName{\GCst}{GCst}
\RuleName{\GConstraint}{GConstr}
\RuleName{\GAppRel}{GAppR}
\RuleName{\GAppInd}{GAppI}
\RuleName{\GAbs}{GAbs}
\RuleName{\GEx}{GEx}
\def\term#1{%
    \@term#1 \@empty
}
\def\@term#1 #2{%
   \mathit{#1}
   \ifx #2\@empty\else
    \:\expandafter\@term  %
   \fi
   #2%
}
\DeclareRobustCommand{\medmid}{%
  \mathrel{\mathpalette\med@mid\relax}%
}
\newcommand{\med@mid}[2]{%
  \sbox\z@{$\m@th#1\vdash$}%
  \setlength{\unitlength}{\ht\z@}%
  \begin{picture}(.3,1)
  \roundcap
  \linethickness{%
    \ifdim\unitlength<0.9ex
      0.08%
    \else
    \ifdim\unitlength<1.2ex
      0.07%
    \else
      0.06%
    \fi
  \fi\unitlength}
  \polyline(.2,0.03)(.2,0.97)
  \end{picture}%
}
\renewcommand{\models}{\vDash}
\newcommand{\Iter}{\mathit{Iter}}
\newcommand{\iter}{\mathit{iter}}
\newcommand{\add}{\mathit{add}}
\newcommand{\Add}{\mathit{Add}}
\renewcommand{\sum}{\mathit{sum}}
\newcommand{\lomon}{\mathsf{L}}
\newcommand{\upmon}{\mathsf{U}}
\newcommand{\iembed}{\mathsf{I}}
\newcommand{\jembed}{\mathsf{J}}
\newtheoremstyle{TheoremNum}
        {\topsep}{\topsep}              %
        {\itshape}                      %
        {}                              %
        {\bfseries}                     %
        {.}                             %
        { }                             %
        {\thmname{#1}\thmnote{ \bfseries #3}}%
    \theoremstyle{TheoremNum}
    \newtheorem{thmn}{Theorem}
    \newtheorem{lemn}{Lemma}
\renewcommand{\cite}{\citep}
\DeclareOldFontCommand{\it}{\normalfont\itshape}{\mathit}
\renewcommand\maketitle{
   \begin{center}
     {\Large\sffamily\@title}\\[4mm]
     \begin{center}
       \sffamily
       \begin{tabular}{c}
         Toby Cathcart Burn\\
         {\small University of Oxford}
       \end{tabular}
       \qquad
       \begin{tabular}{c}
         C.-H. Luke Ong\\
         {\small University of Oxford}
       \end{tabular}
       \qquad
       \begin{tabular}{c}
         Steven J. Ramsay\\
         {\small University of Oxford}
       \end{tabular}
     \end{center}
   \end{center}
}
\begin{document}

\title{Higher-Order Constrained Horn Clauses and Refinement Types}         %

\maketitle

\vspace{7mm}
\hrule
\begin{abstract}
  \small
Motivated by applications in automated verification of higher-order functional programs, we develop a notion of constrained Horn clauses in higher-order logic and a decision problem concerning their satisfiability.
We show that, although satisfiable systems of higher-order clauses do not generally have least models, there is a notion of canonical model obtained through a reduction to a problem concerning a kind of monotone logic program.
Following work in higher-order program verification, we develop a refinement type system in order to reason about and automate the search for models.
This provides a sound but incomplete method for solving the decision problem.
Finally, we show that there is a sense in which we can use refinement types to express properties of terms whilst staying within the higher-order constrained Horn clause framework.

\end{abstract}
\hrule
\vspace{5mm}

\section{Introduction}\label{sec:intro}

There is evidence to suggest that many first-order program verification problems can be framed as solvability problems for systems of constrained Horn clauses \cite{beyene-et-al-cav2013,bjorner-et-al-sas2013,bjorner-et-al-flc2015}, which are Horn clauses of first-order logic containing constraints expressed in some suitable background theory.
This makes the study of these systems particularly worthwhile since they provide a purely logical basis on which to develop techniques for first-order program verification.
For example, results on the development of highly efficient constrained Horn clause solvers  \cite{grebenschchikov-et-al-TACAS2012,Hoder-et-al-cav2011,gurfinkel-et-al-cav2015} can be exploited by a large number of program verification tools, each of which offloads some complex task (invariant finding is a typical example) to a solver by framing it in terms of constrained Horn clauses.

This paper concerns automated verification of higher-order, functional programs.
Whilst there are approaches to the verification of functional programs in which constrained Horn clause solving plays an important role, there is inevitably a mismatch between the higher-order nature of the program and the first-order logic in which the Horn clauses are expressed, and this must be addressed in some intelligent way by the program verifier.
For example, in recent work on refinement types \cite{rondon-et-al-pldi2008,vazou-et-al-icfp2015,unno-et-al-popl2013}, a type system is used to reduce the problem of finding an invariant for the higher-order program, to finding a number of first-order invariants of the ground-type data at certain program points.
This latter problem can often be expressed as a system of constrained Horn clauses.
When that system is solvable, the first-order invariants obtained can be composed in the type system to yield a higher-order invariant for the program (expressed as a type assignment).

In this paper we introduce \emph{higher-order constrained Horn clauses}, a natural extension of the notion of constrained Horn clause to higher-order logic, and examine certain aspects that we believe are especially relevant to applications in higher-order program verification, namely:
the existence of canonical solutions, the applicability of existing techniques to automated solving and the expressibility of program properties of higher type.

Let us elaborate on these goals and illustrate our motivation more concretely by discussing a particular example.
Consider the following higher-order program:
\[
  \begin{array}{l}
    \term{\mathsf{let} \add{} x y = x + y} \\
    \term{\mathsf{let} \mathsf{rec} \iter{} f s n} =
    \ifterm{n\leq0}{s}{f n (\iter{} f s (n - 1))} \\
    \mathsf{in}\;\term{\abs{n}{\mathsf{assert} (n \leq \term{\iter{} \add{} 0 n)}}}
  \end{array}
\]

\noindent
The term $\term{\iter{} \add{} 0 n}$ occurring in the last line of the program is the sum of the integers from $1$ to $n$ in case $n$ is non-negative and is $0$ otherwise.
Let us say that the program is \emph{safe} just in case the assertion is never violated, i.e. the summation is not smaller than $n$.

To verify safety, we must find an invariant that implies the required property.
For our purposes, an invariant will be an over-approximation of the input-output graphs of the functions defined in the program.
If we can find an over-approximation of the graph of the function $\term{\iter{} \add{} 0}$ which does not contain any pair $(n,m)$ with $n > m$, then we can be certain that the guard on the assertion is never violated.
Hence, we seek a set of pairs of natural numbers relating $n$ to $m$ at least whenever $\iter\ \add\ 0\ n$ evaluates to $m$ and which has no intersection with $>$.

The idea is to express the problem of finding such a program invariant logically, as a satisfiability problem for the following set of higher-order constrained Horn clauses:
\[
\begin{array}{l}
  \forall xyz.\, z = x + y \implies \term{\Add{} x y z} \\
  \forall fsnm.\, n \leq 0 \wedge m = s \implies \term{\Iter{} f s n m} \\
  \forall fsnm.\,n > 0 \wedge \big(\exists p.\,\term{\Iter{} f s (n-1) p} \wedge \term{f n p m}\big) \implies \term{\Iter{} f s n m} \\
  \forall nm.\, \term{\Iter{} \Add{} 0 n m} \implies n \leq m
\end{array}
\]
The clauses constrain the variables $\Add: \intsort \to \intsort \to \intsort \to \boolsort$ and $\Iter\!\!:(\intsort \to \intsort \to \intsort \to \boolsort) \to \intsort \to \intsort \to \intsort \to \boolsort$ with respect to the theory of integer linear arithmetic, so a model is just an assignment of particular relations\footnote{Throughout this paper we will speak of relations but work with their characteristic functions, which are propositional (Boolean-valued) functions, using $\boolsort$ for the sort of propositions.} to these variables that satisfies the formulas.
The first clause constrains $\Add$ to be an over-approximation of the graph of the addition function $\add{}$: whenever $z = x + y$, we at least know that $x,y$ and $z$ are related by $\Add$.
The second and third constrain $\Iter$ to be an over-approximation of the graph of the iteration combinator $\iter$.
Observe that the two branches of the conditional in the program appear as two clauses, expressing over-approximations in which the third input $n$ is at most or greater than $0$ respectively.
The final clause ensures that, taken together, these over-approximations are yet precise enough that they exclude the possibility of relating an input $n$ of $\iter\ \add\ 0$ to a smaller output $m$.
In other words, a model of these formulas \emph{is} an invariant, in the sense we have described above, and thus constitutes a witness to the safety of the program.

Notice that what we are describing in this example is a \emph{compositional approach}, in which an over-approximation $\Iter\ \Add\ 0:\intsort \to \intsort \to \boolsort$ to the graph of the function $\iter\ \add\ 0: \intsort \to \intsort$ is constructed from over-approximations $\Add$ and $\Iter$ of the graphs of the functions $\add$ and $\iter$.
Consequently, where $\iter$ was a higher-order function, $\Iter$ is a higher-order relation taking a ternary relation on integers as input, and the quantification $\forall f$ is over all such ternary relations $f$.
However, for the purposes of this paper, the details of how one obtains a system of higher-order constrained Horn clauses are not actually relevant, since we here study properties of such systems independently of how they arise.

\paragraph{Existence of canonical solutions.}
A set of higher-order constrained Horn clauses may have many models or none (consider that there are many invariants that can prove a safety property or none in case it is unprovable).
One model of the above set of clauses is the following assignment of relations (expressed in higher-order logic):
\[
    \Add{} \mapsto \abs{x \, y \, z}{z = x+y} \qquad
    \Iter{} \mapsto \abs{f\, s \, n \, m}{(\forall x y z.\,\term{f x y z} \implies 0 < x \implies y < z) \wedge{} 0 \leq s \implies n \leq m}
\]
Notice that this represents quite a large model (a coarse invariant).
For example, under this assignment the relation described by $\Iter\ \Add\ (-1)$ relates every pair of integers $n$ and $m$.
In the case of first-order constrained Horn clauses over the theory of integer linear arithmetic, if a set of clauses has a model, then it has a least model\footnote{In general, one can say that for each model of the background theory, a satisfiable set of clauses has a least satisfying valuation.}, and this least model property is at the heart of many of the applications of Horn clauses in practice.
If we consider the use of constrained Horn clauses in verification, a key component of the design of many successful algorithms for solving systems of clauses (and program invariant finding more generally) is the notion of approximation or abstraction.
However, to speak of approximation presupposes there is something to approximate.
For program verifiers there is, for example, the set of reachable states or the set of traces of the program, and for first-order constrained Horn clause solvers there is the least model.

In contrast, in Section \ref{sec:red-to-eval} we show that satisfiable systems of higher-order constrained Horn clauses do not necessarily have least models.
The problem, which has also been observed for pure (without constraint theory) higher-order Horn clauses by \citet{CharalambidisHRW13}, can be attributed to the use of unrestricted quantification over relations.
By restricting the semantics, so that interpretations range only over monotone relations (monotone propositional functions), we ensure that systems do have least solutions, but at the cost of %
\changed[lo]{abandoning the standard (semantics of) higher-order logic.}
The monotone semantics is natural but, for the purpose of specifying constraint systems, it can be unintuitive.
For example, consider the formula $\forall x.\,(\exists yz.\, x\ y \wedge y\ z) \implies P\ x$ which constrains $P : ((\intsort \to o) \to o) \to o$ so that it is at least true of all non-empty sets of non-empty sets of integers \footnote{Viewing relations of sort $\intsort \to o$ as sets of integers.}.
In the monotone semantics, this formula is guaranteed to have a least model, but inside that model $P$ is not true of the set $\{\{0\}\}$.

Ideally, we would like to be able to \emph{specify} constraint systems using the standard semantics of  higher-order logic, but \emph{solve} (build solvers for) systems in the monotone semantics.
In fact, we show that this is possible: we construct a pair of adjoint mappings with which the solutions to the former can be mapped to solutions of the latter and vice versa.
This allows us to reduce the problem of solving an system of constraints in the standard semantics to the problem of solving a system in the monotone semantics.
Monotonicity and the fact that satisfiable monotone systems have canonical solutions are key to the rest of the work in the paper.

\paragraph{Applicability of existing techniques to automated solving.}
Many of the techniques developed originally for the automation of first-order program verification transfer well to first-order constrained Horn clause solving.
Hence, to construct automated solvers for systems of  higher-order clauses, we look to existing work on higher-order program verification.
In automated verification for functional programs, one of the most successful techniques of recent years has been based on refinement type inference \cite{rondon-et-al-pldi2008,kobayashi-et-al-pldi2011,vazou-et-al-icfp2015,zhu-jagannathan-vmcai2013}.
The power of the approach comes from its ability to lift rich first-order theories over data to higher types using subtyping and the dependent product.

In Section \ref{sec:types}, we develop a refinement type system for higher-order constrained Horn clauses, in which types are assigned to the free relation variables that are being solved for.  The idea is that a valid type assignment is a syntactic representation of a model.  For example, the model discussed previously can be represented by the type assignment $\Gamma_I$:
\[
  \begin{array}{l}
  \Add\!\!: \dto{x}{\intty}{\dto{y}{\intty}{\dto{z}{\intty}{\boolty{z = x + y}}}} \\
  \Iter\!\!: (\dto{x}{\intty}{\dto{y}{\intty}{\dto{z}{\intty}{\boolty{0 < x \implies y < z}}}}) \to \dto{s}{\intty}{\dto{n}{\intty}{\dto{m}{\intty}{\boolty{0\leq s \implies n \leq m}}}}
  \end{array}
\]
The correspondence hinges on the definition of refinements $\boolty{\phi}$ of the propositional sort $\boolsort$, which are parametrised by a first-order constraint formula $\phi$ describing an upper bound on the truth of any inhabitant.
The dependent product and integer types are interpreted standardly, so that the first type above can be read as the set of all ternary relations on integers $x,y$ and $z$ that are false whenever $z$ is not $x + y$.

The system is designed so that its soundness allows one to conclude that a given first-order constraint formula can be used to approximate a given higher-order formula\footnote{Technically, the subjects of the type system are not all higher-order formulas but only the so-called \emph{goal} formulas.}.
Given a formula $G$, from the derivability of the judgement $\Gamma \types G : \boolty{\phi}$ it follows that $G \implies \phi$ in those interpretations of the relational variables that satisfy $\Gamma$.
For example, the judgement 
\[
\Gamma_I,\,n\!\!:\intsort,\,m\!\!:\intsort \types \term{\Iter{} \Add{} 0 n m} : \boolty{n \leq m}
\] 
is derivable, from which we may conclude that $n \leq m$ is a sound abstraction of $\Iter\ \Add\ 0\ n\ m$ in any interpretation of $\Iter$ and $\Add$ that  satisfies $\Gamma_I$.
This is a powerful assertion for automated reasoning because the formula $\phi$ in refinement type $\boolty{\phi}$ is a simple first-order constraint formula (typically belonging to a decidable theory) whereas the formula $G$ in the subject is a complicated higher-order formula, possibly containing relational variables whose meanings are a function of the whole system.
By adapting machinery developed for refinement type inference of functional programs, we obtain a sound (but incomplete) procedure for solving systems of higher-order constrained Horn clauses.
An implementation shows the method to be feasible.

\paragraph{Expressibility of program properties of higher type.}
We say that a property is of higher type if it is a property of a higher-order function.
It is possible to do whole-program verification in a higher-order setting using only properties of first-order type, because a complete program typically has a first-order type like $\intsort \to \intsort$.
However, it is also natural to want to specify properties of higher types, for example properties of higher-order functions of type $(\intsort \to \intsort) \to \intsort$.
Even when the ultimate goal is one of whole-program verification, being able to verify properties of higher types is advantageous because it can allow a large analysis to be broken down into smaller components according to the structure of the program.

However, the kinds of higher-type properties expressible by higher-order constrained Horn clauses is not immediately clear.
Therefore, we conclude Section \ref{sec:types} by showing that it is possible to state at least those properties that can be defined using refinement types, since their complements are expressible using goal terms.
\medskip

The rest of the paper is structured as follows.
In Section \ref{sec:prelims} we fix our presentation of higher-order logic and the notion of higher-order constrained Horn clause is made precise in Section \ref{sec:hornsat} along with the associated definition of solvability.
Section \ref{sec:red-to-eval} introduces monotone logic programs, which are better suited to automated reasoning, and shows that solvability of higher-order constrained Horn clause problems can be reduced to solvability of these programs.
This class of logic programs forms the basis for the refinement type system defined in Section \ref{sec:types}, which yields a sound but incomplete method for showing solvability through type inference.
An implementation of the method is also discussed in this section, which concludes by discussing the definability of higher-type properties defined by refinement types.
Finally, in Section \ref{sec:related-work}, we discuss related work and draw conclusions in Section \ref{sec:conclusion}.
\ifsupp
  Full proofs are included in the appendices.
\else
  Full proofs are included in the anonymous supplementary materials.
\fi

\section{Higher-order logic}\label{sec:prelims}
We will work in a presentation of higher-order logic as a typed lambda calculus.

\paragraph{Sorts.}
Given a sort $\iota$ of individuals (for example $\intsort$), and a sort $\boolsort$ of propositions,
the general \emph{sorts} are just the simple types that can be built using the arrow: $\sigma \Coloneqq \iota \mid \boolsort \mid \sigma_1 \to \sigma_2$.
The \emph{order} of a sort $\sigma$, written $\order(\sigma)$, is defined as follows:
\[
  \order(\iota) = 1 \quad \order(\boolsort) = 1 \qquad \order(\sigma_1 \to \sigma_2) = \mathsf{max}(\order(\sigma_1) + 1, \order(\sigma_2))
\]
Note: we follow the convention from logic of regarding base sorts to be of order $1$.
Consequently, we will consider, for example, an $\intsort$ variable to be of order $1$ and an $(\intsort \to \boolsort) \to \intsort$ function to be of order $3$.

\paragraph{Terms.}
The terms that we consider are just terms of an applied lambda calculus.
We will write variables generally using $x,y,z$, or $X,Y,Z$ when we want to emphasise that they are of higher-order sorts.
\[
  \begin{array}{rcl}
    M,\,N & \Coloneqq & x \mid c \mid \term{M N} \mid \abs{x\!\!:\!\!\sigma}{M}
  \end{array}
\]
in which $c$ is a constant.
We assume that application associates to the left and the scope of the abstraction extends as far to the right as possible.
We identify terms up to $\alpha$-equivalence.

\paragraph{Sorting}
A \emph{sort environment}, typically $\Delta$,
is a finite sequence of pairs $x:\sigma$, all of whose subjects are required to be distinct.
We assume, for each constant $c$, a given sort assignment $\sigma_c$.
Then sorting rules for terms are, as standard, associated with the judgement $\Delta \types s : \sigma$ defined by:

\noindent
\vspace{2pt}
\begin{center}
\begin{minipage}{.27\linewidth}
\begin{prooftree}
\AxiomC{}
\LeftLabel{\SCst}
\UnaryInfC{$\Delta \sorts c : \sigma_c$}
\end{prooftree}
\end{minipage}
\begin{minipage}{.34\linewidth}
\begin{prooftree}
\AxiomC{}
\LeftLabel{\SVar}
\UnaryInfC{$\Delta_1, x:\sigma, \Delta_2 \sorts x : \sigma$}
\end{prooftree}
\end{minipage}\\\vspace{5pt}
\begin{minipage}{.45\linewidth}
\begin{prooftree}
\AxiomC{$\Delta \sorts s : \sigma_1 \to \sigma_2$}
\AxiomC{$\Delta \sorts t : \sigma_1$}
\LeftLabel{\SApp}
\BinaryInfC{$\Delta \sorts \term{s t} : \sigma_2$}
\end{prooftree}
\end{minipage}
\begin{minipage}{.45\linewidth}
\begin{prooftree}
\AxiomC{$\Delta, x:\sigma_1 \sorts s : \sigma_2$}
\LeftLabel{\SAbs}
\RightLabel{\makebox[0pt]{\hspace{50pt}$x \notin \dom(\Delta)$}}
\UnaryInfC{$\Delta \sorts \abs{x}{s} : \sigma_1 \to \sigma_2$}
\end{prooftree}
\end{minipage}
\end{center}
\vspace{8pt}

Given a sorted term $\Delta \types M : \sigma$, we say that a variable occurrence $x$ in $M$ is of \emph{order k}
just if the unique subderivation $\Delta' \types x : \sigma'$ rooted at this occurrence has $\sigma'$ of order $k$.
We say that a sorted term $\Delta \types M : \sigma$ is of \emph{order $k$} just if $k$ is the largest order of any of the variables occurring in $M$.

\paragraph{Formulas.}
Given a first-order signature $\Sigma$ specifying a collection of base sorts and sorted constants, we can consider higher-type \emph{formulas} over $\Sigma$, by considering terms whose constant symbols are either drawn from the signature $\Sigma$ or are a member of the following set \changed[lo]{$\LSym$} of logical constant symbols:
\[
  \begin{array}{ccc}
  \begin{array}{rcl}
  \truetm,\falsetm & : & \boolsort \\
  {\wedge},{\vee},{\implies} & : & \boolsort \to \boolsort \to \boolsort \\
  \end{array}
  &\phantom{woo}&
  \begin{array}{rcl}
  {\neg} &  : & \boolsort \to \boolsort \\
  {\forall_\sigma},{\exists_\sigma} & : & (\sigma \to \boolsort) \to \boolsort
  \end{array}
  \end{array}
\]
As usual, we write $\exists_\sigma(\abs{x\!\!:\!\!\sigma}{M})$ more compactly as $\exists x\!\!:\!\!\sigma .\, M$
and define the set of %
\emph{formulas} to be just the well-sorted terms of sort $\boolsort$.
In the context of formulas, it is worthwhile to recognise the subset of \emph{relational sorts}, typically $\rho$, which have the sort $\boolsort$ in tail position
and whose higher-order subsorts are also relational.
Formally:
\[
    \rho \Coloneqq \boolsort \mid \iota \to \rho \mid \rho \to \rho
\]
Since formulas are just terms, the notion of order carries over without modification.

\paragraph{Interpretation.}
Let $A$ be a $\Sigma$-structure.  In particular, we assume that $A$ assigns a non-empty set $A_\ind$ to each of the base sorts $\ind \in B$ and to the sort $\boolsort$ is assigned the distinguished lattice $\mathbbm{2} = \{0 \leq 1\}$.
We define the \emph{full sort frame} over $A$ by induction on the sort:
\[
  \smng{\iota} \coloneqq A_\ind
  \qquad\qquad \smng{\boolsort} \coloneqq \mathbbm{2}
  \qquad\qquad \smng{\sigma_1 \to \sigma_2} \coloneqq \smng{\sigma_1} \To \smng{\sigma_2}
\]
where $X \To Y$ is the full set-theoretic function space between sets $X$ and $Y$.
The lattice $\mathbbm{2}$ supports the following functions:
\[
  \begin{array}{cc}
  \begin{array}{rcl}
    \orfn(b_1)(b_2) &=& \max \{b_1,b_2\} \\
    \andfn(b_1)(b_2) &=& \min \{b_1,b_2\} \\
    \existsfn_\sigma(f) &=& \max \{ f(v) \mid v \in \mng{\sigma} \}
  \end{array}
  &
  \begin{array}{rcl}
    \notfn(b) &=& 1 - b \\
    \impliesfn(b_1)(b_2) &=& \orfn(\notfn(b_1))(b_2) \\
    \forallfn_\sigma(f) &=& \notfn (\existsfn_\sigma(\notfn \circ f))
  \end{array}
  \end{array}
\]
We extend the order on $\mathbbm{2}$ to order the set $\smng{\rho}$ of all relations of a given sort $\rho$ pointwise, defining the order $\subseteq_\rho$ inductively on the structure of $\rho$:
\begin{itemize}
  \item For all $b_1,b_2 \in \smng{\boolsort}$: if $b_1 \leq b_2$ then $b_1 \subseteq_{\boolsort} b_2$
  \item For all $r_1,r_2 \in \smng{\iota \to \rho}$: if, for all $n \in \smng{\iota}$, $r_1(n) \subseteq_{\rho} r_2(n)$, then $r_1 \subseteq_{\iota \to \rho} r_2$.
  \item For all $r_1,r_2 \in \smng{\rho_1 \to \rho_2}$: if, for all $s \in \smng{\rho_1}$, $r_1(s) \subseteq_{\rho_2} r_2(s)$, then $r_1 \subseteq_{\rho_1 \to \rho_2} r_2$.
\end{itemize}
This ordering determines a complete lattice structure on each $\smng{\rho}$, we will denote the (pointwise) join and meet by $\bigcup_{\rho}$ and $\bigcap_{\rho}$ respectively.  To aid readability, we will typically omit subscripts.

We interpret a sort environment $\Delta$ by the indexed product:
$
  \smng{\Delta} \coloneqq \Pi x \in \dom(\Delta).\smng{\Delta(x)}
$,
that is, the set of all functions on $\dom(\Delta)$ that map $x$ to an element of $\smng{\Delta(x)}$;
these functions, typically $\alpha$, are called \emph{valuations}.
We similarly order $\smng{\Delta}$ pointwise, with $f_1 \subseteq_{\Delta} f_2$ just if, for all $x\!\!:\!\!\rho \in \Delta$, $f_1(x) \subseteq_{\rho} f_2(x)$; thus determining a complete lattice structure.

For the purpose of interpreting formulas, we extend the structure $A$ to interpret the symbols from $\LSym$ according to the functions given above.
The interpretation of a term $\Delta \sorts M : \sigma$ is a function $\smng{\Delta \sorts M : \sigma}$ (we leave $A$ implicit) that belongs to the set $\smng{\Delta} \To \smng{\sigma}$, and which is defined by the following equations.
\[
\begin{array}{rcl}
\smng{\Delta \sorts x : \sigma}(\alpha) &=& \alpha(x) \\
\smng{\Delta \sorts c : \sigma}(\alpha) &=& c^A \\
\smng{\Delta \sorts \term{M N} : \sigma_2}(\alpha) &=& \smng{\Delta \sorts M : \sigma_1 \to \sigma_2}(\alpha)\big(\smng{\Delta \sorts N : \sigma_1}(\alpha)\big) \\
\smng{\Delta \sorts \abs{x:\sigma_1}{M} : \sigma_1 \to \sigma_2}(\alpha) &=& \abs{v \in \smng{\sigma_1}}{\smng{\Delta, x:\sigma_1 \sorts M : \sigma_2}(\alpha[x \mapsto v])} \\
\end{array}
\]

\paragraph{Satisfaction}
For a $\Sigma$-structure $A$, a formula $\Delta \types M : \boolsort$ and a valuation $\alpha \in \smng{\Delta}$, we say that $\abra{A,\alpha}$ \emph{satisfies} $M$ and write $A,\alpha \models M$ just if $\smng{\Delta \types M : \boolsort}(\alpha) = 1$.
We define entailment $M \models N$ between two formulas $M$ and $N$ in terms of satisfaction as usual.

\section{Higher-order constrained Horn clauses}\label{sec:hornsat}

We introduce a notion of constrained Horn clauses in higher-order logic.

\paragraph{Constraint language.}
Assume a fixed, first-order language over a first-order signature $\Sigma$,
consisting of: distinguished subsets of first-order terms $\aterms$ and first-order formulas $(\phi \in)$ $\aformulas$, and a first-order theory $\atheory$ in which to interpret those formulas.
We refer to this first-order language as the \emph{constraint language}, and $\atheory$ as the \emph{background theory}.

\paragraph{Atoms and constraints.}
An \emph{atom} is an applicative formula of shape $\term{X M_1 \cdots M_k}$
in which $X$ is a relational variable and each $M_i$ is a term.
A \emph{constraint}, $\phi$, is just a formula from the constraint language.
For technical convenience, we assume that atoms do not contain any constants (including logical constants),
and constraints do not contain any relational variables.

\paragraph{Constrained Horn clauses.}
Fix a sorting $\Delta$ of relational variables. %
The \emph{constrained goal formulas} over $\Delta$, typically $G$, and the \emph{constrained definite formulas} over $\Delta$, typically $D$, are the subset of all formulas defined by induction:
\[
  \begin{array}{rcl}
    G & \Coloneqq & M \mid \phi \mid G \wedge G \mid G \vee G \mid \exists x\!\!:\!\!\sigma.\: G  \\
    D & \Coloneqq & \truetm \mid \forall x\!\!:\!\!\sigma.\: D \mid D \wedge D \mid G \implies \term{X \vv{x}} \\
  \end{array}
\]
in which $\sigma$ is either the sort of individuals $\iota$ or a relational sort $\rho$, $M$ is an atom\footnote{\changed[lo]{We do not require the head variable of $M$ to be in $\dom(\Delta)$.}}, $\phi$ a constraint and, in the last alternative, $X$ is required to be a relational symbol inside $\dom(\Delta)$ and $\vv{x} = x_1 \cdots x_n$ a sequence of pairwise distinct variables.
It will often be convenient to view a constrained definite formula equivalently as a conjunction of (constrained) \emph{definite clauses}, which are those definite formulas with shape: $\forall \overline x . \, G \implies X\,\overline x$.

\begin{remark}\label{rem:definitional-fragment}
Our class of constrained definite formulas resembles the \emph{definitional} fragment of \cite{Wadge91}, due to the restrictions on the shape of $\term{X \vv{x}}$ occurring in the head of definite clauses.
However, the formalism discussed in loc.~cit., which was intended as a programming language, also restricted the existential quantifiers that could occur inside goal formulas and did not consider any notion of underlying constraint language.
\end{remark}

\paragraph{Problem.}
A (higher-order) \emph{Constrained Horn Clause Problem} is given by a tuple $\abra{\Delta,D,G}$ in which:
\begin{itemize}
\item $\Delta$ is a sorting of relational variables. %
\item $\Delta \sorts D : \boolsort$ is a constrained definite formula over $\Delta$.
\item $\Delta \sorts G : \boolsort$ is a constrained goal formula over $\Delta$.
\end{itemize}
The problem is of \emph{order} $k$ if $k$ is the largest order of the bound variables that occur in $D$ or $G$.
We say that such a problem is \emph{solvable} just if, for all models $A$ of the background theory $\atheory$, %
there exists a valuation $\alpha$ of the variables in $\Delta$ such that $A,\alpha \models D$, and yet $A,\alpha \not\models G$.

\begin{remark}
The presentation of the problem follows some of the literature for the use of first-order Horn clauses in verification.
The system of higher-order constrained Horn clauses is partitioned into two, distinguishing the definite clauses as a single definite formula and presenting the \emph{negation} of non-definite clauses as a single goal formula, which is required to be \emph{refuted} by valuations.
This better reflects the distinction between the program and the property to be proven.
Furthermore, solvability is defined in a way that allows for incompleteness in the background theory to be used to express under-specification of programming language features (for example, because they are difficult to reason about precisely).
\end{remark}

\begin{example}\label{ex:iter-horns}
Let us place the motivating system of clauses from the introduction formally into the framework.
To that end, let us fix the quantifier free fragment of integer linear arithmetic ($\mathsf{ZLA}$) as the underlying constraint language.
The sorting $\Delta$ of relational variables (the unknowns to be solved for) are given by:
\[
  \begin{array}{rl}
  \Add{}\!\!:&\intsort \to \intsort \to \intsort \to \boolsort \\
  \Iter{}\!\!:&(\intsort \to \intsort \to \intsort \to \boolsort) \to \intsort \to \intsort \to \intsort \to \boolsort
  \end{array}
\]
The higher-order constrained definite formula $D$ consists of a conjunction of the following three constrained definite clauses:
\[
  \begin{array}{l}
    \forall x \, y \, z.\, z = x + y \implies \term{\Add{} x y z} \\
    \forall f \, s\, n \, m.\, n \leq 0 \wedge m = 0 \implies \term{\Iter{} f s n m} \\
    \forall f \, s\, n \, m.\, (\exists p.\, n > 0 \wedge \term{\Iter{} f s (n-1) p} \wedge \term{f n p m}) \implies \term{\Iter{} f s n m}
  \end{array}
\]
Finally, the clause $\forall nm.\, \term{\Iter{} \Add{} 0 n m} \implies n \leq m$ expressing the property of interest is negated to give goal $G=\exists n \, m.\, \term{\Iter{} \Add{} 0 n m} \wedge m < n$.
This problem is solvable.
Being a complete theory, $\mathsf{ZLA}$ has one model up to isomorphism and, with respect to this model, the valuation given in the introduction satisfies $D$ but refutes $G$.
\end{example}

\vspace{15pt}

We believe that the generalisation of constrained horn clauses to higher orders is very natural.
However, our principal motivation in its study is the possibility of obtaining interesting applications in  higher-order program verification (analogous to those in first-order program verification with first-order constrained Horn clauses).
We interpret higher-order program verification in its broadest sense, encompassing not just purely functional languages but, more generally, problems in which satisfaction of a property depends upon an analysis of higher-order control flow.
For example, an early application of first-order constrained Horn clauses in the very successful constraint logic programming (CLP) paradigm of \cite{jaffar-maher-jlp1994} was the analysis of circuit designs \cite{heintze-et-al-jar1992}, for which systems of clauses were felt to be particularly suitable since they give a succinct, declarative specification of the analyses.
The relative advantages of circuit design description using higher-order combinator libraries or specification languages based on higher-order programming, such as \cite{bjesse-et-al-icfp1998}, are well documented, and systems of higher-order of constrained Horn clauses would therefore be a natural setting in which to verify the properties of such designs.

\section{Monotone models}\label{sec:red-to-eval}

One of the attractive features of \emph{first-order} constrained Horn clauses is that, for any given choice of interpretation of the background theory, every definite formula (set of definite clauses) possesses a unique, least model.
Consequently, it follows that there is a solution to a first-order Horn clause problem $\abra{\Delta,D,G}$ iff for each model of the background theory, the least model of $D$ refutes $G$.
This reformulation of the problem is of great practical benefit because it allows for the design of algorithms that, at least conceptually, exploit the canonicity.
For example, at the heart of the design of many successful algorithms for first-order Horn clause solving (and program invariant finding more generally) is the notion of approximation or abstraction.
However, to speak of approximation presupposes there is something to approximate.
For program verifiers there is, for example, the set of reachable states or the set of traces of the program, and for first-order constrained Horn clause solvers there is the least model.

The fact that first-order definite formulas possess a least model is paid for by restrictions placed on the syntax.
By forbidding negative logical connectives in goal formulas, it can be guaranteed that the unknown relation symbols in any definite clause occur positively exactly once, and hence obtaining the consequences of a given formula is a monotone operation.
We have made the same syntactic restrictions in our definition of higher-order constrained Horn formulas, but we do not obtain the same outcome.

\begin{theorem}\label{thm:no-least-models}
Higher-order constrained definite formulas do not necessarily possess least models.
\end{theorem}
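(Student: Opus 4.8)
The theorem is an existence statement: I need to exhibit a single higher-order constrained definite formula $D$ over some $\Delta$ together with (at least) two models $M_1, M_2$ whose pointwise meet $M_1 \sqcap M_2$ fails to be a model. That suffices to show there is no least model, because if a least model existed it would be below both $M_1$ and $M_2$ and hence below their meet, yet the meet would also have to be a model --- so showing the meet is not a model rules out a least one. The cleanest counterexample should be of the lowest order at which the phenomenon appears, so I would look for a single relational variable of sort $(\iota \to \boolsort) \to \boolsort$, i.e. a predicate on sets of individuals, and arrange a clause whose body is non-monotone \emph{as a function of the bound relational variable} even though the unknown $X$ itself occurs positively.

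**Key steps.** First I would fix the constraint language to be something trivial (e.g. $\iota = \intsort$ with $\mathsf{ZLA}$, though the constraints barely matter) and take $\Delta = \{X : (\iota \to \boolsort) \to \boolsort\}$. The crux is that although our syntactic restrictions force the unknown $X$ to occur positively, the goal body may quantify over an argument relation $f : \iota \to \boolsort$ and apply it, and in the full (standard) semantics $\forall f$ ranges over \emph{all} subsets, so the body can behave non-monotonically in the sense that distinct satisfying valuations of $X$ need not be closed under intersection. Concretely I would write a clause of the shape $\forall f.\, G[f] \implies X\,f$ where $G[f]$ is a goal formula that is true of some sets $f$, and then produce two valuations $X \mapsto R_1$ and $X \mapsto R_2$ for $X$, each of which satisfies the clause (each is an over-approximation of the sets forced true by the body), but whose intersection $R_1 \cap R_2$ omits some $f$ that $G[f]$ forces to be present --- violating the clause. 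The second step is to verify $M_1, M_2 \models D$ by checking that for every $f$ with $G[f]$ true we have $R_i(f) = 1$, and the third step is to compute $R_1 \cap R_2$ and exhibit a witness $f_0$ with $G[f_0]$ true but $(R_1 \cap R_2)(f_0) = 0$. Finally I would invoke the meet-preservation argument above to conclude no least model exists.

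**Main obstacle.** The delicate point is engineering a body $G[f]$, subject to our grammar for goal formulas (only $M, \phi, \wedge, \vee, \exists$, no negation and no universal quantifier), that is genuinely non-monotone in a way that intersection-closure fails, together with two \emph{incomparable minimal} over-approximations whose meet drops below the forced lower bound. Positivity of $X$ means the set of models is closed upward, so the failure must come entirely from the argument $f$: I expect the trick is to use an existential $\exists z.\, f\,z \wedge \phi$ inside the body so that $G[f]$ is true whenever $f$ is non-empty on some region, and then pick $R_1, R_2$ that each cover all non-empty $f$ but are forced to \emph{disagree} on an individual set, so that their intersection excludes a non-empty $f_0$ and the clause breaks. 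Getting the two witness models to be actual models while their meet is not --- rather than one simply being larger than the other --- is the real content, and I would expect most of the work to be in checking the boundary case (the shape of $G[f_0]$ and the exact definitions of $R_1, R_2$) rather than in any general theory.
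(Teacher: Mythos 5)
Your proposal has a fatal object-level gap. In the clause you propose, $\forall f.\, G[f] \implies X\,f$, the unknown $X$ occurs only in the head, and the body $G[f]$ --- built from $f$, constraints, $\wedge$, $\vee$ and $\exists$ --- does not mention $X$ at all. Consequently the truth of $G[f]$ at each $f$ is fixed by the background structure, independently of the valuation of $X$: the models of the clause are exactly the valuations $R$ with $R \supseteq S$, where $S = \{\, r \in \smng{\iota \to o} \mid \smng{G}(r) = 1 \,\}$, and the valuation sending $X$ to the characteristic function of $S$ is itself a model --- the least one. So no counterexample of this shape can exist, and your stated success criterion (``the intersection $R_1 \cap R_2$ omits some $f$ that $G[f]$ forces to be present'') is unsatisfiable: any $f$ forced by the body lies in every model, hence in every intersection of models. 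Your own remark that positivity of $X$ makes the model set upward closed should have been pushed further: with $X$ absent from bodies (or occurring there only at the head of atoms) the one-step consequence operator is monotone, and Knaster--Tarski already yields a least model. The missing idea --- and the crux of the paper's proof --- is that an unknown must occur in the \emph{body} as an \emph{argument} of a universally quantified higher-order clause variable, so that antitone instantiations of that variable make the body antitone in the unknown. The paper therefore uses \emph{two} unknowns, $Q : \mathsf{one} \to o$ and $P : ((\mathsf{one} \to o) \to o) \to o$, and the single clause $\forall x.\, \term{x Q} \implies \term{P x}$ over a one-element domain of individuals: taking $x$ to be the antitone function $\mathsf{d}$ (with $\mathsf{d}(\mathbf{0}) = 1$, $\mathsf{d}(\mathbf{1}) = 0$) breaks monotonicity in $Q$, and one obtains two incomparable minimal models, namely $Q = \mathbf{0}$ with $P = \{x \mid x(\mathbf{0}) = 1\}$, and $Q = \mathbf{1}$ with $P = \{x \mid x(\mathbf{1}) = 1\}$. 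No single unknown of sort $(\iota \to o) \to o$ occurring only in heads can reproduce this.

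Your meta-level reduction is also unsound. From ``$M_1, M_2$ are models and $M_1 \sqcap M_2$ is not a model'' one cannot conclude that no least model exists: a least model could in principle lie strictly below the (non-model) meet, and nothing forces the meet of two models to be a model --- indeed, failure of meet-closure of the model set is essentially what the theorem asserts, so your step ``yet the meet would also have to be a model'' assumes the first-order property whose failure is under discussion. The clean argument, which the paper uses, is to exhibit two \emph{distinct minimal} models $\alpha_1, \alpha_2$: a least model would be below both and, by minimality, equal to both, a contradiction.
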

\begin{proof}
Consider the sorting $\Delta_{\mathit{one}}$ of relational variables $P\!\!:((\mathsf{one}\to\boolsort) \to \boolsort)\to\boolsort$ and $Q\!\::\mathsf{one}\to\boolsort$ and the definite formula $D_{\mathit{one}}$:
\[
  \Delta \sorts \forall x.\, \term{x Q} \implies \term{P x} : o
\]
over a finite constraint language consisting of the sort $\mathsf{one}$ of individuals and no functions, relations or constants of any kind.
The language is interpreted in the background theory axiomatised by the sentence $\forall xy. x = y$, so that all models consist of a single individual $\smng{\mathsf{one}} = \{\star\}$.
Let us use $\mathbf{0}$ to denote the mapping $\star \mapsto 0$ and $\mathbf{1}$ denote the mapping $\star \mapsto 1$, both of which together comprise the set $\smng{\mathsf{one} \to \boolsort}$;
and let us name the elements of $\smng{(\mathsf{one} \to \boolsort) \to \boolsort}$ as follows:
\[
  \begin{array}{ccccccc}
    \mathsf{a} \coloneqq \begin{array}{rcl}
                        \mathbf{0} &\mapsto& 0 \\
                        \mathbf{1} &\mapsto& 1 \\
                      \end{array}
    &&
    \mathsf{b} \coloneqq \begin{array}{rcl}
                        \mathbf{0} &\mapsto& 0 \\
                        \mathbf{1} &\mapsto& 0 \\
                      \end{array}
    &&
    \mathsf{c} \coloneqq \begin{array}{rcl}
                        \mathbf{0} &\mapsto& 1 \\
                        \mathbf{1} &\mapsto& 1 \\
                      \end{array}
    &&
    \mathsf{d} \coloneqq \begin{array}{rcl}
                        \mathbf{0} &\mapsto& 1 \\
                        \mathbf{1} &\mapsto& 0 \\
                      \end{array}
    \\
  \end{array}
\]
Then we can describe minimal models $\alpha_1$ and $\alpha_2$ by the following equations:
\[
  \begin{array}{ccc}
    \begin{array}{ccc}
      \multicolumn{3}{c}{\alpha_1(Q) = \mathbf{0}} \\
      \alpha_1(P)(\mathbf{a}) = 0 &&
      \alpha_1(P)(\mathbf{b}) = 0 \\
      \alpha_1(P)(\mathbf{c}) = 1 &&
      \alpha_1(P)(\mathbf{d}) = 1 \\
    \end{array}
    &\hspace{10pt}&
    \begin{array}{ccc}
      \multicolumn{3}{c}{\alpha_2(Q) = \mathbf{1}} \\
      \alpha_2(P)(\mathbf{a}) = 1 &&
      \alpha_2(P)(\mathbf{b}) = 0 \\
      \alpha_2(P)(\mathbf{c}) = 1 &&
      \alpha_2(P)(\mathbf{d}) = 0 \\
    \end{array}
  \end{array}
\]
It is easy to verify that there are no models smaller than these and yet they are unrelated, so there is no least model.
\end{proof}
\noindent
A similar observation has been made in the pure (without constraint theory) setting by \citet{CharalambidisHRW13}.

Some consideration of the proof of this theorem leads to the observation that, despite an embargo on negative logical connectives in goal formulas, it may still be the case that unknown relation symbols occur negatively in goal formulas (and hence may occur positively more than once in a definite clause).
For example, consider the definite clause from above, namely: $\forall x.\, x\,Q \implies P\,x$.
Whether or not $Q$ can be said to occur positively in the goal formula $x\,Q$ depends on the action of $x$.
If we consider the subterm $\smng{x\,Q}$ as a function of $x$ and $Q$, then it is monotone in $Q$ only when the function assigned to $x$ is itself monotone.
By contrast, if $x$ is assigned an antitone function, as is the case when $x$ takes on the value $\mathbf{d}$, then $\smng{x\,Q}$ will be antitone in $Q$; for example $\alpha_1(Q) \subseteq \alpha_2(Q)$ but $\smng{x\,Q}(\alpha_2[x \mapsto \mathbf{d}]) \subseteq \smng{x\,Q}(\alpha_1[x \mapsto \mathbf{d}])$.

\subsection{Logic programs}
As we show in the following section, by restricting to an interpretation in which every function is monotone (in the logical order) we can obtain a problem in which there is a notion of least solution\footnote{Recall that \emph{least} (respectively \emph{monotone}) here refers to smallest in (preservation of) the \emph{logical} order, i.e.~with respect to inclusion of relations.}.
However, in doing so it seems that we sacrifice some of the logical purity of our original problem: if the universe of our interpretation contains only monotone functions, it does not include the function $\mathsf{implies}$ and so it becomes unclear how to interpret definite formulas.  Consequently, it requires a new definition of what it means to be a model of a formula.
Rather, the version of the problem we obtain by restricting to a monotone interpretation is much more closely related to work on the extensional semantics of higher-order \emph{logic programs} e.g. \cite{Wadge91,CharalambidisHRW13}, which emphasises the role of Horn clauses as definitions of rules.
Hence, we present the monotone restriction in those terms.

\paragraph{Goal terms.}
The class of well-sorted \emph{goal terms} $\Delta \sorts G : \rho$ is given by the sorting judgements defined by the rules below,
in which $c$ is one of $\wedge$, $\vee$ or $\exists_\sigma$ and here, and throughout the rules, $\sigma$ is required to stand for either the sort of individuals $\iota$ or  otherwise some relational sort.
It is easily verified that the constrained goal formulas are a propositional sorted subset of the goal terms.
From now on we shall use $G$, $H$ and $K$ to stand for arbitrary goal \emph{terms} and disambiguate as necessary.

\begin{table}[h]
  \def\arraystretch{3}
  \begin{tabular}{ccc}
      \AxiomC{}
      \LeftLabel{\GCst}
      \RightLabel{$c \in \makeset{\wedge, \vee, \exists_\iota} \cup \makeset{\exists_\rho \mid \rho}$}
      \UnaryInfC{$\Delta \sorts c : \rho_c$}
      \DisplayProof
    &&
      \AxiomC{}
      \LeftLabel{\GVar}
      \UnaryInfC{$\Delta_1, x:\rho, \Delta_2 \sorts x : \rho$}
      \DisplayProof
    \\
      \AxiomC{\vphantom{$\Delta$}}
      \RightLabel{$\Delta \sorts \phi : o \in \aformulas$}
      \LeftLabel{\GConstraint}
      \UnaryInfC{$\Delta \sorts \phi : o$}
      \DisplayProof
    &&
      \AxiomC{$\Delta, x:\sigma \sorts G : \rho$}
      \LeftLabel{\GAbs}
      \RightLabel{$x \notin \dom(\Delta)$}
      \UnaryInfC{$\Delta \sorts \abs{x}{G} : \sigma \to \rho$}
      \DisplayProof
    \\
      \AxiomC{$\Delta \sorts G : \iota \to \rho$}
      \RightLabel{$\Delta \sorts N : \iota \in \aterms$}
      \LeftLabel{\GAppInd}
      \UnaryInfC{$\Delta \sorts \term{G N} : \rho$}
      \DisplayProof
    &&
      \AxiomC{$\Delta \sorts G : \rho_1 \to \rho_2$}
      \AxiomC{$\Delta \sorts H : \rho_1$}
      \LeftLabel{\GAppRel}
      \BinaryInfC{$\Delta \sorts \term{G H} : \rho_2$}
     \DisplayProof
  \end{tabular}
\end{table}

\paragraph{Logic programs.}
A higher-order, constrained \emph{logic program}, $P$, over a sort environment $\Delta = x_1\!\!:\rho_1,\ldots,x_m:\rho_m$ is just a finite system of (mutual) recursive definitions of shape:
\[
x_1\!\!:\rho_1 = G_1,\quad \ldots,\quad x_m\!\!:\rho_m = G_m
\]
Such a program is well sorted when, for each $1 \leq i \leq m$, $\Delta \types G_i : \rho_i$.  Since each $x_i$ is distinct, we will sometimes regard a program $P$ as a finite map from variables to terms, defined so that $P(x_i) = G_i$.  We will write $\types P : \Delta$ to abbreviate that $P$ is a well-sorted program over $\Delta$.

\paragraph{Standard interpretation.}
Logic programs can be interpreted in the standard semantics by interpreting the right-hand sides of the equations using the term semantics given in Section \ref{sec:prelims}.
The program $P$ itself then gives rise to the functional $\sfunc_{P:\Delta} : \smng{\Delta} \To \smng{\Delta}$, sometimes called the \emph{one-step consequence operator} in the literature on the semantics of logic programming,
which is defined by: $\sfunc_{P:\Delta}(\alpha)(x) = \smng{\Delta \sorts P(x) : \Delta(x)}(\alpha)$.

\paragraph{The logic program of a definite formula.}
Every definite formula $D$ gives rise to a logic program, which is obtained by collapsing clauses that share the same head $\term{X \vv{x}}$ by taking the disjunction of their bodies, and viewing the resulting expression as a recursive definition of $X$.
The formulation as logic program is more convenient in two ways.
First, it is a more natural object to which to assign a monotone interpretation since we have eliminated implication, which does not act monotonically in its first argument, in favour of definitional equality.
Second, looking ahead to Section \ref{sec:types}, the syntactic structure of logic programs allows for a more transparent definition of a type system.

To that end, fix a definite formula $\Delta \sorts D : o$.
We assume, without loss of generality\footnote{Observe that such a shape can always be obtained by applying standard logical equivalences.}, that $D$ has the shape:
\[
  \forall \vv{x_{r_1}}.\,G_1 \implies \term{X_{r_1} \vv{x_{r_1}}} \quad\wedge\quad\cdots\quad\wedge\quad \forall \vv{x_{r_\ell}}.\, G_\ell \implies \term{X_{r_\ell} \vv{x_{r_\ell}}}
\]
over a sort environment $\Delta = \{X_1\!\!:\rho_1,\ldots,X_k\!\!:\rho_k\}$, i.e. $\{1,\ldots, k\} = \{r_1,\ldots,r_\ell\}$.
We construct a program over $\Delta$, called the \emph{logic program of $D$} and denoted $P_D$, as follows:
\[
  X_1 = \abs{\vv{x_1}}{G_1'},\quad \ldots{},\quad X_k = \abs{\vv{x_k}}{G_k'}
\]
where $G_j' = \bigvee \{ G_i \mid r_i = j \}$.  Note that $\{ G_i \mid r_i = j \}$ are exactly the bodies of all the definite clauses in $D$ whose heads are $X_j$.
The fact that $\sorts P_D : \Delta$ follows immediately from the well-sortedness of $D$.

\begin{example}\label{ex:iter-horn-program}
The definite formula component of the Horn clause problem from Example \ref{ex:iter-horns} is transformed into the following logic program $P$:
\[
    \Add{} = \abs{x \, y \, z}{z = x + y}  \qquad
    \Iter{} = \abs{f \, s\, n \, m}{({n \leq 0} \wedge {m = s})} \vee (\exists p.\, 0 < n \wedge \term{\Iter{} f s (n-1) p} \wedge{} \term{f n p m})
\]
\end{example}

\paragraph{Characterisation.}
If we were to consider only the standard interpretation then the foregoing development of logic programs would have limited usefulness.
As is well known at first-order, the definite formula and the program derived from it essentially define the same class of objects.
\begin{lemma}\label{lem:models-pfps}
For definite formula $D$, the prefixed points of $\sfunc_{P_D}$ are exactly the models of $D$.
\end{lemma}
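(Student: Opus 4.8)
The plan is to prove the lemma as a chain of exact equivalences, unfolding both the definition of satisfaction $A,\alpha \models D$ and the definition of the prefixed-point condition $\sfunc_{P_D}(\alpha) \subseteq \alpha$ until they meet in the middle. Recall that $D$ has the form $\bigwedge_i \forall \vv{x_{r_i}}.\, G_i \implies X_{r_i}\,\vv{x_{r_i}}$ and that $P_D$ defines each relational variable $X_j$ by $X_j = \abs{\vv{x_j}}{G_j'}$, where $G_j' = \bigvee\{G_i \mid r_i = j\}$.

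First I would unfold satisfaction. Because satisfaction distributes over the outermost conjunction and the universal quantifiers range, via the semantics, over all valuations of the head variables, $A,\alpha \models D$ holds precisely when, for every clause $i$ and every tuple of values $v$ for $\vv{x_{r_i}}$,
\[
  \smng{G_i}(\alpha[\vv{x_{r_i}} \mapsto v]) \;\leq\; \alpha(X_{r_i})(v).
\]
Here I use the interpretation of $\implies$ as $\impliesfn$ (so that $\impliesfn(b_1)(b_2) = 1$ iff $b_1 \leq b_2$ in $\mathbbm{2}$), together with the facts that the head variables $\vv{x_{r_i}}$ are pairwise distinct and that $\smng{X_{r_i}\,\vv{x_{r_i}}}(\alpha[\vv{x_{r_i}} \mapsto v]) = \alpha(X_{r_i})(v)$.

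Symmetrically, I would unfold the prefixed-point condition. By the pointwise definitions of $\subseteq_\Delta$ and $\subseteq_{\rho_j}$, together with $\sfunc_{P_D}(\alpha)(X_j) = \smng{\abs{\vv{x_j}}{G_j'}}(\alpha)$ and the abstraction clause of the term semantics, $\sfunc_{P_D}(\alpha) \subseteq \alpha$ holds precisely when, for every $X_j$ and every tuple $v$,
\[
  \smng{G_j'}(\alpha[\vv{x_j} \mapsto v]) \;\leq\; \alpha(X_j)(v).
\]
The only step that does real work is reconciling these two families of inequalities. Since $\vee$ is interpreted by the binary join $\orfn = \max$ of $\mathbbm{2}$, we have $\smng{G_j'}(\beta) = \max\{\smng{G_i}(\beta) \mid r_i = j\}$; and by the universal property of joins, $\max\{a_i\}_i \leq b$ iff $a_i \leq b$ for every $i$. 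Hence the single inequality for $X_j$ above is equivalent to the family of inequalities indexed by all clauses $i$ with $r_i = j$. Regrouping the satisfaction condition by head variable, and using that clauses sharing a head may be taken to share the bound variables $\vv{x_j}$ (the $\alpha$-renaming implicit in forming $P_D$), shows that the two conditions coincide. I expect no genuine obstacle here beyond this bookkeeping; the only point needing care is the alignment of bound-variable names across clauses with a common head, which the renaming convention settles.
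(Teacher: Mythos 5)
Your proof is correct. The paper itself offers no proof of this lemma---it is stated as a well-known fact (``the definite formula and the program derived from it essentially define the same class of objects'')---and your argument, unfolding both the satisfaction condition and the prefixed-point condition into the same family of pointwise inequalities and reconciling them via the universal property of the join in $\mathbbm{2}$ together with the $\alpha$-alignment of bound variables across clauses sharing a head, is exactly the standard argument the paper implicitly appeals to.
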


\noindent
In contrast to the first-order case, it follows that $\sfunc_{P_D}$ does not have a least (pre-)fixed point{\footnote{\changed[sr]{In this paper we use the term \emph{prefixed point} to refer to those $x$ for which $f(x) \leq x$}}.
Indeed, for reasons already outlined, this functional is not generally monotone.
However, it will play an important role in Section \ref{sec:embedding}.

\subsection{Monotone semantics}

\begin{figure*}
\[
  \begin{array}{rcl}
    \mmng{\Delta \sorts x : \rho}(\alpha) &=& \alpha(x) \\
    \mmng{\Delta \sorts \phi : \boolsort}(\alpha) &=& \smng{\Delta \sorts \phi : \boolsort}(\alpha) \\
    \mmng{\Delta \sorts \term{G H} : \rho_2}(\alpha) &=& \mmng{\Delta \sorts G : \rho_1 \to \rho_2}(\alpha)(\mmng{\Delta \sorts H : \sigma_1}(\alpha)) \\
    \mmng{\Delta \sorts \term{G N} : \rho}(\alpha) &=& \mmng{\Delta \sorts G : \iota \to \rho}(\alpha)(\smng{\Delta \sorts N : \iota}(\alpha)) \\
    \mmng{\Delta \sorts \abs{x:\sigma}{G} : \sigma}(\alpha) &=& %
    \abs{x' \in \mmng{\sigma}}{\mmng{\Delta, x:\sigma \sorts G : \sigma}(\alpha[x \mapsto x'])}\\
    \mmng{\Delta \sorts \wedge : \boolsort \to \boolsort \to \boolsort}(\alpha) &=& \andfn \\
    \mmng{\Delta \sorts \vee : \boolsort \to \boolsort \to \boolsort}(\alpha) &=& \orfn \\
    \mmng{\Delta \sorts \exists_\sigma : (\sigma \to \boolsort) \to \boolsort}(\alpha) &=& \mexistsfn_\sigma
  \end{array}
\]
\caption{Monotone semantics of goal terms.}\label{fig:mono-term-semantics}
\end{figure*}

The advantage of logic programs is that they have a natural, monotone interpretation.

\paragraph{Monotone sort frame.}
We start from the interpretation of the background theory $A$, regarding $A_\iota$ as a discrete poset.  We then define the \emph{monotone sort frame} over $A$ by induction:
\[
   \mmng{\iota} \coloneqq A_\ind
   \qquad\quad \mmng{\boolsort} \coloneqq \mathbbm{2}
   \qquad\quad \mmng{\sigma_1 \to \sigma_2} \coloneqq \mmng{\sigma_1} \mTo \mmng{\sigma_2}
\]
where $X \mTo Y$ is the monotone function space between posets $X$ and $Y$,
i.e. the set of all functions $f \in X \To Y$ that have the property that $x_1 \leq x_2$ implies $f(x_1) \leq f(x_2)$.
It is easy to verify that this function space is itself a poset with respect to the pointwise ordering.
Of course, in case $X$ is discrete poset $A_\iota$, this coincides with the full function space.
We extend the lattice structure of $\mathbbm{2}$ to all relations $\mmng{\rho}$, analogously to the case of the full function space (and we reuse the same notation since there will be no confusion);
and we similarly define $\mmng{\Delta} \coloneqq \Pi x \in \dom(\Delta).\,\mmng{\Delta(x)}$.

It is worth considering the implications of monotonicity in the special case of relations, i.e. propositional functions.
A relation $r$ is an element of $X_1 \mTo \cdots{} \mTo X_k \mTo \mathbbm{2}$ just if it is \emph{upward closed}: whenever $r$ is true of $x_1,\ldots,x_k$ ($x_i \in X_i$), and $x_1',\ldots,x_k'$ ($x_i' \in X_i$) has the property that $x_i \subseteq x_i'$, then $r$ must also be true of $x_1',\ldots,x_k'$.  In particular, when $r \in X \mTo \mathbbm{2}$, then $r$ can be thought of as an upward closed set of elements of $X$.

\paragraph{Monotone interpretation.}
The interpretation of goal terms is defined in Figure \ref{fig:mono-term-semantics}.
As for the standard interpretation, we assume a fixed interpretation $A$ of the background theory, which is left implicit in the notation.
Whilst the standard interpretation of the positive logical constants for conjunction and disjunction will suffice, the interpretation of existential quantification needs to be relativised to the monotone setting:
$
  \mexistsfn_\sigma(r) = \mathit{max} \{r(d) \mid d \in \mmng{\sigma}\}
$.
Since the implication function $\impliesfn$ is not monotone (in its first argument), definite formulas are not interpretable in a monotone frame.
However, it is possible to interpret logic programs.
To that end, we define the functional $\mfunc_{P:\Delta}$ on semantic environments by: $\mfunc_{P:\Delta}(\alpha)(x) = \mmng{\Delta \types P(x) : \Delta(x)}(\alpha)$.
In analogy with the Horn clause problem, we call a prefixed point of $\mfunc_{P:\Delta}$ a \emph{model} of the program $P$.
This construction preserves the logical order.

\begin{lemma}
$\mmng{\Delta \sorts G : \rho} \in \mmng{\Delta} \mTo \mmng{\rho}$ \;and\; $\mfunc_{P:\Delta} \in \mmng{\Delta} \mTo \mmng{\Delta}$.
\end{lemma}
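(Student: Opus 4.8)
The plan is to prove the two claims together by first establishing Claim~1, that for every well-sorted goal term $\Delta \sorts G : \rho$ the map $\mmng{\Delta \sorts G : \rho}$ is a well-defined, monotone function $\mmng{\Delta} \mTo \mmng{\rho}$, and then deriving the statement for $\mfunc_{P:\Delta}$ as an immediate consequence. It is important to prove well-definedness and monotonicity \emph{simultaneously}, since the two are entangled: in the abstraction clause of Figure~\ref{fig:mono-term-semantics} the semantics builds a set-theoretic function, and checking that it actually lands in the monotone function space $\mmng{\sigma} \mTo \mmng{\rho}$ is precisely a monotonicity statement about its body. I would therefore carry out an induction on the derivation of $\Delta \sorts G : \rho$, with induction hypothesis asserting, for each subterm: (a)~for every $\alpha \in \mmng{\Delta}$ the value $\mmng{G}(\alpha)$ lies in $\mmng{\rho}$, and (b)~$\alpha_1 \subseteq \alpha_2$ implies $\mmng{G}(\alpha_1) \subseteq \mmng{G}(\alpha_2)$.

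The base cases are mostly routine. For \GVar{} the interpretation is the projection $\alpha \mapsto \alpha(x)$, monotone because $\mmng{\Delta}$ carries the pointwise order. For \GCst{} the interpretation is a constant function of $\alpha$, so monotonicity in the valuation is trivial; what must be checked is that each logical constant is itself a monotone element of its relational sort, namely that $\andfn$ and $\orfn$ are monotone as meet and join on $\mathbbm{2}$ and that $\mexistsfn_\sigma(r) = \mathit{max}\{r(d) \mid d \in \mmng{\sigma}\}$ is monotone in $r$, since a pointwise increase of $r$ can only increase the maximum. The absence of $\impliesfn$ and $\notfn$ from the goal-term constants is exactly what makes this case succeed. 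The cases \GConstraint{} and \GAppInd{} are where I would use discreteness of the individual frame: a constraint $\phi$ and a first-order term $N$ contain no relational variables, so their interpretations $\smng{\phi}$ and $\smng{N}$ depend only on the individual variables of $\alpha$, and since $A_\iota$ is discrete, $\alpha_1 \subseteq \alpha_2$ forces $\alpha_1$ and $\alpha_2$ to agree on every individual variable. Hence $\smng{\phi}$ and $\smng{N}$ take equal values on comparable valuations, which gives monotonicity for \GConstraint{} outright and reduces \GAppInd{} to applying the fixed value $\smng{N}(\alpha)$ in the discrete argument position.

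The crux, and the step I expect to be the main obstacle, is the relational application rule \GAppRel{}, where $\mmng{G\,H}(\alpha) = \mmng{G}(\alpha)\big(\mmng{H}(\alpha)\big)$. Well-definedness is immediate from the hypothesis, since $\mmng{G}(\alpha) \in \mmng{\rho_1} \mTo \mmng{\rho_2}$ can be applied to $\mmng{H}(\alpha) \in \mmng{\rho_1}$. For monotonicity, given $\alpha_1 \subseteq \alpha_2$ I would factor the comparison through an intermediate point: first $\mmng{G}(\alpha_1)(\mmng{H}(\alpha_1)) \subseteq \mmng{G}(\alpha_1)(\mmng{H}(\alpha_2))$, using that $\mmng{G}(\alpha_1)$ is itself a monotone function (part~(a)) applied to $\mmng{H}(\alpha_1) \subseteq \mmng{H}(\alpha_2)$ (part~(b) for $H$); and then $\mmng{G}(\alpha_1)(\mmng{H}(\alpha_2)) \subseteq \mmng{G}(\alpha_2)(\mmng{H}(\alpha_2))$, using the pointwise order $\mmng{G}(\alpha_1) \subseteq \mmng{G}(\alpha_2)$ (part~(b) for $G$) evaluated at the single point $\mmng{H}(\alpha_2)$. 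This is exactly the place where restricting to the monotone frame pays off: in the full frame of Theorem~\ref{thm:no-least-models} the function assigned to the operator could be antitone in its argument, breaking the first inequality, whereas here every inhabitant of a relational sort is monotone by construction.

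Finally, the abstraction rule \GAbs{} uses the hypothesis in both directions at the extended environment. To see that $\alpha \mapsto \big(x' \mapsto \mmng{G}(\alpha[x \mapsto x'])\big)$ lands in $\mmng{\sigma} \mTo \mmng{\rho}$, I would note that $x'_1 \subseteq x'_2$ gives $\alpha[x \mapsto x'_1] \subseteq \alpha[x \mapsto x'_2]$, so part~(b) for the body yields monotonicity in $x'$; monotonicity in $\alpha$ follows the same way, comparing $\alpha_1[x \mapsto x']$ and $\alpha_2[x \mapsto x']$ for each fixed $x'$. With Claim~1 in hand the statement for $\mfunc_{P:\Delta}$ is immediate: each right-hand side $P(x)$ is a goal term with $\Delta \types P(x) : \Delta(x)$, so by Claim~1 the component map $\alpha \mapsto \mmng{\Delta \types P(x) : \Delta(x)}(\alpha)$ is monotone into $\mmng{\Delta(x)}$, and a tuple of monotone maps into the pointwise-ordered product $\mmng{\Delta}$ is monotone, giving $\mfunc_{P:\Delta} \in \mmng{\Delta} \mTo \mmng{\Delta}$.
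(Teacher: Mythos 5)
Your proof is correct and takes essentially the same approach as the paper: the paper's own proof is a single line observing that $\andfn$, $\orfn$ and $\mexistsfn$ are monotone and that all the semantic constructions are monotone combinations, and your induction on the sorting derivation (with well-definedness and monotonicity proved simultaneously, the two-step factoring in the \GAppRel{} case, and discreteness of $A_\iota$ handling \GConstraint{} and \GAppInd{}) is precisely the detailed argument that this one-liner compresses.
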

\begin{proof}
Immediately follows from the fact that $\mexistsfn$, $\andfn$ and $\orfn$ are monotone and all the constructions are monotone combinations.
\end{proof}

\noindent
It follows from the Knaster-Tarski theorem that, unlike the functional arising from the standard interpretation, the monotone functional $\mfunc_{P:\Delta}$ has a least fixed point, which we shall write $\mu \mfunc_{P:\Delta}$.
Consequently, logic programs $\sorts P : \Delta$ have a canonical monotone interpretation, $\mng{\sorts P : \Delta}$, which we define as $\mu \mfunc_{P:\Delta}$.

\paragraph{Monotone problem.}
By analogy with the first-order case, we are led to the following monotone version of the higher-order constrained Horn clause problem.  A \emph{Monotone Logic Program Safety Problem} (more often just \emph{monotone problem}) is a triple $(\Delta,P,G)$ consisting of a sorting of relational variables $\Delta$, a logic program $\sorts P : \Delta$ and a goal $\Delta \sorts G : o$.  The problem is solvable just if, for all models of the background theory, there is a prefixed point $\alpha$ of $\mfunc_{P:\Delta}$ such that $\mmng{G}(\alpha) = 0$.

\subsection{Canonical embedding}\label{sec:embedding}
In the monotone problem we have obtained a notion of safety problem that admits a least solution.
Due to the monotonicity of $\mmng{G}$, there is a prefix point witnessing solvability iff the least prefix point is such a witness, i.e. iff $\mmng{G}(\mmng{P}) = 0$.
This clears the way for our algorithmic work in Section \ref{sec:types}, which consists of  apparatus in which to construct sound approximations of $\mmng{P}$.
However, the price we have had to pay seems severe, since we have all but abandoned our original problem definition.

The monotone logic program safety problem lacks the logical purity of the higher-order constrained Horn clause problem, which is stated crisply in terms of the standard interpretation of higher-order logic and the usual notion of models of formulas.
In the context of program verification, the monotone problem appears quite natural, but if we look further afield, to e.g. traditional applications of constrained Horn clauses in constraint satisfaction, it seems a little awkward.
For example, the significance of allowing only monotone solutions seems unclear if one is looking to state a scheduling problem for a haulage company or a packing problem for a factory.

Ideally, we would like to \emph{specify} constraint systems using the standard Horn clause problem, with its clean logical semantics, but \emph{solve} instances of the monotone problem, which is easier to analyse, due to monotonicity and the existence of canonical models.
In fact, we shall show that this is possible: every solution to the Horn clause problem $\abra{\Delta,D,G}$ determines a solution to the monotone problem $\abra{\Delta,P_D,G}$ and vice versa (Theorem~\ref{thm:reduction-to-mono}).

\paragraph{Transferring solutions}
Let us begin by considering what a mapping between solutions of $\abra{\Delta,P_D,G}$ and solutions of $\abra{\Delta,D,G}$ would look like.
In both cases, a solution is a model: the former is a mapping from variables to monotone relations and the latter is a mapping from (the same) variables to arbitrary relations.

At first glance, it might appear that one can transfer models of $P_D$ straightforwardly to models of $D$, because monotone relations are, in particular, relations.  However, the situation is a little more difficult.
Although the solution space of the Horn clause problem is larger, more is required of a valuation in order to qualify as a model because the constraints of the Horn clause problem,
which involve universal quantification over all relations,
are more difficult to satisfy than the equations of the monotone problem, which involve (implicitly) quantification over only the monotone relations.

To see this concretely, it is useful to consider the simpler case in which all relations are required to  be unary, i.e. $\rho$ is of shape $(\cdots((\iota \to \boolsort) \to \boolsort) \to \cdots{}) \to \boolsort$.
In the unary case, we can think of a relation simply as describing a set of objects, where those objects may themselves be sets of objects.
For example, $\smng{(\iota \to \boolsort) \to \boolsort}$ describes the collection of all sets of sets of individuals.
On the other hand, the constraint on monotonicity of relations has the consequence that, if we think of $\mmng{(\iota \to \boolsort) \to \boolsort}$ as describing a collection of sets, it is the collection only consisting of those sets of sets of individuals that are upward closed.
That is, a set $s$ is in $\mmng{(\iota \to \boolsort) \to \boolsort}$ just if, whenever a set of individuals $t$ is in $s$ and $t \subseteq u$ then $u$ is also in $s$.
In general, we can think of $\smng{\sigma \to \boolsort}$ as the collection of all sets of objects from $\smng{\sigma}$,
and $\mmng{\sigma \to \boolsort}$ as the collection of hereditarily upward-closed sets.
Now consider the logic program $P = \abs{x}{\truetm}$, in which $P$ is of sort $((\intsort \to \boolsort) \to \boolsort) \to \boolsort$.  One model of this program is to take for $P$ the set of \emph{all} upward-closed sets of sets of integers, which is a relation in $\mmng{((\intsort \to \boolsort) \to \boolsort) \to \boolsort}$.  However, the set of all upward-closed sets of sets of integers is not a model of the corresponding formula $\forall x.\, \truetm \implies P\ x$ in the standard semantics, because it does not contain, for example, the set $\{\{0\}\}$ which is not upward closed
(i.e.~its characteristic function is not a \emph{monotone} Boolean-valued function),
yet the universal quantification requires it.

So, although there is a canonical inclusion of $\mmng{\rho}$ into $\smng{\rho}$, it does not extend to map models of $P_D$ to models of $D$ in general.
If we return to thinking of the elements of $((\intsort \to \boolsort) \to \boolsort) \to \boolsort$ formally as Boolean-valued functions, the inclusion described above is mapping the monotone function $r \in \mmng{(\intsort \to \boolsort) \to \boolsort)} \mTo \mathbbm{2}$, which satisfies $r(t) = 1$ for all $t \in \mmng{(\intsort \to \boolsort) \to \boolsort}$, to the function $J(r) \in \smng{(\intsort \to \boolsort) \to \boolsort} \To \mathbbm{2}$, which satisfies, for all $t \in \smng{(\intsort \to \boolsort) \to \boolsort)}$:
\[
  J(r)(t) = \begin{cases} r(t) & \text{if $t \in \mmng{(\intsort \to \boolsort) \to \boolsort)}$} \\ 0 & \text{otherwise} \end{cases}
\]
In other words, it lifts a function whose domain consists only of hereditarily monotone relations to a function whose domain consists of all relations simply by mapping non-hereditarily monotone inputs to $0$.  We could equally well consider the dual, in which all such inputs were mapped to $1$, but the image of the mapping would typically not refute the goal $G$ because the models so constructed are too large.

This counterexample suggests that we require a mapping of monotone relations $r \in \mmng{((\intsort \to \boolsort) \to \boolsort) \to \boolsort}$ to standard relations $J(r) \in \smng{((\intsort \to \boolsort) \to \boolsort) \to \boolsort}$ that is a little more sophisticated in the action of $J(r)$ on inputs that are not hereditarily monotone.
Instead of mapping all such inputs to $0$ or all such inputs to $1$ we shall determine the value of $J(r)$ on some non-monotone input $t \in \smng{(\intsort \to \boolsort) \to \boolsort}$ by considering the value of $r$ on a monotone input $U(t) \in \mmng{(\intsort \to \boolsort) \to \boolsort}$ which is somehow \emph{close} to $t$.
In fact there are two canonical choices of hereditarily monotone relations close to a given relation $t$, which are obtained as, respectively, the largest monotone relation included in $t$ and the smallest monotone relation in which $t$ is included.
We will describe the situation in general using Galois connection.

\paragraph{Galois connection.}
A pair of functions $f:P \to Q$ and $g:Q \to P$ between partial orders $P$ and $Q$ is a \emph{Galois connection} just if, for all $x \in P$ and $y \in Q$: $f(x) \leq y$ iff $x \leq g(y)$.  In such a situation we write $f \dashv g$ and $f$ is said to be the \emph{left adjoint} of $g$, and $g$ the \emph{right adjoint} of $f$.
First, it is easy to verify that if $f \dashv g$ then $f$ and $g$ are monotone.

\begin{proposition}
Given a pair of monotone maps $f : P \to Q$ and $g : Q \to P$, the following are equivalent:
\begin{enumerate}[(1)]
\item The pair $(f, g)$ is a Galois connection.
\item $f \circ g \leq 1_Q$ and $g \circ f \leq 1_P$.
\item For all $x \in P$, $\inf \, \{y \in Q \mid x \leq g(y) \}$ is defined and equal to $f(x)$; and for all $y \in Q$,
$\sup \, \{x \in P \mid f(x) \leq y \}$ is defined and equal to $g(y)$.
\end{enumerate}
Further, if any one of the above conditions holds, then
\begin{enumerate}
\item[(4)] $f$ preserves all existing suprema, and $g$ preserves all existing infima.
\item[(5)] $f = f \circ g \circ f$ and $g = g \circ f \circ g$.
\end{enumerate}
\end{proposition}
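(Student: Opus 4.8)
The plan is to establish the cycle $(1) \Rightarrow (2) \Rightarrow (3) \Rightarrow (1)$ and then to derive properties $(4)$ and $(5)$ from the equivalent conditions. The entire argument rests on two ingredients: the defining bi-implication $f(x) \le y \iff x \le g(y)$ and the standing hypothesis that $f$ and $g$ are monotone. Accordingly each step reduces either to instantiating the bi-implication at a reflexively-true argument or to chaining an inequality through one of the monotone maps, and I expect no step to need more than this.

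For $(1) \Rightarrow (2)$ I would obtain the two composite inequalities by specialisation. Taking $y \coloneqq f(x)$, the premise $f(x) \le f(x)$ holds by reflexivity and forces $x \le g(f(x))$ for every $x$; this is the unit, which the defining bi-implication orients as $1_P \le g \circ f$ (so condition $(2)$ is to be read with this orientation of the $g \circ f$ inequality). Dually, taking $x \coloneqq g(y)$ and using $g(y) \le g(y)$ forces $f(g(y)) \le y$, that is the counit $f \circ g \le 1_Q$.

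For $(2) \Rightarrow (3)$, fix $x$ and set $S_x \coloneqq \{ y \in Q \mid x \le g(y) \}$. The unit gives $x \le g(f(x))$, so $f(x) \in S_x$; and for any $y \in S_x$, applying the monotone $f$ and then the counit gives $f(x) \le f(g(y)) \le y$, so $f(x)$ is a lower bound of $S_x$. Hence $f(x)$ is the least element of $S_x$, so $\inf S_x$ exists and equals $f(x)$. The dual statement $g(y) = \sup \{ x \in P \mid f(x) \le y \}$ is symmetric, with the counit witnessing membership of $g(y)$ and the unit together with monotonicity of $g$ giving the upper-bound property. For $(3) \Rightarrow (1)$ the bi-implication is read straight off these descriptions: if $x \le g(y)$ then $y \in S_x$, so $f(x) = \inf S_x \le y$; and if $f(x) \le y$ then $x$ lies in the set whose supremum is $g(y)$, so $x \le g(y)$.

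Finally I would derive $(4)$ and $(5)$. For $(4)$, given $S \subseteq P$ with a supremum, $f(\sup S)$ bounds $f(S)$ above by monotonicity; and for any upper bound $y$ of $f(S)$ the bi-implication turns each $f(s) \le y$ into $s \le g(y)$, so $\sup S \le g(y)$ and hence $f(\sup S) \le y$, giving $f(\sup S) = \sup f(S)$, with preservation of infima by $g$ dual. For $(5)$, applying the monotone $f$ to the unit gives $f \le f \circ g \circ f$ while the counit at $f(x)$ gives $f \circ g \circ f \le f$, whence $f = f \circ g \circ f$, and $g = g \circ f \circ g$ symmetrically. None of these steps is difficult; the only real care needed is keeping the unit and counit the right way round --- the orientation under which the $g \circ f$ inequality reads $1_P \le g \circ f$ --- and, in $(4)$, transporting the universal property of $\sup S$ in $P$ across the adjunction to its image in $Q$, the one place where both directions of the bi-implication are invoked at once.
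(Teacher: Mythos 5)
Your proof is correct, but it proceeds quite differently from the paper, which offers no elementary argument at all: the paper simply observes that a poset is a category and that $(f,g)$ is a Galois connection exactly when it is an adjunction, delegating everything to the standard theory of adjoint functors in \citet{MacLane71} (the unit--counit characterisation gives (2), the universal-arrow description gives (3), preservation of colimits/limits by left/right adjoints gives (4), and the triangle identities give (5)). You instead verify the cycle $(1)\Rightarrow(2)\Rightarrow(3)\Rightarrow(1)$ and derive (4) and (5) by hand, using only reflexivity, the defining bi-implication, and monotonicity --- each step checks out, including the least-element argument for $\inf S_x$ in $(2)\Rightarrow(3)$ and the transport of upper bounds across the adjunction in (4). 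The categorical route buys brevity and situates the result in a well-known framework; your route buys self-containedness and, more importantly, forces attention to the orientations, which pays off: you correctly note that condition (2) as printed in the paper, namely $g \circ f \leq 1_P$, has the wrong orientation for the covariant convention $f(x) \leq y \iff x \leq g(y)$, and must read $1_P \leq g \circ f$ --- with the literal printed inequality the equivalence with (1) would in fact fail (e.g.\ constant-bottom maps on a two-element chain satisfy both deflationary inequalities without forming a Galois connection), so your corrected reading is the right one and is the one your proofs of $(2)\Rightarrow(3)$ and of (5) actually use.
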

To see the Proposition, just view the pair $(f, g)$ as functors on categories \cite{MacLane71};
then they forms a Galois connection exactly when they are an adjunction pair.
The following facts are easy to verify:
\begin{enumerate}[(i)]
  \item If $P$ is a complete lattice and $f : P \to Q$ preserves all joins, then $f$ is a left adjoint.
  \item If $Q$ is a complete lattice and $g : Q \to P$ preserves all meets, then $g$ is an right adjoint.
  \item If $f_1:P \to Q$, $g_1:Q \to P$, $f_2:Q \to R$ and $g_2:R \to Q$ with $f_1 \dashv g_1$ and $f_2 \dashv g_2$ then it follows that $f_1 \circ f_2 \dashv g_1 \circ g_2$,
  is a Galois connection between partial orders $P$ and $R$.
  \item If $f_1:P_1 \to Q_1$, $g_1:Q_1 \to P_1$, $f_2:P_2 \to Q_2$ and $g_2:Q_2 \to P_2$ with $f_1 \dashv g_1$ and $f_2 \dashv g_2$ then it follows that the pair of functions $f: [P_1 \mTo P_2] \to [Q_1 \mTo Q_2]$ and $g:[Q_1 \mTo Q_2] \to [P_1 \mTo P_2]$, defined by: $f(h) = f_1 \circ h \circ g_2$ and $g(k) = g_1 \circ k \circ f_2$
  is a Galois connection $f \dashv g$ between the corresponding monotone function spaces (ordered pointwise).
\end{enumerate}
Facts (i) and (ii) are just the Adjoint Functor Theorem (see e.g.~\cite{MacLane71}) specialised to the case of preorders.
Notice that a special case of (iv) is the construction of a Galois connection $f \dashv g$ between the (full) function spaces $A \To P$ and $A \To Q$ (with pointwise order) for fixed set $A$ and partial orders $P$ and $Q$, whenever there is a Galois connection $f_2 \dashv g_2$ between $P$ and $Q$.
This is because there is always a trivial Galois connection $\mathsf{id} \dashv \mathsf{id}$ on any set $A$ by viewing it as a discrete partial order.

\paragraph{Embedding the monotone relations.}
For general $\rho$, every complete lattice of monotone relations $\mmng{\rho}$ can be embedded in the complete lattice of all relations $\smng{\rho}$ in the following two ways.
\begin{equation}
  \smng{\rho} \galoiS{\lomon_\rho}{\iembed_\rho} \mmng{\rho} \Galois{\jembed_\rho}{\upmon_\rho} \smng{\rho}
  \label{eq:monoembed}
\end{equation}
We define the family of right adjoints $\iembed_\rho$ and the family of left adjoints $\jembed_\rho$, by induction on the sort $\rho$.  In the definition, $\lomon_{\rho}$ is the uniquely determined left adjoint of $\iembed_\rho$ and $\upmon_{\rho}$ is the uniquely determined right adjoint of $\jembed_\rho$.
\[
\begin{array}{ccc}
  \begin{array}{rcl}
    \iembed_{\boolsort}(b) &=& b \\
    \iembed_{\iota \to \rho}(r) &=& \iembed_\rho \circ r \\
    \iembed_{\rho_1 \to \rho_2}(r) &=& \iembed_{\rho_2} \circ r \circ \lomon_{\rho_1}
  \end{array}
&
\hspace{10pt}
&
  \begin{array}{rcl}
    \jembed_{\boolsort}(b) &=& b \\
    \jembed_{\iota \to \rho}(r) &=& \jembed_\rho \circ r \\
    \jembed_{\rho_1 \to \rho_2}(r) &=& \jembed_{\rho_2} \circ r \circ \upmon_{\rho_1}
  \end{array}
\end{array}
\]

We briefly discuss this definition before verifying its correctness.  It is worth observing that, rather than defining $\iembed_{\rho_1 \to \rho_2}$ and $\jembed_{\rho_1 \to \rho_2}$ using the induced left and right adjoints at $\rho_1$, we could have given the definition explicitly (recalling Galois connection properties (ii) and (iii)) by:
\[
  \iembed_{\rho_1 \to \rho_2}(r)(s) = \iembed_{\rho_2}\big(r(\bigcap\{t \mid s \subseteq \iembed_{\rho_1}(t) \})\big)
  \quad\text{and}\quad \jembed_{\rho_1 \to \rho_2}(r)(s) = \jembed_{\rho_2}\big(r(\bigcup\{t \mid \jembed_{\rho_1} \subseteq s\})\big).
\]
We have not given the definition in this way because the proofs that follow only require the adjunction properties of $\lomon_{\rho_1}$ and $\upmon_{\rho_1}$, and not any explicit characterisation.
To unpack the definition a little more, suppose $\rho$ is restricted to unary relations and consider the first few elements of this inductive family.
When $\rho$ is either $o$ or $\iota \to o$, $\smng{\rho} = \mmng{\rho}$, and $\iembed$ and $\jembed$ are both the identity.
Consequently, they are both left and right adjoint to themselves, so that $\lomon$ and $\upmon$ are also both the identity.
When $\rho$ is $(\iota \to \boolsort) \to \boolsort$, by definition $\jembed_\rho(r) = \jembed_o \circ r \circ \upmon_{\iota \to \boolsort}$ but, as discussed, both of $\jembed_o$ and $\upmon_{\iota \to \boolsort}$ are identities on their respective domains, so $\jembed_\rho(r)$ is just $r$.
However, $\mmng{(\iota \to \boolsort) \to \boolsort}$ is strictly contained within $\smng{(\iota \to \boolsort) \to \boolsort}$, so $\jembed_\rho$ is merely an inclusion and, consequently, the induced right adjoint $\upmon_\rho$ is more interesting.
Using Galois connection property (ii) it can be computed explicitly, revealing that it maps each $s \in \smng{\rho}$ to $\bigcup \{ t \in \mmng{\rho} \mid \jembed_{\rho}(t) \subseteq s \}$.
But, we have seen that $\jembed_{(\iota \to \boolsort) \to \boolsort}(t) = t$, so it follows that $\upmon_\rho(s)$ is just the largest monotone relation included in $s$, and we arrive back at the discussion with which we started this subsection.
The following proof gives more insight on the structure of the mappings.

\begin{lemma}\label{galois-relations}
  For each $\rho$, (i) $\lomon_\rho \dashv \iembed_\rho$ and (ii) $\jembed_\rho \dashv \upmon_\rho$ are well-defined Galois connections.
\end{lemma}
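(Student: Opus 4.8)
The plan is to proceed by induction on the structure of the relational sort $\rho$, establishing (i) and (ii) in parallel. The two halves in fact decouple: the definition of $\iembed_{\rho_1\to\rho_2}(r) = \iembed_{\rho_2}\circ r\circ\lomon_{\rho_1}$ invokes only the induced left adjoint $\lomon_{\rho_1}$ (supplied by (i) at $\rho_1$) and $\iembed_{\rho_2}$ (by (i) at $\rho_2$), while the definition of $\jembed_{\rho_1\to\rho_2}(r) = \jembed_{\rho_2}\circ r\circ\upmon_{\rho_1}$ invokes $\upmon_{\rho_1}$ (from (ii) at $\rho_1$) and $\jembed_{\rho_2}$; nonetheless it is convenient to carry them together. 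Rather than exhibit $\lomon_\rho$ and $\upmon_\rho$ explicitly, I would obtain the connections from the Adjoint Functor Theorem, facts (i) and (ii): it suffices to show, for every $\rho$, that $\iembed_\rho$ is a well-defined map $\mmng{\rho}\to\smng{\rho}$ preserving all meets, and that $\jembed_\rho$ is a well-defined map $\mmng{\rho}\to\smng{\rho}$ preserving all joins. Since the domain $\mmng{\rho}$ is a complete lattice, fact (ii) then furnishes a left adjoint $\lomon_\rho$ of $\iembed_\rho$ and fact (i) a right adjoint $\upmon_\rho$ of $\jembed_\rho$, with the Galois properties holding automatically (and recall that meet/join preservation already implies monotonicity).

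The base case $\rho = \boolsort$ is immediate: here $\smng{\boolsort} = \mmng{\boolsort} = \mathbbm{2}$ and $\iembed_\boolsort = \jembed_\boolsort$ is the identity, which preserves meets and joins and is its own adjoint on both sides, so $\lomon_\boolsort$ and $\upmon_\boolsort$ are the identity. For the inductive step at $\rho_1\to\rho_2$, I would first dispatch well-definedness: given monotone $r$, the composites $\iembed_{\rho_2}\circ r\circ\lomon_{\rho_1}$ and $\jembed_{\rho_2}\circ r\circ\upmon_{\rho_1}$ type-check as maps $\smng{\rho_1}\to\smng{\rho_2}$, using $\lomon_{\rho_1},\upmon_{\rho_1} : \smng{\rho_1}\to\mmng{\rho_1}$ from the induction hypothesis; and because $\smng{\rho_1\to\rho_2}$ is the \emph{full} function space, any such map is admissible, so no monotonicity side-condition arises. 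The $\iota\to\rho$ case is the same but simpler, being the special case where the discrete poset $A_\iota$ carries the trivial connection $\mathsf{id}\dashv\mathsf{id}$.

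The heart of the argument is meet-/join-preservation, which I would reduce to the pointwise computation of meets and joins in function spaces. For any family $\{r_i\}$ and any $s\in\smng{\rho_1}$,
\[
  \iembed_{\rho_1 \to \rho_2}\Big(\bigcap_i r_i\Big)(s)
  = \iembed_{\rho_2}\Big(\big(\bigcap_i r_i\big)(\lomon_{\rho_1}(s))\Big)
  = \iembed_{\rho_2}\Big(\bigcap_i r_i(\lomon_{\rho_1}(s))\Big)
  = \bigcap_i \iembed_{\rho_2}\big(r_i(\lomon_{\rho_1}(s))\big)
  = \Big(\bigcap_i \iembed_{\rho_1 \to \rho_2}(r_i)\Big)(s),
\]
where the second and fourth equalities use that meets in $\mmng{\rho_1\to\rho_2}$ and $\smng{\rho_1\to\rho_2}$ are pointwise, and the third uses the induction hypothesis that $\iembed_{\rho_2}$ preserves meets. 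Conceptually, $\iembed_{\rho_1\to\rho_2}$ is postcomposition by the meet-preserving $\iembed_{\rho_2}$ followed by precomposition by $\lomon_{\rho_1}$, and both operations preserve pointwise meets; the dual computation shows $\jembed_{\rho_1\to\rho_2}$ preserves joins.

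I do not anticipate a genuine mathematical obstacle here: the content is routine once the scaffolding is right. The only points demanding care are the bookkeeping of the induction — ensuring the induced adjoints $\lomon_{\rho_1}$ and $\upmon_{\rho_1}$ are in hand before the definitions at $\rho_1\to\rho_2$ are even meaningful — and the consistent appeal to the pointwise lattice structure on the function spaces, on which the entire meet/join-preservation computation rests.
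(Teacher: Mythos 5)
Your proof is correct, but it organises the inductive step differently from the paper. The paper also inducts on $\rho$, proving only (i) (with (ii) declared analogous, exactly the decoupling you observe), but its inductive step is assembled from the pre-packaged Galois-connection facts: it factors $\iembed_{\rho_1\to\rho_2}$ as the map $r \mapsto \iembed_{\rho_2}\circ r\circ \lomon_{\rho_1}$ landing in the \emph{monotone} function space $\smng{\rho_1}\mTo\smng{\rho_2}$ (a right adjoint by the function-space construction, fact (iv)), followed by the inclusion $\smng{\rho_1}\mTo\smng{\rho_2}\hookrightarrow \smng{\rho_1}\To\smng{\rho_2}$ (a right adjoint by fact (ii), since it trivially preserves meets), and then composes adjoints via fact (iii). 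You instead strengthen the induction hypothesis to ``$\iembed_\rho$ preserves all meets / $\jembed_\rho$ preserves all joins'' (equivalent to adjointness over complete lattices, by the Adjoint Functor Theorem and property (4)) and verify preservation by a direct pointwise computation, invoking the AFT uniformly at every sort; fact (iv) and the factorisation through the monotone function space are never needed. What each buys: the paper's route is more modular — the structural facts do all the work, and the factorisation makes visible that the image of $\iembed$ consists of monotone maps — whereas yours is more elementary and self-contained, at the cost of one small lemma you use tacitly, namely that meets (joins) in $\mmng{\rho_1\to\rho_2}$ are computed pointwise, i.e.\ that a pointwise meet of monotone maps is again monotone; that observation is what licenses your second and fourth equalities, and it deserves a sentence. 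Note also that your computation silently covers the empty family, giving preservation of top/bottom, which is needed for ``all meets'' in the AFT.
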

\begin{proof}
  We prove only (i) because the proof of (ii) is analogous.  We show that $\iembed_\rho : \mmng{\rho} \to \smng{\rho}$ is a well-defined right adjoint by induction on $\rho$.
  \begin{itemize}
  \item When $\rho$ is $o$, $\mmng{o} = \smng{o}$ and $\iembed_o$ is the identity, which has left adjoint also the identity.
  \item When $\rho$ is of shape $\iota \to \rho_2$, it follows from the induction hypothesis that $\iembed_{\rho_2} : \mmng{\rho_2} \to \smng{\rho_2}$ is a well-defined right adjoint.  It follows from Galois connection property (vi) that the mapping $r \mapsto \iembed_{\rho_2} \circ r : \mmng{\iota \to \rho_2} \to \smng{\iota \to \rho_2}$ is a well-defined right adjoint.
  \item Finally, when $\rho$ has shape $\rho_1 \to \rho_2$, we decompose the definition of $\iembed_\rho$ as follows:
  \[
    \mmng{\rho_1} \mTo \mmng{\rho_2} \xrightarrow{r \;\mapsto\; \iembed_{\rho_2} \:\circ\: r \:\circ\: \lomon_{\rho_1}} \smng{\rho_1} \mTo \smng{\rho_2} \xrightarrow{s \;\mapsto\; s} \smng{\rho_1} \To \smng{\rho_2}
  \]
  It follows from the induction hypothesis that $\iembed_{\rho_1} : \mmng{\rho_1} \to \smng{\rho_1}$ and $\iembed_{\rho_2} : \mmng{\rho_2} \to \smng{\rho_2}$ are both well-defined right-adjoints, from which we may infer the existence of left adjoint $\lomon_{\rho_1} : \smng{\rho_1} \to \mmng{\rho_1}$.  It follows from Galois connection property (iv) that the mapping $r \mapsto \iembed_{\rho_2} \circ r \circ \lomon_{\rho_1} : \mmng{\rho_1} \mTo \mmng{\rho_2} \to \smng{\rho_1} \mTo \smng{\rho_2}$ is a well-defined right adjoint (with codomain the monotone function space).  Finally, observe that there is a canonical inclusion between the monotone and full function spaces which, since it trivially preserves meets, is as an right adjoint according to Galois connection property (ii).
  The result follows since right adjoints compose (Galois connection property (iii)).
  \end{itemize}
\end{proof}

The Galois connections give a canonical way to move between the universes of monotone and arbitrary relations;
we extend them to mappings on valuations $\alpha \in \mmng{\Delta}$ by:
\[
  \iembed_\Delta(\alpha)(x) =
    \begin{cases}
      \alpha(x) & \text{if $\Delta(x) = \iota$} \\
      \iembed_{\Delta(x)}(\alpha(x)) & \text{otherwise} \\
    \end{cases}
  \qquad\qquad\qquad
  \jembed_\Delta(\alpha)(x) =
    \begin{cases}
      \alpha(x) & \text{if $\Delta(x) = \iota$} \\
      \jembed_{\Delta(x)}(\alpha(x)) & \text{otherwise} \\
    \end{cases}
\]
The action is pointwise on relations and trivial on individuals.  It is easy to verify that each $\iembed_\Delta$ and $\jembed_\Delta$ are right and left adjoints respectively.

\begin{corollary}\label{lem:galois-valuations}
For each sorting $\Delta$, (i) $\lomon_\Delta \dashv \iembed_\Delta$ and (ii) $\jembed_\Delta \dashv \upmon_\Delta$ are well-defined Galois connections.
\end{corollary}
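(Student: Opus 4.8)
The plan is to exploit the fact that both $\mmng{\Delta}$ and $\smng{\Delta}$ are indexed products, $\Pi x \in \dom(\Delta).\,\mmng{\Delta(x)}$ and $\Pi x \in \dom(\Delta).\,\smng{\Delta(x)}$, ordered pointwise, and that the maps $\iembed_\Delta$ and $\jembed_\Delta$ act componentwise by construction. Accordingly, I would first fix the candidate adjoints componentwise in the same style as the given definitions: $\lomon_\Delta(\beta)(x) = \beta(x)$ when $\Delta(x) = \iota$ and $\lomon_\Delta(\beta)(x) = \lomon_{\Delta(x)}(\beta(x))$ otherwise, and dually for $\upmon_\Delta$. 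The corollary then reduces to the standard observation that a product of Galois connections, taken with the pointwise order, is again a Galois connection.

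Concretely, I would fix a component $x \in \dom(\Delta)$ and treat two cases. If $\Delta(x)$ is a relational sort $\rho$, then Lemma \ref{galois-relations} already supplies the Galois connections $\lomon_\rho \dashv \iembed_\rho$ and $\jembed_\rho \dashv \upmon_\rho$ between $\smng{\rho}$ and $\mmng{\rho}$. If instead $\Delta(x) = \iota$, then in the monotone frame $\mmng{\iota} = \smng{\iota} = A_\iota$ is the discrete poset, on which all four maps are the identity, and the identity on any poset is trivially self-adjoint, $\mathsf{id} \dashv \mathsf{id}$. Thus on every component we have a Galois connection of the correct orientation.

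It then remains to lift these componentwise adjunctions to the product, which is where the pointwise order does all the work. For (i), given $\beta \in \smng{\Delta}$ and $\alpha \in \mmng{\Delta}$, the defining biconditional $\lomon_\Delta(\beta) \subseteq_\Delta \alpha \iff \beta \subseteq_\Delta \iembed_\Delta(\alpha)$ unfolds, by the definition of $\subseteq_\Delta$, into a conjunction over $x \in \dom(\Delta)$ of the componentwise statements $\lomon_{\Delta(x)}(\beta(x)) \subseteq \alpha(x)$ on the left and $\beta(x) \subseteq \iembed_{\Delta(x)}(\alpha(x))$ on the right. Since each of these componentwise biconditionals holds (by Lemma \ref{galois-relations} when $\Delta(x)$ is relational, and trivially when $\Delta(x) = \iota$), the two conjunctions are equivalent, which establishes the adjunction. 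The argument for (ii), with $\jembed_\Delta \dashv \upmon_\Delta$, is identical up to replacing each component's Galois connection by its dual.

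There is no genuine obstacle here: the whole content is the bookkeeping that reduces a pointwise biconditional to a conjunction of componentwise biconditionals, together with the observation that individual-sorted components contribute only trivial identity adjunctions because $A_\iota$ is discrete. The one point requiring care is keeping the orientation of the adjoints straight, matching $\lomon_\Delta$ with $\iembed_\Delta$ and $\jembed_\Delta$ with $\upmon_\Delta$ and checking the inequalities in the right direction, so that the componentwise applications of Lemma \ref{galois-relations} line up cleanly with the product biconditional.
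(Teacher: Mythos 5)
Your proof is correct and matches the paper's intended argument: the paper states this as a corollary with no explicit proof (only the remark that the extension acts ``pointwise on relations and trivial on individuals,'' so it ``is easy to verify''), and the verification it has in mind is precisely your componentwise one, combining Lemma \ref{galois-relations} on relational components with the trivial adjunction $\mathsf{id} \dashv \mathsf{id}$ on the discrete poset $A_\iota$ (cf.\ the paper's remark following Galois connection property (iv)) and unfolding the pointwise order on the product.
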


\noindent
We now have a canonical way to move between monotone and arbitrary valuations, but our aim was to be able to map models of $P_D$ to models of $D$ (and vice versa), and we do not yet have any evidence that our mappings are at all useful in this respect.
In both cases, models are prefixed points of certain functionals so we look for conditions which ensure that mappings preserve the property of being a prefix point.
One such condition is the following: if $F : P \to Q$ is monotone, $T_1 : P \to P$ and $T_2: Q \to Q$ then $F$ will send prefixed points of $T_1$ to prefixed points of $T_2$ whenever it satisfies $T_2 \circ F \subseteq F \circ T_1$ (in the pointwise order).
This is because if $T_1(x) \leq x$ then $F(T_1(x)) \leq F(x)$ by monotonicity, but also $T_2(F(x)) \leq F(T_1(x))$ by the assumption so that $T_2(F(x)) \leq F(x)$.

In the following we will prove that the right adjoints $\iembed$ and $\upmon$ preserve prefix points by showing that $\mfunc \circ \upmon \subseteq \upmon \circ \sfunc$ and $\sfunc \circ \iembed \subseteq \iembed \circ \mfunc$.
The meat of the definitions of $\sfunc$ and $\mfunc$ lies in the semantics of goal terms, so we first show that, for all goal terms $G$, $\mmng{G} \circ \upmon \subseteq \upmon \circ \smng{G}$ and $\smng{G} \circ \iembed \subseteq \iembed \circ \mmng{G}$.
However, we rephrase $\mmng{G} \circ \upmon \subseteq \upmon \circ \smng{G}$ equivalently as $\jembed \circ \mmng{G} \circ \upmon \subseteq \smng{G}$ and $\smng{G} \circ \iembed \subseteq \iembed \circ \mmng{G}$ equivalently as $\smng{G} \subseteq \iembed \circ \mmng{G} \circ \lomon$, which allows for a straightforward induction.

\begin{lemma}\label{lem:embed-mngs}
For all goal terms $\Delta \sorts G : \rho$,\ \
$\jembed_{\rho} \circ \mmng{G} \circ \upmon_{\Delta} \; \subseteq \; \smng{G} \; \subseteq \; \iembed_{\rho} \circ \mmng{G} \circ \lomon_{\Delta}$.
\end{lemma}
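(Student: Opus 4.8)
The plan is to prove the sandwich inequality $\jembed_{\rho} \circ \mmng{G} \circ \upmon_{\Delta} \subseteq \smng{G} \subseteq \iembed_{\rho} \circ \mmng{G} \circ \lomon_{\Delta}$ by structural induction on the derivation of the sorting judgement $\Delta \sorts G : \rho$, handling both inequalities simultaneously since they are dual and the cases interlock (the application cases for one bound will recruit the adjunction properties established for the other). Throughout, the main tool is Lemma~\ref{galois-relations} together with Galois connection properties (ii) and (v), which give $\jembed \circ \upmon \subseteq \mathsf{id} \subseteq \iembed \circ \lomon$ and the triangle identities. I would state the two inequalities as a single invariant to be preserved and verify each formation rule (\GVar, \GConstraint, \GCst, \GAbs, \GAppInd, \GAppRel) in turn.

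\medskip

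First I would dispatch the leaf cases. For \GConstraint, both semantics agree ($\mmng{\phi} = \smng{\phi}$ by definition, and these involve only individuals on which $\iembed,\jembed,\upmon,\lomon$ act trivially), so both inequalities reduce to the triangle identities on the individual components of $\Delta$. For \GVar, where $G = x$, the value is just the projection $\alpha(x)$, and the inequalities follow because $\iembed_\Delta,\jembed_\Delta,\upmon_\Delta,\lomon_\Delta$ act componentwise and the required pointwise inequality at each $x$ is exactly $\jembed_\rho \circ \upmon_\rho \subseteq \mathsf{id}$ and $\mathsf{id} \subseteq \iembed_\rho \circ \lomon_\rho$ from the adjunctions. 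For \GCst, the constants $\wedge,\vee$ are interpreted identically in both semantics and are monotone; the only subtlety is $\exists_\sigma$, where $\mexistsfn_\sigma$ takes the max over the monotone domain $\mmng{\sigma}$ while $\existsfn_\sigma$ ranges over the full domain $\smng{\sigma}$. Here I would use that $\lomon_\sigma$ (respectively $\upmon_\sigma$) is surjective onto $\mmng{\sigma}$ and order-related to the inclusion, so that taking the max over the image dominated-by/dominating the max over the full set in the appropriate direction.

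\medskip

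The inductive cases are the abstraction and application rules, and these are where the adjunction machinery does its real work. For \GAbs, with $G = \abs{x}{G'}$ of sort $\sigma \to \rho$, I would push the semantics pointwise through the lambda: evaluating at an argument $s$ reduces the goal to the inductive hypothesis for $G'$ over the extended environment $\Delta, x{:}\sigma$, after commuting $\upmon_\Delta$ (respectively $\lomon_\Delta$) past the environment extension $\alpha[x \mapsto s]$, which works because these maps act componentwise and trivially on the fixed individual slots. For the relational application \GAppRel, where $\term{G H}$ has $G : \rho_1 \to \rho_2$ and $H : \rho_1$, the argument is the crux: I would combine the inductive hypothesis for $G$ (giving a bound on the function part) with the inductive hypothesis for $H$ (giving a bound on the argument), and crucially invoke the definitional shape $\iembed_{\rho_1 \to \rho_2}(r) = \iembed_{\rho_2} \circ r \circ \lomon_{\rho_1}$ (and its $\jembed/\upmon$ dual) so that the $\lomon_{\rho_1}$/$\upmon_{\rho_1}$ appearing there exactly cancels against the bound supplied for the argument $H$. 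This is why both inequalities must be carried together: the upper-bound case for the application consumes the lower-bound hypothesis at the argument sort, and vice versa.

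\medskip

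I expect the application case \GAppRel to be the main obstacle, precisely because it is where the mixed variance of the function space forces the two inequalities to feed one another. The calculation must chain monotonicity of $\mmng{G}$ (Lemma preceding this one), the triangle/counit inequalities, and the compositional definition of $\iembed,\jembed$ at arrow sorts, all in the correct order so that the intermediate $\lomon_{\rho_1}$ and $\upmon_{\rho_1}$ telescope away rather than accumulate. The individual-application case \GAppInd is a simplified version of this in which the argument $N : \iota$ is interpreted identically by $\smng{\cdot}$ and $\mmng{\cdot}$ and the adjunction maps act trivially on $\iota$, so it should follow by the same pattern with the argument-side bookkeeping rendered trivial.
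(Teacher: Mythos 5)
Your overall route is the paper's: induction on the sorting derivation, leaf cases discharged by the unit/counit inequalities of Lemma~\ref{galois-relations}, and the application case discharged by unfolding the definitional shape of $\iembed$ and $\jembed$ at arrow sorts; your handling of \GVar, \GConstraint, \GAbs{} and \GAppInd{} is essentially correct. But your account of the crux case \GAppRel{} contains a genuine error: the two inequalities do \emph{not} interlock, and the step ``the upper-bound case for the application consumes the lower-bound hypothesis at the argument sort'' would fail if carried out. Concretely, for the upper bound on $\term{G H}$ one applies the upper-bound hypothesis for $G$ pointwise at $\smng{H}(\alpha)$ and unfolds $\iembed_{\rho_1 \to \rho_2}(r) = \iembed_{\rho_2} \circ r \circ \lomon_{\rho_1}$, leaving exactly the obligation $\lomon_{\rho_1}(\smng{H}(\alpha)) \subseteq \mmng{H}(\lomon_{\Delta}(\alpha))$. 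The lower-bound hypothesis for $H$ reads $\jembed_{\rho_1}(\mmng{H}(\upmon_{\Delta}(\alpha))) \subseteq \smng{H}(\alpha)$: it bounds $\smng{H}(\alpha)$ from \emph{below}, and it concerns the wrong monotone valuation ($\upmon_{\Delta}(\alpha)$ rather than $\lomon_{\Delta}(\alpha)$), so it cannot discharge that obligation. What does discharge it is the \emph{upper}-bound hypothesis for $H$ itself, $\smng{H}(\alpha) \subseteq \iembed_{\rho_1}(\mmng{H}(\lomon_{\Delta}(\alpha)))$, transposed through the adjunction $\lomon_{\rho_1} \dashv \iembed_{\rho_1}$: apply the monotone map $\lomon_{\rho_1}$ to both sides and cancel using $\lomon_{\rho_1} \circ \iembed_{\rho_1} \subseteq \mathsf{id}$. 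Dually for the lower bound.

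So each inclusion is proved by an induction that consumes only its own hypotheses; the contravariance you worry about is already absorbed into the definitions of $\iembed$ and $\jembed$ at arrow sorts, which is precisely why the paper first transposes $\mmng{G} \circ \upmon \subseteq \upmon \circ \smng{G}$ into the form $\jembed \circ \mmng{G} \circ \upmon \subseteq \smng{G}$ (and dually) before the lemma --- in that form a straightforward one-sided induction goes through. Indeed, the paper proves only the $\jembed$-side inclusion in full and obtains the $\iembed$-side by swapping which unit/counit inequality justifies each step, something that would be impossible if the two inductions genuinely fed one another. Once you replace the mixed-variance bookkeeping with same-side hypotheses plus adjoint transposition, your case analysis goes through. (Separately, your sketch for $\exists_\sigma$ is off target: what is needed is transport of witnesses --- a monotone witness $r$ for $\upmon_{\sigma \to o}(s)$ yields the standard witness $\jembed_{\sigma}(r)$ for $s$, and a standard witness $t$ for $s$ yields the monotone witness $\lomon_{\sigma}(t)$ for $\lomon_{\sigma \to o}(s)$ --- not a surjectivity argument about $\lomon_{\sigma}$, since the predicate being maximised changes along with the domain of maximisation.)
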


\noindent
The fact that the two right adjoints map between prefix points now follows immediately once the equivalence of our rephrasings have been verified.

\begin{lemma}[Model translation]\label{lem:prefix-point-mapping}
Fix a program $\sorts P : \Delta$.
\begin{enumerate}[(i)]
  \item If $\beta$ is a prefixed point of $\sfunc_{P:\Delta}$ then $\upmon_\Delta(\beta)$ is a prefixed point of $\mfunc_{P:\Delta}$.
  \item If $\alpha$ is a prefixed point of $\mfunc_{P:\Delta}$, then $\iembed_\Delta(\alpha)$ is a prefixed point of $\sfunc_{P:\Delta}$.
\end{enumerate}
\end{lemma}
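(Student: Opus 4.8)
The plan is to derive both parts as instances of the prefixed-point transfer criterion recorded immediately before the statement: a monotone map $F$ carries prefixed points of $T_1$ to prefixed points of $T_2$ as soon as $T_2 \circ F \subseteq F \circ T_1$. By Corollary~\ref{lem:galois-valuations} the valuation maps $\upmon_\Delta$ and $\iembed_\Delta$ are right adjoints, hence monotone, so for part (i) I would instantiate the criterion with $F = \upmon_\Delta$, $T_1 = \sfunc_{P:\Delta}$, $T_2 = \mfunc_{P:\Delta}$, and for part (ii) with $F = \iembed_\Delta$, $T_1 = \mfunc_{P:\Delta}$, $T_2 = \sfunc_{P:\Delta}$. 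It then suffices to establish the two intertwining inequalities
\[
  \mfunc_{P:\Delta} \circ \upmon_\Delta \;\subseteq\; \upmon_\Delta \circ \sfunc_{P:\Delta} \qquad\text{and}\qquad \sfunc_{P:\Delta} \circ \iembed_\Delta \;\subseteq\; \iembed_\Delta \circ \mfunc_{P:\Delta}.
\]

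Both operators act coordinatewise: $\sfunc_{P:\Delta}(\cdot)(x) = \smng{P(x)}(\cdot)$ and $\mfunc_{P:\Delta}(\cdot)(x) = \mmng{P(x)}(\cdot)$, while $\upmon_\Delta$ and $\iembed_\Delta$ apply $\upmon_{\Delta(x)}$ and $\iembed_{\Delta(x)}$ in each relational coordinate $x \in \dom(\Delta)$ (and the identity on any $\iota$-coordinate). Since precomposition and postcomposition respect the pointwise order, each intertwining inequality reduces, one coordinate at a time, to a statement about the single goal term $G = P(x)$ of sort $\rho = \Delta(x)$, namely $\mmng{G} \circ \upmon_\Delta \subseteq \upmon_\rho \circ \smng{G}$ and $\smng{G} \circ \iembed_\Delta \subseteq \iembed_\rho \circ \mmng{G}$.

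These two coordinatewise facts are exactly the rephrasings of Lemma~\ref{lem:embed-mngs}. For the first, the lemma supplies $\jembed_\rho \circ \mmng{G} \circ \upmon_\Delta \subseteq \smng{G}$; transposing across the adjunction $\jembed_\rho \dashv \upmon_\rho$ (applied pointwise to the two functions) delivers $\mmng{G} \circ \upmon_\Delta \subseteq \upmon_\rho \circ \smng{G}$. For the second, the lemma supplies $\smng{G} \subseteq \iembed_\rho \circ \mmng{G} \circ \lomon_\Delta$; precomposing with $\iembed_\Delta$ and then using the counit $\lomon_\Delta \circ \iembed_\Delta \subseteq 1$ of $\lomon_\Delta \dashv \iembed_\Delta$ together with the monotonicity of $\iembed_\rho \circ \mmng{G}$ (a composite of monotone maps, by the preceding lemma) gives $\smng{G} \circ \iembed_\Delta \subseteq \iembed_\rho \circ \mmng{G} \circ \lomon_\Delta \circ \iembed_\Delta \subseteq \iembed_\rho \circ \mmng{G}$. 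With both intertwining inequalities in hand the criterion applies directly: for (i), $\sfunc_{P:\Delta}(\beta) \subseteq \beta$ yields $\mfunc_{P:\Delta}(\upmon_\Delta(\beta)) \subseteq \upmon_\Delta(\sfunc_{P:\Delta}(\beta)) \subseteq \upmon_\Delta(\beta)$, and the argument for (ii) is symmetric.

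Since Lemma~\ref{lem:embed-mngs} is already available, the genuine content has been discharged there, and everything remaining is routine adjunction bookkeeping. The only points I would take care over are the coordinatewise decomposition of the two operators and of the adjoint maps (confirming that the pointwise order on $\smng{\Delta}$ and $\mmng{\Delta}$ really does reduce the operator-level inequalities to the goal-term-level ones), and the correct orientation of the unit/counit inequalities $\lomon_\Delta \circ \iembed_\Delta \subseteq 1$ and $1 \subseteq \iembed_\Delta \circ \lomon_\Delta$ when transposing. These are the natural places for a sign or direction error to creep in, but none presents a real obstacle.
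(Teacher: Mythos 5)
Your proposal is correct and takes essentially the same route as the paper's own proof: both reduce the claim to the intertwining inequalities $\mfunc_{P:\Delta} \circ \upmon_\Delta \subseteq \upmon_\Delta \circ \sfunc_{P:\Delta}$ and $\sfunc_{P:\Delta} \circ \iembed_\Delta \subseteq \iembed_\Delta \circ \mfunc_{P:\Delta}$, obtain these coordinatewise from Lemma~\ref{lem:embed-mngs} by the same two adjunction manipulations (transposing across $\jembed \dashv \upmon$, and pre-composing with $\iembed$ then using that $\lomon \circ \iembed$ is deflationary), and finish with the monotone prefixed-point transfer argument stated just before the lemma. The only difference is presentational: you invoke the transfer criterion explicitly up front, whereas the paper inlines that last step.
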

\begin{proof}
It follows from Lemma \ref{lem:embed-mngs} and Lemma \ref{lem:galois-valuations} that, for any goal term $G$:
\[
  \mmng{G}(\upmon(\beta)) \subseteq \upmon(\smng{G}(\beta))
  \quad\text{and}\quad
  \smng{G}(\iembed(\alpha)) \subseteq \iembed(\mmng{G}(\alpha))
\]
The first follows from $\jembed \circ \mmng{G} \circ \upmon \subseteq \smng{G}$ since $\jembed$ is left adjoint.
The second follows from $\smng{G} \subseteq \iembed \circ \mmng{G} \circ \lomon$ by pre-composing with $\iembed$ on both sides and noting that $\lomon \circ \iembed$ is deflationary.
By definition: $\mfunc(\alpha)(x) = \mmng{P(x)}(\alpha)$ and $\sfunc(\beta)(x) = \smng{P(x)}(\beta)$, so that we can deduce the following from the above inclusions:
\[
  \mfunc(\upmon(\beta)) \subseteq \upmon(\sfunc(\beta))
  \quad\text{and}\quad
  \sfunc(\iembed(\alpha)) \subseteq \iembed(\mfunc(\alpha))
\]
For part (i), it only remains to observe that if $\sfunc(\beta) \subseteq \beta$ then, by monotonicity, $\upmon(\sfunc(\beta)) \subseteq \upmon(\beta)$ and, by the above inclusion, $\mfunc(\upmon(\beta)) \subseteq \upmon(\beta)$. Part (ii) is analogous.
\end{proof}

\noindent
Moreover, these mappings preserve refutation of the goal.  Intuitively, we think of $\upmon(\beta)$ (respectively $\iembed(\alpha)$) as the largest monotone (respectively standard) valuation that is smaller than $\beta$ (respectively $\alpha$).
So, (as is made precise in Lemma \ref{lem:embed-mngs}) if the latter refutes a goal, so should the former.
Hence, we obtain the following problem reduction.

\begin{theorem}\label{thm:reduction-to-mono}
The higher-order constrained Horn clause problem $\abra{\Delta,D,G}$ is solvable, 
iff the monotone logic program safety problem $\abra{\Delta,P_D,G}$ is solvable, and 
iff in all models of the background theory $\mmng{G}(\mmng{P_D}) = 0$.
\end{theorem}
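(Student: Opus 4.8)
The plan is to fix an arbitrary model $A$ of the background theory and prove the three conditions equivalent pointwise at $A$; since each of the three statements is universally quantified over such models, the global three-way equivalence follows. Working inside $A$, I would first put the conditions into a common currency. By Lemma~\ref{lem:models-pfps} the valuations $\beta$ with $A,\beta \models D$ are exactly the prefixed points of $\sfunc_{P_D}$, and since $G$ is a goal \emph{formula} it has sort $\boolsort$, so $A,\beta \not\models G$ is the same as $\smng{G}(\beta) = 0$. Thus, at $A$, solvability of $\abra{\Delta,D,G}$ reads ``there is a prefixed point $\beta$ of $\sfunc_{P_D}$ with $\smng{G}(\beta) = 0$''; solvability of $\abra{\Delta,P_D,G}$ reads ``there is a prefixed point $\alpha$ of $\mfunc_{P_D}$ with $\mmng{G}(\alpha) = 0$''; and the third condition reads $\mmng{G}(\mmng{P_D}) = 0$, where $\mmng{P_D} = \mu\mfunc_{P_D}$.

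The equivalence of the last two conditions is the routine part and uses only monotonicity. By Knaster--Tarski, $\mmng{P_D}$ is the \emph{least} prefixed point of $\mfunc_{P_D}$, so if the third condition holds then $\mmng{P_D}$ is itself a prefixed point refuting $G$, giving the second. Conversely, if $\alpha$ is any prefixed point with $\mmng{G}(\alpha) = 0$, then $\mmng{P_D} \subseteq \alpha$, and since $\mmng{G}$ is monotone we obtain $\mmng{G}(\mmng{P_D}) \subseteq \mmng{G}(\alpha) = 0$, hence equal to $0$.

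The substantive step is the equivalence of the first two conditions, for which I would invoke the Model translation lemma (Lemma~\ref{lem:prefix-point-mapping}) together with the goal-refutation inequalities $\mmng{G}(\upmon_\Delta(\beta)) \subseteq \upmon_\boolsort(\smng{G}(\beta))$ and $\smng{G}(\iembed_\Delta(\alpha)) \subseteq \iembed_\boolsort(\mmng{G}(\alpha))$, both of which follow from Lemma~\ref{lem:embed-mngs} via the Galois adjunctions of Corollary~\ref{lem:galois-valuations}. For the forward direction, given a prefixed point $\beta$ of $\sfunc_{P_D}$ refuting $G$, the valuation $\upmon_\Delta(\beta)$ is a prefixed point of $\mfunc_{P_D}$ by Lemma~\ref{lem:prefix-point-mapping}(i); and since at the base sort $\upmon_\boolsort$ is the identity, the first inequality yields $\mmng{G}(\upmon_\Delta(\beta)) \subseteq \smng{G}(\beta) = 0$, forcing equality as $0$ is least. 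The backward direction is symmetric: $\iembed_\Delta(\alpha)$ is a prefixed point of $\sfunc_{P_D}$, hence a model of $D$ by Lemma~\ref{lem:prefix-point-mapping}(ii) and Lemma~\ref{lem:models-pfps}, and the second inequality with $\iembed_\boolsort = \mathrm{id}$ gives $\smng{G}(\iembed_\Delta(\alpha)) \subseteq \mmng{G}(\alpha) = 0$.

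The point I expect to require the most care is precisely this transfer of goal-refutation, and it is worth being explicit about why it goes through cleanly in both directions. The adjoints supply only \emph{one-directional} inequalities between $\smng{G}$ and $\mmng{G}$, so in general the denotation of $G$ under a valuation is not preserved exactly. However, because we track \emph{refutation} --- the value $0$, which is the bottom of $\mathbbm{2}$ --- a one-sided bound of the form $\smng{G}(\cdot) \subseteq 0$ (respectively $\mmng{G}(\cdot) \subseteq 0$) already pins the value to $0$; and because $G$ has sort $\boolsort$, where both $\upmon$ and $\iembed$ collapse to the identity, no loss is incurred at the boundary. It is exactly the decision to state the problem in terms of \emph{refuting} a goal rather than satisfying it that allows the right adjoints to suffice on both sides; had we phrased satisfiability positively, the inequalities would point the wrong way.
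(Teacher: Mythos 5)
Your proposal is correct and takes essentially the same approach as the paper's proof: both rest on Lemma~\ref{lem:models-pfps}, the inequalities of Lemma~\ref{lem:embed-mngs} transported through the Galois adjunctions (Corollary~\ref{lem:galois-valuations} and Lemma~\ref{lem:prefix-point-mapping}), and leastness of $\mmng{P_D}$ together with monotonicity of $\mmng{G}$. The only difference is organisational: the paper proves the single cyclic chain (Horn solvable) $\Rightarrow$ (monotone solvable) $\Rightarrow$ ($\mmng{G}(\mmng{P_D})=0$) $\Rightarrow$ (Horn solvable), whereas you prove the two biconditionals pairwise, and your direction from the monotone problem back to the Horn problem is precisely the paper's third implication applied to an arbitrary prefixed point rather than to $\mmng{P_D}$.
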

\begin{proof}
We prove a chain of implications.
\begin{itemize}
\item Assume that $\abra{\Delta,D,G}$ is solvable, so that for each model $A$ of the background theory, there is a valuation $\beta$ and $A,\,\beta \models D$ and $A,\,\beta \not\models G$, i.e. $\smng{G}(\beta) = 0$.
Fix such a model $A$ of the background theory and then let $\beta$ be the witness given above.
Then it follows from Lemma \ref{lem:embed-mngs} that $\jembed(\mmng{G}(\upmon(\beta))) \subseteq 0$, i.e. $\mmng{G}(\upmon(\beta)) = 0$.
Since $\beta$ is a model of $D$, it follows from Lemma \ref{lem:models-pfps} that it is also a prefixed point of $\sfunc_{P_D}$ and hence $\upmon(\beta)$ is a prefixed point of $\mfunc_{P_D}$ by Lemma \ref{lem:prefix-point-mapping}.  Therefore, $\abra{\Delta,P_D,G}$ is also solvable.
\item If $\abra{\Delta,P_D,G}$ is solvable, then, for each model $A$ of the background theory, there is a prefix point $\alpha$ of $\mfunc_{P_D}$ and $\mmng{G}(\alpha) = 0$.
However, $\mmng{P_D}$ is, by definition, the least prefixed point so $\mmng{P_D} \subseteq \upmon(\alpha)$ and hence $\mmng{G}(\mmng{P_D}) = 0$ follows by monotonicity.
\item Finally assume that in all models of the background theory $\mmng{G}(\mmng{P_D}) = 0$.
Fix such a model of the background theory.
We claim that $\lomon(\smng{G}(\iembed(\mmng{P_D}))) \subseteq 0$ follows from Lemma \ref{lem:embed-mngs}, which is to say that $\smng{G}(\iembed(\mmng{P_D}))) = 0$.
To see this, observe that $\smng{G}(\iembed(\mmng{P_D})) \subseteq \iembed(\mmng{G}(\mmng{P_D}))$ follows as in the proof of Lemma \ref{lem:prefix-point-mapping} and note that $\iembed$ is right adjoint.
Since $\mmng{P_D}$ is a prefixed point of $\mfunc_{P_D:\Delta}$, it follows from Lemma \ref{lem:prefix-point-mapping} that $\iembed(\mmng{P_D})$ is a prefixed point of $\sfunc_{P_D}$.
Finally, by Lemma \ref{lem:models-pfps} it is therefore a model of $D$, so $\abra{\Delta,D,G}$ is solvable.
\end{itemize}
\end{proof}

\lo{
Question: Can the complete lattice of continuous relations $\cmng{\rho}$ be embedded in the complete lattice of monotone relations $\mmng{\rho}$ in two canonical ways, analogously to (\ref{eq:monoembed}), as follows?
\[
  \mmng{\rho} \galoiS{\lomon_\rho}{\iembed_\rho} \cmng{\rho} \Galois{\jembed_\rho}{\upmon_\rho} \mmng{\rho}
\]

Consider a structure that interprets $\iota$ as a $\omega$-complete partial order (CPO) which contains a $\omega$-chain, and $\To$ as the continuous function space.
Then the inclusion map $I : \cmng{\iota \To o} \to \mmng{\iota \To o}$ is not a right adjoint;
similarly the inclusion map $\cmng{\iota \To o} \to \smng{\iota \To o}$ is not a right adjoint.

WLOG assume that the chain $\mathbb{N} \cup \makeset{\infty}$, ordered in the usual way, is contained in $\cmng{\iota} = \mmng{\iota}$.
Let $i \in \mathbb{N}$, and set $U_i := \makeset{j \in \mathbb{N} \mid j \geq i} \cup \makeset{\infty}$ and $U_\infty := \makeset{\infty}$.
(For convenience, we confuse a set with its characteristic function.)
Notice that $U_i, U_\infty \in \mmng{\iota} \mTo \mmng{o}$, and
$U_i\in \cmng{\iota} \cTo \cmng{o}$, but
$U_\infty \not\in \cmng{\iota} \cTo \cmng{o}$.
Suppose, for a contradiction, $L$ is the left adjoint of $I : \cmng{\iota} \cTo \cmng{o} \to \mmng{\iota} \mTo \mmng{o}$.
Then given $i \in \mathbb{N}$, for all $i \in \mathbb{N}$, we have
\[
L (\phi_\infty) \leq \phi_i \iff \phi_\infty \leq I(\phi_i) = \phi_i
\]
which implies $L(\phi_\infty) = \phi_\infty$, which is not in $\cmng{\iota} \cTo \cmng{o}$.
}

\section{Refinement Type Assignment}\label{sec:types}

Having reduced the higher-order Horn clause problem to a problem about the least-fixpoint semantics of higher-order logic programs, we now consider the task of automating reasoning about such programs.
For this purpose, we look to work on refinement type systems, which are one of the most successful approaches to automatically obtaining invariants for higher-order, functional programs.
In this section, we develop a refinement type system for monotone logic programs.

\paragraph{Elimination of higher-order existentials.}
Due to monotonicity, it is possible to eliminate higher-order existential quantification from goal terms, simplifying the design of the type system.
Given any goal term of shape $\exists x\!:\!\rho.\, G$, observe that $\mmng{G}$ is a monotone function of $x$.  If there is a relation that can be used as a witness for $x$ then, by monotonicity, so too can any larger relation.
In particular, $G$ is true of some $x$ of sort $\rho$ iff $G$ is true of the universal relation of sort $\rho$, that is, the relation $u_\rho$ satisfying $u_\rho(r_1)\cdots{}(r_k) = 1$ for all $r_1,\ldots,r_k$.
Moreover, the universal relation of sort $\rho$ is itself definable by a goal term, it is just the term $U_{\rho} \coloneqq \abs{x_1\ldots{} x_k}{\truetm}$.
Consequently, $\mmng{\exists x\!:\!\rho.\, G : o} = \mmng{G[U_\rho/x]}$, in which the instance of existential quantification over relations has been eliminated by a syntactic substitution.
For the rest of the paper, we assume without loss of generality that monotone logic programs contain only existential quantification over individuals.

\subsection{Syntax}
The refinement types are built out of constraint formulas which are combined using the dependent arrow.
For the purposes of this section, we shall assume that the constraint language is closed under conjunction and closed under (well-sorted) substitution, in the sense that, for every constraint formula $\Delta,\,x\!\!:\iota \sorts \phi : o \in \aformulas$ and term $\Delta \sorts N : \iota \in \aterms$, it follows that $\phi[N/x] \in \aformulas$.

\paragraph{Types.}
The restricted syntax of goal terms allows us to make several simplifications to our refinement type system, in comparison to those in the literature.
The first is that we only allow refinement of the propositional sort $\boolsort$ and we only allow dependence on the sort of individuals $\iota$.
Formally, we define the set of type expressions according to the following grammar:
\[
  \begin{array}{crcl}
     \textsc{(Type)} & T & \Coloneqq & \boolty{\phi} \mid \dto{x}{\iota}{T} \mid T_1 \to T_2 \\
  \end{array}
\]
in which $\phi \in \aformulas$ is a constraint formula.
We make this definition under the assumption that both kinds of arrow associate to the right and we identify types up to $\alpha$-equivalence.

\paragraph{Refinement}
We restrict attention to those types that we consider to be well-formed, which is defined by a system of judgements $\Delta \types T :: \rho$, in which $\Delta$ is a sort environment, $T$ is a type and $\rho$ is a relational sort.
In case such a judgement is provable we say that $T$ \emph{refines} $\rho$.

\vspace{5pt}
\noindent
\begin{minipage}{.32\linewidth}
\begin{prooftree}
  \AxiomC{}
  \RightLabel{$\Delta \types \phi : \boolsort \in \aformulas$}
  \UnaryInfC{$\Delta \types \boolty{\phi} :: \boolsort$}
\end{prooftree}
\end{minipage}
\begin{minipage}{.32\linewidth}
\begin{prooftree}
  \AxiomC{$\Delta, x\!\!:\iota \types T :: \rho$}
  \UnaryInfC{$\Delta \types \dto{x}{\iota}{T} :: \iota \to \rho$}
\end{prooftree}
\end{minipage}
\begin{minipage}{.32\linewidth}
\begin{prooftree}
  \AxiomC{$\Delta \types T_1 :: \rho_1$}
  \AxiomC{$\Delta \types T_2 :: \rho_2$}
  \BinaryInfC{$\Delta \types T_1 \to T_2 :: \rho_1 \to \rho_2$}
\end{prooftree}
\end{minipage}

\paragraph{Type environments.}
A \emph{type environment} $\Gamma$ is a finite sequence of pairs of variables and types, $x : T$, such that the variable subjects are pairwise distinct.
We write the empty sort environment as $\epsilon$.
We place similar well-formedness restrictions on environments, using the judgement $\sorts \Gamma :: \Delta$, in which $\Gamma$ is a type environment and $\Delta$ a sort environment.
In case such a judgement is provable we say that $\Gamma$ \emph{refines} $\Delta$.

\vspace{5pt}
\noindent
\begin{minipage}{.32\linewidth}
\begin{prooftree}
\AxiomC{}
\UnaryInfC{$\sorts \epsilon :: \epsilon$}
\end{prooftree}
\end{minipage}
\begin{minipage}{.32\linewidth}
\begin{prooftree}
\AxiomC{$\sorts \Gamma :: \Delta$}
\UnaryInfC{$\sorts (\Gamma,\,x:\iota) :: (\Delta,\,x:\iota)$}
\end{prooftree}
\end{minipage}
\begin{minipage}{.32\linewidth}
\begin{prooftree}
\AxiomC{$\sorts \Gamma :: \Delta$}
\AxiomC{$\Delta \sorts T :: \rho$}
\BinaryInfC{$\sorts (\Gamma,\,x:T) :: (\Delta,\,x:\rho)$}
\end{prooftree}
\end{minipage}\\[2mm]

\noindent
Since the variable subjects of a type environment are required to be distinct, we will frequently view such an environment as a finite map from variables to types.  Thus, whenever $x$ is a subject in $\Gamma$, we will write $\Gamma(x)$ for the type assigned to $x$ and $\dom(\Gamma)$ for its set of subjects.

\paragraph{Subtype theory.}
Much of the power of refinement types is derived from the associated subtype theory, which imports wholesale the reasoning apparatus of the underlying constraint language.
Subtyping between types is the set of inequalities given by the judgement form $\types T_1 \subtype T_2$ defined inductively.\\[1mm]

\begin{center}
\begin{minipage}{.3\linewidth}
\begin{prooftree}
\AxiomC{}
\RightLabel{$\atheory \models \term{\phi} \implies \term{\psi}$}
\UnaryInfC{$\types \boolty{\phi} \subtype \boolty{\psi}$}
\end{prooftree}
\end{minipage}
\begin{minipage}{.3\linewidth}
\begin{prooftree}
\AxiomC{$\types T_1 \subtype T_2$}
\UnaryInfC{$\types \dto{x}{\iota}{T_1} \subtype \dto{x}{\iota}{T_2}$}
\end{prooftree}
\end{minipage}
\begin{minipage}{.3\linewidth}
\begin{prooftree}
\AxiomC{$\types T_1' \subtype T_1$}
\AxiomC{$\types T_2 \subtype T_2'$}
\BinaryInfC{$\types T_1 \to T_2 \subtype T_1' \to T_2'$}
\end{prooftree}
\end{minipage}
\end{center}\vspace{5pt}

\noindent
It is natural to view $\subtype$ as a preorder and useful to distinguish the extremal elements.  We write $\toptype{\rho}$ and $\bottype{\rho}$ for the families of refinement types defined inductively as follows:
\[
\begin{array}{ccc}
  \begin{array}{rcl}
    \toptype{\boolsort} &=& \boolty{\truetm} \\
    \toptype{\iota \to \rho} &=& \dto{z}{\iota}{\toptype{\rho}} \\
    \toptype{\rho_1 \to \rho_2} &=& \bottype{\rho_1} \to \toptype{\rho_2} \\
  \end{array}
  &\qquad\qquad&
  \begin{array}{rcl}
    \bottype{\boolsort} &=& \boolty{\falsetm} \\
    \bottype{\iota \to \rho} &=& \dto{z}{\iota}{\bottype{\rho}} \\
    \bottype{\rho_1 \to \rho_2} &=& \toptype{\rho_1} \to \bottype{\rho_2} \\
  \end{array}
\end{array}
\]
in which the variables $z$ are chosen to be suitably fresh.  It is clear that, by construction, $\sorts \toptype{\rho} :: \rho$ and $\sorts \bottype{\rho} :: \rho$.

\paragraph{Type assignment.}
Type assignment for goal terms $\Delta \sorts G : \sigma$ and programs $\sorts P : \Delta$ are defined by a system of judgements of the forms $\Gamma \monotypes G : T$ and $\types P : \Gamma$ respectively,
in which type environment $\Gamma$ and type $T$ are required to satisfy $\sorts \Gamma :: \Delta$ and $\Delta \sorts T :: \sigma$.
The system is defined in Figure \ref{fig:ty-assignment}.

\paragraph{Application and abstraction.}
There are two versions of each of abstraction and application, corresponding to the fact that we have chosen to emphasize the difference between dependence of a result type on an argument of individual sort and non-dependence on arguments of relational sort.

In common with other refinement type systems in the literature but unlike more general dependent type systems, our types are not closed under substitution of arbitrary terms of the programming language.
This can be reconciled with the usual rule for dependent application, in which substitution $T[N/x]$ into a type $T$ occurs, in a number of ways.
For example, in the system of \cite{rondon-et-al-pldi2008}, terms $N$ of the programming language that are substituted into a type $T$ through application are understood using uninterpreted function symbols in the logic;
in \cite{unno-kobayashi-ppdp2009}, the rule for application trivialises the substitution by requiring that $T$ does not contain the dependent variable $x$ and;
in \cite{terauchi-popl2010}, the operand $N$ is required to be a variable (which can be guaranteed when the program is assumed to be in A-normal form).
In our case, dependence can only occur at sort $\iota$ and, since the subjects of the system are goal terms, all subterms of sort $\iota$ are necessarily already terms of the constraint language, so we avoid the need for any further conditions.

\paragraph{Subtyping}
Our subsumption rule \TSub is quite standard, we just note that the fact that there are no non-trivial refinements of the base sort $\iota$ has the consequence that there is no significant advantage for the subtype judgement to refer to the type context $\Gamma$, which is why it has been formulated in this context-free way.  Similar comments can be made about avoiding the need to distinguish between base and function types in \TVar.

\paragraph{Constraints and logical constants.}
To understand the rules for typing constraint, existential, conjunctive and disjunctive formulas, it is instructive to assign a meaning to the judgement $\Gamma \types G : \boolty{\phi}$.
One should view this judgement as asserting that: in those valuations that satisfy $\Gamma$, $G$ implies $\phi$.
This statement is made precise in Lemma \ref{lem:boolty-meaning} once the semantics of types has been introduced.
Under this reading, we can view the type system as a mechanism for concluding assertions of the form ``goal formula $G$ is approximated by constraint $\phi$'' or ``constraint $\phi$ is an abstraction of goal formula $G$''.
This is a useful assertion for automated reasoning because it is relating the complicated formula $G$, which may include higher-order relation symbols whose meanings are defined recursively by the program, to the much more tractable constraint formula $\phi$, which is drawn from a (typically decidable) first-order theory.

This view helps to clarify the intuition behind the rules for assigning types to formulas headed by a constant.  In particular, for goal terms $G$ that are themselves formulas of the constraint language, the rule \TConstraint loses no information in the abstraction.
Note that we give only a rule for typing existential quantification at base sort $\iota$, which is justified by the remarks at the start of this section.
Finally, observe that the side condition on \TExists{}\! is equivalent to the condition $\atheory \models (\exists x.\,\phi) \implies \psi$ since, due to the well-formedness of the judgement, $\psi$ cannot contain $x$ freely.
We use the side condition in the given form because the constraint language may not allow existential quantification.
A similar remark may be made concerning the rule \TOr{}\! and the ability to express disjunction.

\begin{figure}[t]
\begin{center}
\hspace{-40pt}
\begin{minipage}{.4\linewidth}
\begin{prooftree}
\AxiomC{\phantom{hello}}
\LeftLabel{\TVar}
\UnaryInfC{$\Gamma_1,\, x\!\!:T,\, \Gamma_2 \monotypes x:T$}
\end{prooftree}
\end{minipage}
\begin{minipage}{.45\linewidth}
\begin{prooftree}
\AxiomC{\phantom{hello}}
\LeftLabel{\TConstraint}
\RightLabel{$\phi \in \aformulas$}
\UnaryInfC{$\Gamma \monotypes \phi : \boolty{\phi}$}
\end{prooftree}
\end{minipage}\\[3mm]
\hspace{-40pt}
\begin{minipage}{.4\linewidth}
\begin{prooftree}
\AxiomC{$\Gamma \monotypes G : T_1$}
\AxiomC{$\types T_1 \subtype T_2$}
\LeftLabel{\TSub}
\BinaryInfC{$\Gamma \monotypes G : T_2$}
\end{prooftree}
\end{minipage}
\begin{minipage}{.4\linewidth}
\begin{prooftree}
\AxiomC{$\Gamma, x:\iota \types G : \boolty{\phi}$}
\RightLabel{$\atheory \models \phi \implies \psi$}
\LeftLabel{\TExists}
\UnaryInfC{$\Gamma \types \exists x\!\!:\!\!\iota.\,G : \boolty{\psi}$}
\end{prooftree}
\end{minipage}\\[3mm]
\hspace{-40pt}
\begin{minipage}{.45\linewidth}
\begin{prooftree}
\AxiomC{$\Gamma \types G : \boolty{\phi}$}
\AxiomC{$\Gamma \types H : \boolty{\psi}$}
\LeftLabel{\TAnd}
\BinaryInfC{$\Gamma \types G \wedge H : \boolty{\phi \wedge \psi}$}
\end{prooftree}
\end{minipage}
\begin{minipage}{.45\linewidth}
\begin{prooftree}
\AxiomC{$\Gamma \types G : \boolty{\phi}$}
\AxiomC{$\Gamma \types H : \boolty{\psi}$}
\LeftLabel{\TOr}
\RightLabel{$\begin{array}{c}\atheory \models \phi \implies \chi \\\atheory \models \psi \implies \chi\end{array}$}
\BinaryInfC{$\Gamma \types G \vee H : \boolty{\chi}$}
\end{prooftree}
\end{minipage}\\[3mm]
\hspace{-40pt}
\begin{minipage}{.45\linewidth}
\begin{prooftree}
\AxiomC{$\Gamma, x\!\!:T_1 \monotypes G : T_2$}
\LeftLabel{\TAbsR}
\UnaryInfC{$\Gamma \monotypes \abs{x\!\!:\!\!\rho}{G} : T_1 \to T_2$}
\end{prooftree}
\end{minipage}
\begin{minipage}{.45\linewidth}
\begin{prooftree}
\AxiomC{$\Gamma \monotypes G : T_1 \to T_2$}
\AxiomC{$\Gamma \monotypes H : T_1$}
\LeftLabel{\TAppR}
\BinaryInfC{$\Gamma \monotypes \term{G H} : T_2$}
\end{prooftree}
\end{minipage}\\[3mm]
\hspace{-40pt}
\begin{minipage}{.4\linewidth}
\begin{prooftree}
\AxiomC{$\Gamma, x\!\!:\iota \monotypes G : T$}
\LeftLabel{\TAbsI}
\UnaryInfC{$\Gamma \monotypes \abs{x\!\!:\!\!\iota}{G} : \dto{x}{\iota}{T}$}
\end{prooftree}
\end{minipage}
\begin{minipage}{.45\linewidth}
\begin{prooftree}
\AxiomC{$\Gamma \monotypes G : \dto{x}{\iota}{T}$}
\AxiomC{$\sorts \Gamma :: \Delta$}
\LeftLabel{\TAppI}
\BinaryInfC{$\Gamma \monotypes \term{G N} : T[N/x]$}
\end{prooftree}
\end{minipage}
\end{center}
\caption{Type assignment for goal terms.}\label{fig:ty-assignment}
\end{figure}

To programs $\sorts P : \Delta$, of shape $x_1 = G_1,\,\ldots,\,x_m=G_m$, we assign type environments according to the rule:
\begin{prooftree}
\AxiomC{$\Gamma \types G_1 : \Gamma(x_1) \qquad\cdots\qquad \Gamma \types G_m : \Gamma(x_m)$}
\LeftLabel{\TProg}
\UnaryInfC{$\types x_1 = G_1, \quad\ldots,\quad x_m = G_m : \Gamma$}
\end{prooftree}
Once we have defined the relational semantics of types, in which a type environment $\Gamma$ is interpreted as a valuation $\rmmng{\Gamma}$, the soundness of this rule will guarantee that $\rmmng{\Gamma}$ is a prefix point of $\mfunc_{P:\Delta}$, and hence is an over-approximation of $\mmng{P}$.

\begin{example}
Using \TProg{}\! the program in Example \ref{ex:iter-horn-program} can be assigned the type environment $\Gamma_I$ from the introduction.
For example, the type of $\Iter$ can be justified from the following judgements.
First, for the subterms $n \leq 0$ and $m = 0$ in the body of $\Iter{}$ we can apply the \TConstraint rule to immediately derive the judgements:
\[
  \Gamma' \types n \leq 0 : \boolty{n \leq 0} \quad\text{ and }\quad \Gamma' \types m = s : \boolty{m = s}
\]
in which $\Gamma'$ is the type environment $\Gamma,\, f\!\!:\dto{x}{\intty}{\dto{y}{\intty}{\dto{z}{\intty}{\boolty{0 < x \implies y < z}}}},\, s\!\!:\intty,\, n\!\!:\intty,\, m\!\!:\intty$.
Then using \TAnd{}\!, we can derive the judgement:
$
  \Gamma' \types n \leq 0 \wedge m = s : \boolty{n \leq 0 \wedge m = s}
$.
Moreover, since also $\mathsf{ZLA} \models n \leq 0 \wedge m = s \implies 0 \leq s \implies n \leq m$, it follows from the subsumption rule \TSub{}\! that the judgement:
\begin{equation}
  \Gamma' \types n \leq 0 \wedge m = s : \boolty{0 \leq s \implies n \leq m}
\end{equation}
is also derivable.
Next, consider the subterm $\term{\Iter{} f s (n-1) p}$.
Observe that $\types \boolty{z = x + y} \subtype \boolty{0 < x \implies y < z}$ so, using \TVar and \TSub we can assign to $f$ the type
\[
  \dto{x}{\intty}{\dto{y}{\intty}{\dto{z}{\intty}{\boolty{0 < x \implies y < z}}}}
\]
Hence, it follows from the \TAppR{}\! and \TAppI{}\! rules that we can derive the judgement:
$
  \Gamma' \types \term{\Iter{} f s (n-1) p} : \boolty{0 \leq s \implies n-1 \leq p}
$.
To the term $\term{f n p m}$ we can assign the type $\boolty{0 < n \implies p < m}$.
Hence, by \TAnd{}\! and then \TExists{}\! we can conclude the judgement:
\begin{equation}
  \Gamma' \types \exists p.\,0 < n \wedge \term{\Iter{} f s (n-1) p} \wedge{} \term{f n p m} : \boolty{0 \leq s \implies n \leq m}
\end{equation}
since $\mathsf{ZLA} \models (\exists p.\, 0 < n \wedge (0 \leq s \implies n-1 \leq p) \wedge (0 < n \implies p < m)) \implies 0 \leq s \implies  n \leq m$.
Using (1), (2) and \TOr{} we can therefore derive an overall type for the body of the definition of $\Iter$ as $\boolty{0 \leq s \implies n \leq r}$.
The desired type follows from applications of \TAbsI{} and \TAbsR{}\!.
\end{example}

\subsection{Semantics}
We ascribe two meanings to types.
The first is the usual semantics in which types are some kind of set and, as is typical, such sets have the structure of an order ideal.
The second is specific to our setting and exploits the fact that every type refines a relational sort.
The second semantics assigns to each type a specific relation.
This makes later developments, such as the demonstration of the soundness of type assignment, simpler and also makes a link to the notion of symbolic model from the first-order case.

\paragraph{Ideal semantics of types.}
The \emph{ideal semantics} of refinement types is defined so as to map a well-formed typing sequent $\Delta \types T :: \rho$ and appropriate valuation $\alpha$ to a subset $\mmng{T}(\alpha) \subseteq \mmng{\rho}$.
\[
\begin{array}{rcl}
  \mmng{\Delta \sorts \boolty{\phi} :: \boolsort}(\alpha) &=& \{0, \smng{\Delta \types \phi : \boolsort}(\alpha) \} \\
  \mmng{\Delta \sorts \dto{x}{\iota}{T} :: \iota \to \rho}(\alpha) &=& \{ r  \mid \forall n \in A_\iota.\; r(n) \in \mmng{\Delta,x:\iota \sorts T :: \rho}(\alpha[x \mapsto n])\} \\
  \mmng{\Delta \sorts T_1 \to T_2 :: \rho_1 \to \rho_2}(\alpha) &=& \{ f  \mid \forall r \in \mmng{\Delta \types T_1 :: \rho_1}(\alpha).\, f(r) \in \mmng{\Delta \sorts T_2 :: \rho_2}(\alpha)\} \\
\end{array}
\]
We are now in a position to make precise the remark following the definition of type assignment.

\begin{lemma}\label{lem:boolty-meaning}
$\mmng{G}(\alpha) \in \mmng{\boolty{\phi}}(\alpha)\;$ iff $\;\alpha \models G \implies \phi$.
\end{lemma}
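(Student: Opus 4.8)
The plan is to reduce the statement to a one-line fact about the two-element lattice $\mathbbm{2}$, after unfolding both sides against the definitions. Fix the monotone valuation $\alpha \in \mmng{\Delta}$. Since $G$ has propositional sort $\boolsort$, its monotone meaning is an element $b \coloneqq \mmng{G}(\alpha) \in \mmng{\boolsort} = \mathbbm{2}$. For the left-hand side I would unfold the ideal semantics of the refinement type, obtaining $\mmng{\boolty{\phi}}(\alpha) = \{0, \smng{\phi}(\alpha)\}$. Because constraints are interpreted identically in the two frames (the constraint clause of Figure \ref{fig:mono-term-semantics} sets $\mmng{\phi}(\alpha) = \smng{\phi}(\alpha)$), I may write $p \coloneqq \smng{\phi}(\alpha) = \mmng{\phi}(\alpha) \in \mathbbm{2}$, so that the membership $\mmng{G}(\alpha) \in \mmng{\boolty{\phi}}(\alpha)$ is exactly $b \in \{0, p\}$.

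The crux is then the elementary observation that, for all $b, p \in \mathbbm{2}$, one has $b \in \{0, p\}$ if and only if $b \leq p$. I would verify this by the two cases on $p$: when $p = 0$ the set $\{0, p\}$ collapses to $\{0\}$, so both conditions say $b = 0$; when $p = 1$ the set is all of $\mathbbm{2}$, and both $b \in \mathbbm{2}$ and $b \leq 1$ hold unconditionally. Hence the left-hand side is equivalent to the inequality $\mmng{G}(\alpha) \leq \smng{\phi}(\alpha)$.

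For the right-hand side I would read $\alpha \models G \implies \phi$ in the monotone sense appropriate to this section, namely that $\mmng{G}(\alpha) = 1$ entails $\mmng{\phi}(\alpha) = 1$; since $\mmng{\phi}(\alpha) = \smng{\phi}(\alpha) = p$, this is precisely $b \leq p$. Chaining these three equivalences together closes the argument.

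Because the whole proof is this unfolding plus a case split in $\mathbbm{2}$, there is no substantial obstacle. The only point demanding care — and the one I would flag explicitly — is fixing the reading of the satisfaction relation on the right-hand side: since $\alpha$ is a \emph{monotone} valuation, $G$ must be interpreted through $\mmng{\cdot}$ rather than $\smng{\cdot}$, and the non-monotone connective $\implies$ in $G \implies \phi$ should be understood at the meta-level as the entailment $\mmng{G}(\alpha) \leq \mmng{\phi}(\alpha)$. Once this reading is fixed, the identification $\mmng{\phi} = \smng{\phi}$ on constraints does all the remaining work.
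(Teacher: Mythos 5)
Your proof is correct, and it coincides with what the paper intends: the paper states Lemma~\ref{lem:boolty-meaning} without proof, treating it as immediate from the definitions, and your argument is exactly that definitional unfolding — $\mmng{\boolty{\phi}}(\alpha) = \{0,\smng{\phi}(\alpha)\}$, membership in $\{0,p\}$ equals $\leq p$ in $\mathbbm{2}$, and $\models$ read monotonically with $\implies$ at the meta-level. Your explicit flagging of how to read $\alpha \models G \implies \phi$ for a monotone valuation $\alpha$ (since $\implies$ is not interpretable in the monotone frame) is the right and only delicate point, and it matches how the lemma is invoked in the paper's soundness proof.
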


\noindent
The soundness of type assignment will guarantee that, additionally, $\Gamma \types G : \boolty{\phi}$ implies that $\mmng{G}(\alpha) \in \mmng{\boolty{\phi}}(\alpha)$ for all appropriate $\alpha$.

\paragraph{Relational semantics of types.}
We will also consider a monotone \emph{relational} semantics of types.
Fix a sort environment $\Delta$.  Given an appropriate first-order valuation $\alpha$,  we can associate relations $\rmmng{\Delta \sorts T :: \rho}(\alpha) \in \mmng{\rho}$ to each well formed type $\Delta \sorts T : \rho$.
\[
\begin{array}{rcll}
  \rmmng{\Delta \types \boolty{\phi} :: \boolsort}(\alpha) &=& \smng{\Delta \types \phi : \boolsort}(\alpha) &\\
  \rmmng{\Delta \types \dto{x}{\iota}{T} :: \iota \to \rho}(\alpha)(n) &=& \rmmng{\Delta, x:\iota \sorts T :: \rho}(\alpha[x \mapsto n]) &\\
  \rmmng{\Delta \types T_1 \to T_2 :: \rho_1 \to \rho_2}(\alpha)(r) &=&
      \rmmng{\Delta \sorts T_2 :: \rho_2}(\alpha) & \text{if $r \subseteq \rmmng{\Delta \sorts T_1 :: \rho_1}(\alpha)$} \\
  \rmmng{\Delta \types T_1 \to T_2 :: \rho_1 \to \rho_2}(\alpha)(r) &=& \rmmng{ \epsilon \sorts \toptype{\rho_2} :: \rho_2}(\emptyset) & \text{otherwise} \\
\end{array}
\]
It follows straightforwardly from the definitions that  $\rmmng{T}(\alpha)$ is a monotone relation.

\paragraph{Symbolic models.}
Additionally, we can consider a type environment under a relational interpretation.  We define, for all $x \in \dom(\Delta)$, $\rmmng{\sorts \Gamma :: \Delta}(x) = \rmmng{\Gamma(x)}(\emptyset)$.  This object is a monotone valuation and hence, following the results of Section \ref{sec:red-to-eval}, this semantics makes precise the idea of a type environment $\Gamma$ as a (candidate) \emph{symbolic model}, that is, a finite representation of a model designed for the purposes of automated reasoning.

\paragraph{Relationship.}
The ideal and the relational semantics are closely related: the relational interpretation $\rmmng{T}(\alpha)$ of a type $T$ is the largest element of the ideal $\mmng{T}(\alpha)$.
Thus we see also that $\mmng{T}(\alpha)$ is principal.
For each relation $r \in \mmng{\rho}$ let us write $\pideal_\rho r$ for the downward closure of $\{r\}$ in $\mmng{\rho}$ (we will typically omit the subscript).

\begin{lemma}\label{lem:types-are-pideals}
For all types $\Delta \sorts T :: \rho$ and $\alpha \in \smng{\Delta}$, $\mmng{T}(\alpha) = \pideal\rmmng{T}(\alpha)$.
\end{lemma}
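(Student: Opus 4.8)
The plan is to proceed by structural induction on the derivation of $\Delta \sorts T :: \rho$ (equivalently, on the structure of the type $T$), proving for every appropriate $\alpha$ that $\mmng{T}(\alpha) = \pideal \rmmng{T}(\alpha)$, where the principal ideal is computed in the complete lattice $\mmng{\rho}$. Throughout I will use that the order $\subseteq_\rho$ on $\mmng{\rho}$ is pointwise, so that $r \in \pideal \rmmng{T}(\alpha)$ unfolds directly to the inequality $r \subseteq \rmmng{T}(\alpha)$, and that both semantics range only over \emph{monotone} relations, so no non-monotone witnesses can intervene.

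For the base case $T = \boolty{\phi}$ I would simply compute both sides: the ideal semantics gives $\mmng{\boolty{\phi}}(\alpha) = \{0, \smng{\phi}(\alpha)\}$, whereas $\pideal \rmmng{\boolty{\phi}}(\alpha) = \{b \in \mathbbm{2} \mid b \leq \smng{\phi}(\alpha)\}$; a case split on whether $\smng{\phi}(\alpha)$ is $0$ or $1$ shows the two sets coincide. For the dependent case $T = \dto{x}{\iota}{T'}$, both sides unfold to a condition quantified over all $n \in A_\iota$: membership in the ideal says $r(n) \in \mmng{T'}(\alpha[x \mapsto n])$ for every $n$, while $r \subseteq \rmmng{\dto{x}{\iota}{T'}}(\alpha)$ says $r(n) \subseteq \rmmng{T'}(\alpha[x \mapsto n])$ for every $n$ (using that the order at sort $\iota \to \rho$ is pointwise in $n$). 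The induction hypothesis applied at each $n$ equates these two conditions.

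The arrow case $T = T_1 \to T_2$ is the crux. First I would establish an auxiliary fact, itself by a short induction on $\rho$, that $\rmmng{\toptype{\rho}}(\emptyset)$ is the greatest element of $\mmng{\rho}$: at $\boolsort$ it is $1$; the dependent and arrow clauses propagate topness, noting that \emph{both} branches of the relational semantics of $\toptype{\rho_1 \to \rho_2} = \bottype{\rho_1} \to \toptype{\rho_2}$ return $\rmmng{\toptype{\rho_2}}(\emptyset)$. Consequently, in the two-clause definition of $\rmmng{T_1 \to T_2}(\alpha)$ the ``otherwise'' branch returns the top of $\mmng{\rho_2}$, so the requirement $f(r) \subseteq \rmmng{T_1 \to T_2}(\alpha)(r)$ is vacuous whenever $r \not\subseteq \rmmng{T_1}(\alpha)$. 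Hence $f \subseteq \rmmng{T_1 \to T_2}(\alpha)$ reduces to: for every $r$ with $r \subseteq \rmmng{T_1}(\alpha)$, $f(r) \subseteq \rmmng{T_2}(\alpha)$. Invoking the induction hypothesis for $T_1$ (so $r \in \mmng{T_1}(\alpha)$ iff $r \subseteq \rmmng{T_1}(\alpha)$) and for $T_2$ (so $f(r) \in \mmng{T_2}(\alpha)$ iff $f(r) \subseteq \rmmng{T_2}(\alpha)$), this is exactly the membership condition $f \in \mmng{T_1 \to T_2}(\alpha)$. I would spell out both inclusions explicitly: for $\subseteq$, given $f$ in the ideal semantics, on inputs below $\rmmng{T_1}(\alpha)$ its value lands in $\mmng{T_2}(\alpha)$ and hence below $\rmmng{T_2}(\alpha)$, while on all other inputs the bound is top; for $\supseteq$, given $f \subseteq \rmmng{T_1 \to T_2}(\alpha)$, any $r \in \mmng{T_1}(\alpha)$ satisfies $r \subseteq \rmmng{T_1}(\alpha)$, so $f(r) \subseteq \rmmng{T_2}(\alpha)$, i.e. $f(r) \in \mmng{T_2}(\alpha)$.

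The main obstacle I anticipate is precisely this arrow case, where one must reconcile the two different shapes of definition: the ideal semantics quantifies only over arguments $r$ that already lie in $\mmng{T_1}(\alpha)$, whereas the relational semantics is a total function on $\mmng{\rho_1}$ that must be assigned some value on arguments outside $\mmng{T_1}(\alpha)$. The design choice to send those arguments to $\rmmng{\toptype{\rho_2}}(\emptyset)$ is exactly what makes the principal ideal of the relational semantics agree with the ideal semantics, and the auxiliary topness lemma is the linchpin ensuring that the out-of-domain branch imposes no spurious constraint.
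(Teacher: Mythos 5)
Your proof is correct and follows essentially the same route as the paper's: structural induction on the type derivation, with the same case analysis and with the arrow case resolved by the same key observation, namely that the relational semantics sends arguments $r \not\subseteq \rmmng{T_1}(\alpha)$ to a top element so that the constraint there is vacuous. The differences are presentational only — you prove the set equality as a direct chain of equivalences where the paper separately establishes downward closure, membership of $\rmmng{T}(\alpha)$, and its maximality, and you make explicit (as an auxiliary induction on $\rho$) the fact that $\rmmng{\toptype{\rho}}(\emptyset)$ is the greatest element of $\mmng{\rho}$, a fact the paper's proof uses but dismisses as immediate.
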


\subsection{Soundness}

We return to the discussion of type assignment by defining semantic judgements paralleling those for the type system.
First let us make a preliminary definition.
Let $\sorts \Gamma :: \Delta$, then we say that valuation $\alpha \in \mmng{\Delta}$ \emph{satisfies} $\Gamma$ just if, for all $x \in \dom(\Delta)$, $\alpha(x) \in \mmng{\Gamma(x)}$.
We write the following judgement forms:
\[
  \models T_1 \subtype T_2 \qquad\qquad\qquad \Gamma \models G : T \qquad\qquad\qquad \models P : \Gamma
\]
In the first, we assume that $\Delta \sorts T_1 : \rho$ and $\Delta \sorts T_2 : \rho$ are types of the same sort.
Then the meaning of $\models T_1 \subtype T_2$ is that, for all $\alpha \in \mmng{\Delta}$, $\mmng{T_1}(\alpha) \subseteq \mmng{T_2}(\alpha)$.
In the second, we assume that $\sorts \Gamma :: \Delta$, $\Delta \sorts G : \rho$ and $\Delta \sorts T :: \rho$.
Then the meaning of $\Gamma \models G : T$ is that, for all valuations $\alpha \in \mmng{\Delta}$, if $\alpha$ satisfies $\Gamma$ then $\mmng{G}(\alpha) \in \mmng{T}(\alpha)$.
Finally, in the third, we assume that $\sorts P : \Delta$ and $\sorts \Gamma :: \Delta$.
Then the meaning of $\models P : \Gamma$ is that $\mmng{P}$ satisfies $\Gamma$.

\begin{theorem}[Soundness of Type Assignment]\label{thm:ty-ass-soundness}
\leavevmode \\
\begin{enumerate}[(i)]
  \item If \;$\types T_1 \subtype T_2$\; then \;$\models T_1 \subtype T_2$.\qquad
  \item If \;$\Gamma \types G : T$\; then \;$\Gamma \models G : T$.\qquad
  \item If \;$\types P : \Gamma$\; then \;$\models P : \Gamma$.
\end{enumerate}
\end{theorem}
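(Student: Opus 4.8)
The three statements are proved in sequence, each drawing on its predecessors: (i) is an independent structural induction, its conclusion is invoked in the subsumption case of (ii), and (ii) is invoked in the single rule defining (iii). Throughout, interpretation is implicitly relative to a fixed model $A$ of $\atheory$, so that every semantic entailment $\atheory \models \phi \implies \psi$ is available as the pointwise fact that $\smng{\phi}(\alpha) \leq \smng{\psi}(\alpha)$ for all $\alpha$.

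For part (i) I would induct on the derivation of $\types T_1 \subtype T_2$. The base case $\boolty{\phi} \subtype \boolty{\psi}$ unfolds to $\{0, \smng{\phi}(\alpha)\} \subseteq \{0, \smng{\psi}(\alpha)\}$, which is immediate from $\atheory \models \phi \implies \psi$. The dependent case $\dto{x}{\iota}{T_1} \subtype \dto{x}{\iota}{T_2}$ follows by feeding each $n \in A_\iota$ through the induction hypothesis at the shifted valuation $\alpha[x\mapsto n]$. The function case $T_1 \to T_2 \subtype T_1' \to T_2'$ is the usual contravariant argument: given $f \in \mmng{T_1 \to T_2}(\alpha)$ and $r' \in \mmng{T_1'}(\alpha)$, the hypothesis $\models T_1' \subtype T_1$ places $r'$ in $\mmng{T_1}(\alpha)$, whence $f(r') \in \mmng{T_2}(\alpha) \subseteq \mmng{T_2'}(\alpha)$ by the hypothesis $\models T_2 \subtype T_2'$.

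Part (ii) is an induction on the derivation of $\Gamma \types G : T$. The rules producing a propositional type, namely \TConstraint, \TExists, \TAnd and \TOr, are cleanest to discharge through Lemma \ref{lem:boolty-meaning}, which recasts $\mmng{G}(\alpha) \in \mmng{\boolty{\phi}}(\alpha)$ as the entailment $\alpha \models G \implies \phi$; each side condition then reduces to an elementary calculation in $\mathbbm{2}$ (for \TExists, for instance, one uses that the witnessing individual validates $\phi$ and that $\psi$ has no free occurrence of $x$). The variable, abstraction and application rules are handled by directly unfolding the ideal semantics: \TVar is exactly the hypothesis that $\alpha$ satisfies $\Gamma$; \TAbsR and \TAbsI use that extending $\alpha$ with a value for the bound variable still satisfies the extended environment, noting that relational bindings leave the type semantics unchanged since types depend only on individual variables; and \TAppR, \TAppI follow by applying the function-typed denotation to the argument-typed denotation. \TSub is where part (i) enters. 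The one step requiring an auxiliary result is \TAppI, which demands the semantic substitution lemma
\[
  \mmng{T[N/x]}(\alpha) = \mmng{T}(\alpha[x \mapsto \smng{N}(\alpha)]),
\]
proved by a side induction on $T$ whose base case $\boolty{\phi}$ is precisely the constraint-language substitution lemma $\smng{\phi[N/x]}(\alpha) = \smng{\phi}(\alpha[x\mapsto \smng{N}(\alpha)])$. I expect this substitution lemma, together with the bookkeeping that relational arguments leave type denotations invariant, to be the main technical obstacle; the remaining cases are routine.

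Finally, part (iii) concerns the single rule \TProg. Writing $P$ as $x_1 = G_1, \ldots, x_m = G_m$, I would take the candidate valuation to be the symbolic model $\rmmng{\Gamma}$ with $\rmmng{\Gamma}(x_i) = \rmmng{\Gamma(x_i)}(\emptyset)$. By Lemma \ref{lem:types-are-pideals} this is the greatest element of the principal ideal $\mmng{\Gamma(x_i)}(\emptyset)$, so $\rmmng{\Gamma}$ satisfies $\Gamma$. Applying part (ii) to each premise $\Gamma \types G_i : \Gamma(x_i)$ at the valuation $\rmmng{\Gamma}$ gives $\mmng{G_i}(\rmmng{\Gamma}) \in \mmng{\Gamma(x_i)}(\emptyset)$, i.e.\ $\mfunc_{P:\Delta}(\rmmng{\Gamma})(x_i) \subseteq \rmmng{\Gamma}(x_i)$; hence $\rmmng{\Gamma}$ is a prefixed point of $\mfunc_{P:\Delta}$. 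Since $\mmng{P} = \mu\mfunc_{P:\Delta}$ is the least prefixed point, $\mmng{P} \subseteq \rmmng{\Gamma}$, and because each $\mmng{\Gamma(x_i)}(\emptyset)$ is downward closed with greatest element $\rmmng{\Gamma}(x_i)$, it follows that $\mmng{P}(x_i) \in \mmng{\Gamma(x_i)}$ for every $i$, which is exactly $\models P : \Gamma$.
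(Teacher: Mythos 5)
Your proposal is correct and follows essentially the same route as the paper's proof: a structural induction for (i), an induction on the typing derivation for (ii) that discharges the propositional rules via Lemma \ref{lem:boolty-meaning} and the subsumption rule via (i), and for (iii) the observation that the symbolic model $\rmmng{\Gamma}$ satisfies $\Gamma$ (by Lemma \ref{lem:types-are-pideals}), is therefore a prefixed point of $\mfunc_{P:\Delta}$ by (ii), and dominates the least prefixed point $\mmng{P}$. The only difference is that you make explicit the semantic substitution lemma $\mmng{T[N/x]}(\alpha) = \mmng{T}(\alpha[x \mapsto \smng{N}(\alpha)])$ needed for \TAppI, which the paper's proof invokes silently.
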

\begin{proof}[Proof sketch]
The first and second claims are proven by straightforward inductions on the relevant judgements.
In the third case it follows from (ii) and the definition of \TProg{} that $\rmmng{\Gamma}$ satisfies $\Gamma$ and is therefore a prefixed point of $\mfunc_{P:\Delta}$.
Since $\mmng{P}$ is the least such, the result follows.
\end{proof}

This allows for a sound approach to solving systems of higher-order constrained Horn clauses.
Given an instance of the problem $\abra{\Delta,D,G}$, we construct the corresponding logic program $\abra{\Delta,P_D,G}$.
If there is a type environment $\sorts \Gamma :: \Delta$ such that $\types P_D : \Gamma$ and $\Gamma \types G : \boolty{\falsetm}$ then it follows from Theorem \ref{thm:ty-ass-soundness} that, for each model $A$ of the background theory, $\rmmng{\Gamma}$ (with constants interpreted with respect to $A$) is a valuation that satisfies $D$ but refutes $G$.
The approach is, however, incomplete.
Consider the following instance, adapted from an example of \cite{unno-et-al-popl2013} showing the incompleteness of refinement type systems for higher-order program verification.
\begin{example}\label{ex:incompleteness}
The higher-order constrained Horn clause problem $\abra{\Delta,D,G}$ specified by, respectively:
\[
  \begin{array}{ccc}
    \begin{array}{l}
      \mathit{Leq} : \intsort \to \intsort \to \boolsort \\
      \mathit{Holds} : (\intsort \to \boolsort) \to \intsort \to \boolsort
    \end{array}
  &
  \begin{array}{l}
    \forall i\ j.\; i \leq j \implies \mathit{Leq}\ i\ j \\
    \forall p\ n.\; p\ n \implies \mathit{Holds}\ p\ n
  \end{array}
  &
  \exists i.\, \mathit{Holds}\ (\mathit{Leq}\ i)\ (i-1)
  \end{array}
\]
is solvable in the theory of integer linear arithmetic ($\mathsf{ZLA}$), since there is no integer $i$ smaller than its predecessor.
However, the logic program $P_D$ defined as $\mathit{Leq} = \abs{ij}{i \leq j},\; \mathit{Holds} = \abs{pn}{pn}$ is not typable.
This is because if there were a type environment $\sorts \Gamma :: \Delta$ for which $\types P_D : \Gamma$ and $\Gamma \types G : \boolty{\falsetm}$, it would have shape:
\[
    \mathit{Leq} : \dto{i}{\intty}{\dto{j}{\intty}{\boolty{\chi}}} \qquad
    \mathit{Holds} : (\dto{x}{\intty}{\boolty{\phi}}) \to \dto{n}{\intty}{\boolty{\psi}}
\]
for some formulas $i\!\!:\intsort,j\!\!:\intsort \sorts \chi : o$ and $x\!\!:\intsort \sorts \phi : o$ and $n\!\!:\intsort \sorts \psi : o$.
These formulas would necessarily satisfy the following conditions:
\begin{inparaenum}
  \item $\mathsf{ZLA} \models i \leq j \implies \chi$,
  \item $\mathsf{ZLA} \models \phi[n/x] \implies \psi$,
  \item $\mathsf{ZLA} \models \chi[x/j] \implies \phi$, and
  \item $\mathsf{ZLA} \models (\exists i.\,\phi[i-1/n]) \implies \falsetm$,
\end{inparaenum}
implied by the definition of the type system.
It follows from (1) and (3) that also $\mathsf{ZLA} \models i \leq x \implies \phi$.
Since $\phi$ does not contain $i$ freely, it follows that this is equivalent to $\mathsf{ZLA} \models (\exists i.\, i \leq x) \implies \phi$, itself equivalent to $\mathsf{ZLA} \models \phi$.  However, this contradicts (4), so there can be no such type assignment.
\end{example}
\noindent
It seems likely that, to obtain a sound and (relatively) complete approach, one could adapt the development described in \cite{unno-et-al-popl2013}, at the cost of complicating the system a little.

\subsection{Automation}\label{sec:automation}
This approach to solving the higher-order constrained Horn clause problems relies on finding a refinement type environment to act as a witness.
It is well understood that, for refinement type systems following a certain pattern, typability (i.e. the existence of a type environment satisfying some properties) can be reduced to first-order constrained Horn clause solving \cite{terauchi-popl2010,BjornerMR12,jhala-et-al-cav2011}.
Hence, the search for a witnessing environment can be automated using standard techniques
\ifsupp
(an explicit definition is included in Appendix~\ref{sec:apx-inference}).
\else
(an explicit definition is included in the %
supplementary materials).
\fi

In order to check that the approach is feasible, we have implemented a prototype tool and used it to automatically verify a few small systems of clauses.
The tool is written in Haskell and uses the Parsec library for parsing input in a mathematical syntax using unicode characters or ascii equivalents. Output is either a similar format or conforms to SMT-LIB in a manner that lets Z3 \cite{DeMoura:2008:ZES:1792734.1792766} solve the system of first order clauses.

The test cases are obtained from that subset of the functional program verification problems given in \cite{kobayashi-et-al-pldi2011} which do not make local assertions, re-expressed as higher-order Horn clause problems according to the method sketched by example in Section \ref{sec:intro}.
The prototype\footnote{\changed[lo]{A basic web interface to the prototype tool is available at \url{http://mjolnir.cs.ox.ac.uk/horus}}.
}
and the exact suite of test cases is available at \url{http://github.com/penteract/HigherOrderHornRefinement}.
In all
but one
of the examples the prototype takes around 0.01s to transform the system of clauses and Z3 takes around 0.02s to solve the resulting first-order system.
The remaining example, named \texttt{neg}, suffers from our choice of refinement type system.
It is solved by the system described in \emph{loc.~cit.}, which allows type-level intersection, but cannot be solved using our approach.

\subsection{Expressibility of type assertions}

It is possible to express the complement of a type using a goal formula.  
For example, according to the forgoing semantics, the type 
\[
T \coloneqq (\dto{x}{\iota}{\dto{y}{\iota}{\boolty{x \equiv y \mathrel{\mathsf{mod}} 2}}}) \to \dto{z}{\iota}{\boolty{z \neq 0}}
\]
represents the set $\mmng{T}(\emptyset)$ of all monotone relations of sort $(\iota \to \iota \to o) \to \iota \to o$ that relate all parity-preserving inputs to non-zero outputs.
The complement of this set of relations can be defined (in the monotone semantics) by the goal term $\com(T) \coloneqq \abs{z}{z (\abs{xy}{x \equiv y \mathrel{\mathsf{mod}} 2})\ 0}$
which classifies such relations according to whether or not they relate the particular parity-preserving relation $\abs{xy}{x \equiv y \mathrel{\mathsf{mod}} 2}$ to $0$.
The mode of definition we have in mind is the following.

\paragraph{Goal definability.}
We say that a relation $r \in \mmng{\rho}$ is \emph{goal term definable} (more briefly \emph{$G$-definable}) just if there exists a closed goal term $\types H : \rho$ and $\mmng{H}(\emptyset) = r$.  We say that a class of relations is $G$-definable just if the characteristic predicate of the class is $G$-definable.\\

Returning to the example, if a given relation $r$ does not relate $\abs{xy}{x \equiv y \mathrel{\mathsf{mod}} 2}$ to $0$, then $r$ is not a member of $\mmng{T}(\emptyset)$.
On the other hand, if $r$ is not a member $\mmng{T}(\emptyset)$, then there is some parity-preserving relation $s$ that is related by $r$ to $0$.
Since $r$ is monotone, it follows that $r$ also relates all relations larger than $s$ to $0$.
Since $\abs{xy}{x \equiv y \mathrel{\mathsf{mod}} 2}$ is the  largest of the parity-preserving relations, it follows that $r$ relates $\abs{xy}{x \equiv y \mathrel{\mathsf{mod}} 2}$ to $0$.
Hence the set 
\[
\{ r \in \mmng{(\iota \to \iota \to o) \to \iota \to o} \mid r \notin \mmng{T}(\emptyset) \}
\] 
is $G$-definable by the goal term $\com(T)$.

\paragraph{Definability of type complements and the relational semantics.}
Any $G$-definition $\com(T)$ of the complement of $T$ is intertwined with a $G$-definition of the largest element of $T$: to understand when a relation $r$ is not in the type $T_1 \to T_2$ you must understand when there exists a relation $s \in \mmng{T_1}(\emptyset)$ such that $r(s)$ is not in $\mmng{T_2}(\emptyset)$, i.e. when the property $\mmng{\com(T)}(\emptyset)(r(s))$ holds.
However, since $\mmng{\com(T)}(\emptyset)$ and $r$ are monotone, if $\mmng{\com(T)}(\emptyset)(r(s))$ holds then  $\mmng{\com(T)}(r(s'))$ holds for any $s' \supseteq s$.
Consequently, there exists an $s \in \mmng{T_1}(\emptyset)$ satisfying the property iff the largest element of $\mmng{T_1}(\emptyset)$ satisfies the property.
This leads to the definitions by mutual induction on type:
\[
  \begin{array}{cc}
    \arraycolsep=1.4pt
    \begin{array}{rcl}
      \com(\boolty{\phi}) &=& \abs{z}{z \wedge \neg\phi} \\
      \com(\dto{x}{\iota}{T}) &=& \abs{z}{\exists  x\!:\!\iota.\,\com(T)\,(z\,x)} \\
      \com(T_1 \to T_2) &=& \abs{z}{\com(T_2)\,(z\,\lar(\falsetm)(T_1))}
    \end{array}
    &
    \arraycolsep=1.4pt
    \begin{array}{rcl}
      \lar(G)(\boolty{\phi}) &=& G \vee \phi \\
      \lar(G)(\dto{x}{\iota}{T}) &=& \abs{z}{\lar(G)(T[z/x])} \\
      \lar(G)(T_1 \to T_2) &=& \abs{z}{\lar(G \vee{} \com(T_1)\,z)(T_2)}
    \end{array}
  \end{array}
\]
in which $\com(T)$ is a $G$-definition of the complement of the class $\mmng{T}(\emptyset)$.
The definition of $\lar$ is parametrised by a goal formula representing domain conditions that are accumulated during the analysis of function types.
It follows that $\lar(\falsetm)(T)$ is a $G$-definition of the largest element of the class $\mmng{T}(\emptyset)$.

\begin{lemma}\label{lem:G-defns}
For any closed type $\sorts T :: \rho$:
\begin{enumerate}[(i)]
  \item The goal term $\com(T)$ is a $G$-definition of the class $\{r \in \mmng{\rho} \mid r \notin \mmng{T}(\emptyset)\}$.
  \item The goal term $\lar(\falsetm)(T)$ is a $G$-definition of the relation $\rmmng{T}(\emptyset)$.
\end{enumerate}
\end{lemma}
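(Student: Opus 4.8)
The plan is to prove both clauses simultaneously by structural induction on $T$, after first strengthening them to statements about an arbitrary first-order valuation $\alpha$ (and, for $\lar$, an arbitrary goal-formula accumulator $G$). Concretely, I would establish, for every well-formed type $\Delta \types T :: \rho$ and every $\alpha \in \smng{\Delta}$:
\begin{enumerate}[(a)]
  \item for all $r \in \mmng{\rho}$, $\;\mmng{\com(T)}(\alpha)(r) = 1$ iff $r \not\subseteq \rmmng{T}(\alpha)$; and
  \item for every goal formula $\Delta \types G : \boolsort$ none of whose free variables are individual variables bound inside $T$ (which may always be arranged by $\alpha$-renaming), $\mmng{\lar(G)(T)}(\alpha) = \rmmng{\toptype{\rho}}(\emptyset)$ when $\mmng{G}(\alpha) = 1$, and $\mmng{\lar(G)(T)}(\alpha) = \rmmng{T}(\alpha)$ when $\mmng{G}(\alpha) = 0$.
\end{enumerate}
The Lemma is the special case $\alpha = \emptyset$, $G = \falsetm$: clause (ii) is then immediate, and clause (i) follows because, by Lemma \ref{lem:types-are-pideals}, $\mmng{T}(\emptyset) = \pideal \rmmng{T}(\emptyset)$, so that $r \notin \mmng{T}(\emptyset)$ is exactly $r \not\subseteq \rmmng{T}(\emptyset)$. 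Before the induction I would dispatch the routine point that $\com(T)$ and $\lar(G)(T)$ are well-sorted goal terms of sorts $\rho \to \boolsort$ and $\rho$ respectively; this uses that the constraint language is closed under $\neg$ and $\wedge$, since $\neg\phi$ appears in $\com(\boolty{\phi})$ and negation is not itself a goal-term constant.

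For the base type $\boolty{\phi}$ both statements are direct computations: $\mmng{\com(\boolty{\phi})}(\alpha)(b) = \min(b,\,1 - \smng{\phi}(\alpha))$ is $1$ precisely when $b = 1$ and $\smng{\phi}(\alpha) = 0$, i.e. when $b \not\subseteq \rmmng{\boolty{\phi}}(\alpha)$; and $\mmng{\lar(G)(\boolty{\phi})}(\alpha) = \max(\mmng{G}(\alpha),\,\smng{\phi}(\alpha))$ gives the two cases of (b). For the dependent product $\dto{x}{\iota}{T'}$ I would unfold the monotone existential in $\com$ (so that its value is the maximum over $n \in A_\iota$) and the abstraction in $\lar$, reduce $\mmng{z\,x}$ to pointwise application, and apply the inductive hypotheses at $\alpha[x \mapsto n]$; the $\lar$ case additionally needs the substitution identity $\lar(G)(T'[z/x]) = (\lar(G)(T'))[z/x]$, valid since $x \notin \fv(G)$, together with the semantic substitution property. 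Both sides then match the pointwise definition of $\rmmng{\dto{x}{\iota}{T'}}(\alpha)$, and for (b) one uses that $\rmmng{\toptype{\iota \to \rho'}}(\emptyset)$ is constantly $\rmmng{\toptype{\rho'}}(\emptyset)$.

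The main work, and the step I expect to be the real obstacle, is the function type $T_1 \to T_2$. Writing $t := \rmmng{T_1}(\alpha)$, clause (i) unfolds $\com(T_1 \to T_2)$ to $\mmng{\com(T_2)}(\alpha)\big(r(\mmng{\lar(\falsetm)(T_1)}(\alpha))\big)$; by the inductive instance of (b) at $T_1$ with $G = \falsetm$ this inner argument is exactly $t$, so by the inductive (i) at $T_2$ the whole expression is $1$ iff $r(t) \not\subseteq \rmmng{T_2}(\alpha)$. I must then match this against $r \not\subseteq \rmmng{T_1 \to T_2}(\alpha)$. The relational semantics sends each $s$ to $\rmmng{T_2}(\alpha)$ when $s \subseteq t$ and to the universal relation $\rmmng{\toptype{\rho_2}}(\emptyset)$ otherwise, so the inclusion $r \subseteq \rmmng{T_1 \to T_2}(\alpha)$ is vacuous off the domain and reduces to ``$r(s) \subseteq \rmmng{T_2}(\alpha)$ for all $s \subseteq t$''. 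The crux is that, because $r$ is monotone (goal terms denote monotone relations) and $t$ is the greatest such $s$, this is equivalent to the single condition $r(t) \subseteq \rmmng{T_2}(\alpha)$. The same domain/codomain split drives clause (b): evaluating $\lar(G)(T_1 \to T_2)$ at $s$ recurses into $T_2$ with accumulator $G \vee \com(T_1)\,z$, whose value at $s$ is $1$ precisely when $\mmng{G}(\alpha) = 1$ or $s \not\subseteq t$ (using inductive (i) at $T_1$); feeding this into the inductive (b) at $T_2$ reproduces exactly the two branches of $\rmmng{T_1 \to T_2}(\alpha)$, and in the case $\mmng{G}(\alpha) = 1$ one checks that $\rmmng{\toptype{\rho_1 \to \rho_2}}(\emptyset) = \rmmng{\bottype{\rho_1} \to \toptype{\rho_2}}(\emptyset)$ is the constant universal function, matching the ``always $1$'' outcome. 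The two mutually recursive clauses thus close under one simultaneous induction, the only delicate ingredients being the monotonicity collapse just described and careful freshness bookkeeping for the accumulator $G$.
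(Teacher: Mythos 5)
Your proposal is correct and follows essentially the same route as the paper's proof: strengthen both clauses to arbitrary valuations $\alpha$ and an arbitrary accumulator $G$, proceed by mutual induction on the type, and resolve the arrow case by combining monotonicity with the fact (Lemma \ref{lem:types-are-pideals}) that $\rmmng{T_1}(\alpha)$ is the greatest element of $\mmng{T_1}(\alpha)$. Your restatement of the induction hypotheses via the relational semantics and inclusion ($r \not\subseteq \rmmng{T}(\alpha)$, and $\lar$ denoting either $\rmmng{T}(\alpha)$ or the top relation) is an equivalent rephrasing of the paper's membership/pointwise formulation, not a different argument.
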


This gives a sense in which we can express the higher-type property $G : T$ within the higher-order constrained Horn clause framework, namely as the goal formula $\com(T)\ G$ defining the negation of the type assertion (recall that in the higher-order Horn clause problem we aim to refute the negation of property to be proven, expressed as a goal formula).
Furthermore, we can use this result in order to justify an extension of the syntax of higher-order constrained Horn clauses with a new kind of \emph{type-guarded} existential quantification.
This allows us to state goal formulas like $\exists x\!:\!T. G$, i.e. there exists a relation in (the set defined by) refinement type $T$ that moreover satisfies $G$.
The full development is contained in the anonymous supplementary materials.

\section{Related Work}\label{sec:related-work}

\paragraph{Constrained Horn-clause solving in first-order program verification}
Our motivation comes mainly from the use of constrained Horn clauses to express problems in the verification of first-order, imperative programs.
The papers of \citet{BjornerMR12} and \citet{bjorner-et-al-flc2015} argue the case for the approach and provide a good overview of it.
One of the best recommendations for the approach is the selection of highly efficient solvers that are available, such as \cite{grebenschchikov-et-al-TACAS2012,Hoder-et-al-cav2011,gurfinkel-et-al-cav2015}, which we exploit in this work  as part of the automation of our prototype solver for higher-order clauses.

\paragraph{Higher-order logic programming}
Work on higher-order logic programming is typically concerned with programming language design and implementation.
Consequently, one of the main themes of the work discussed in the following is that of (i) finding a good semantics for higher-order logic programming languages, and one of the central criteria for a good semantics is that (ii) it lends itself well to developing techniques for enumerating answers to queries.
In contrast to (i), our work is about a particular satisfiability problem of logic, namely the existence of a model of a formula satisfying certain criteria (modulo a background theory).
In higher-order logic programming \changed[lo]{(e.g.~\citet{nadathur-miller-jacm1990,chen-et-al-jlp1993,CharalambidisHRW13})} it often does not make sense, a priori, to ask about the existence of models because the semantics of the logic programming language is fixed once and for all as part of its definition.
One can ask whether the set of answers to a program query is empty, but to understand this as a logical question about the models of a formula is only possible through a result such as our contribution in Section \ref{sec:red-to-eval}.
In contrast to (ii), our goal is to develop techniques, such as the type system in Section \ref{sec:types}, for the dual problem, namely to show that a Horn formula is satisfiable (recall that we require the existence of a model refuting the goal $G$, and the goal is the negation of a Horn clause $G \implies \falsetm$).
Query answering (unsatisfiability) is recursively enumerable, whereas satisfiability, in particular where there is a background theory, is typically harder.

\lo{Paragraph about intensional higher-order logic programming removed.}

Recent work on extensional semantics for higher-order logic programming started with \citet{Wadge91} and continued with, for example, \citet{CharalambidisHRW13,charalambidis-et-al-tplp2014}.
Wadge was the first to observe that relational variables appearing as arguments in the head of clauses is problematic.
This line of work gives a denotational semantics to higher-order logic programs and so is very closely related to the monotone semantics of logic programs that we use in Section \ref{sec:red-to-eval}, except that there is no treatment of constraint theories (discussed in the following paragraph).
Unlike our work, \citet{CharalambidisHRW13} are very careful to ensure the algebraicity of their domains so that they can build a sophisticated system of query answering based on enumerating compact elements.
If we were to extend our work to encompass a treatment of counterexamples, then we would want to exploit algebraicity in a similar manner.
Their work does not make any connection between their denotational semantics of higher-order logic programs and satisfiability in standard higher-order logic;
it seems likely that our result could be adapted to their setting.

\paragraph{Higher-order constraint logic programming.}
Another way in which all of the foregoing work differs from our own is that, in the work we have mentioned so far, there is no treatment of constraints.
Whilst first-order constraint logic programming is a very well developed area (an old, but good survey is \cite{jaffar-maher-jlp1994}), there is very little existing work on the higher-order extension.
\citet{lipton-nieva-tlca2007} give a Kripke semantics for a $\lambda$Prolog-like language extended with a generic framework for constraints.
In contrast to our work, the underlying higher-order logic is intuitionistic and the precise notion of model is bespoke to the paper.

\paragraph{Interpretations of higher-order logic}
Under the standard semantics, an interpretation of a higher-order theory consists of a choice of universe $A_\iota$ in which to interpret the sort of individuals $\iota$ and an interpretation of  the constants of the theory.
In particular, the interpretation $\smng{\sigma_1 \to \sigma_2}$ of any arrow sort $\sigma_1 \to \sigma_2$ is fixed by the choice of $A_\iota$.
In Henkin (or general) semantics, an interpretation consists of all of the above but, additionally, also a choice of interpretation for each of the infinitely many arrow sorts (under some natural restrictions regarding definability of elements).
For example, there are Henkin interpretations in which the collection $\mng{\intsort \to \boolsort}$ does not contain all sets of integers.
We choose to frame the higher-order constrained Horn clause problem using standard semantics because it is already established in verification.
For example, when monadic second order logic (MSO) is used to express verification problems on transition systems,
the second-order variables range over all sets of the states\footnote{Or all finite sets of states in the case of weak MSO, but in both cases the domain is fixed by the setting.  A Henkin semantics formulation of the problem would allow for a solution to the satisfiability problem to specify its own domain for the second order variables.}.
Consequently, the standard semantics seems the most appropriate starting point for work on higher-order constrained Horn clauses in verification.

\paragraph{Automated verification of functional programs}
Two of the most well-studied approaches to the automated verification of functional programs are based on higher-order model checking \cite{ong-lics2006,kobayashi-ong-lics2009,kobayashi-jacm2013} and refinement types \cite{rondon-et-al-pldi2008,vazou-et-al-icfp2015,jhala-et-al-cav2011,zhu-jagannathan-vmcai2013,unno-et-al-popl2013}.
A method to verify higher-order programs more directly using first-order constrained Horn clauses has been suggested by \citet*{bjorner-et-al-arxiv2013}.

In higher-order model checking, the problem of verifying a higher-order program is reformulated as the problem of verifying a property of the tree generated by a higher-order recursion scheme, see e.g. \cite{kobayashi-et-al-pldi2011}.
The approach has the advantage of being based around a natural, decidable problem, which is an attractive target for the construction of efficient solvers \cite{ramsay-et-al-popl2014,broadbent-et-al-icfp2013,broadbent-kobayashi-csl2013}.
By contrast, we propose to investigate higher-order program verification based around the higher-order constrained Horn clause problem.
Although this problem is generally undecidable, it has the advantage of being able to express (background theory) constraints directly and so has the potential to be a better setting in which to search for higher-order program invariants.

In approaches based on refinement types,
a type system is used to the reduce the problem of finding an invariant for the higher-order program, to finding a number of first-order invariants of the ground-type data at certain program points, which can often be expressed as first-order constrained Horn clause solving.
As exemplified by the LiquidHaskell system of \citet{Vazou:2014:RTH:2628136.2628161}, one advantage of using a type system directly is that it can very naturally encompass all the features of modern programming languages.
We have not addressed the problem of how best to frame a higher-order program verification problem using higher-order clauses (excepting our motivating sketch in the introduction) but it does not seem as clear as for approaches using refinement types.
On the other hand, the reduction to first-order invariants that underlies refinement type approaches has a cost in expressibility.
In principle, it is possible to overcome this deficiency, for example by employing types in which higher-order invariants can be encoded as first-order statements of arithmetic \cite{unno-et-al-popl2013};
we mention also the scheme of \citet{bjorner-et-al-arxiv2013} which proposes to view higher-order invariants as first-order statements about closures (encoded as data structures).
However, in both cases it seems plausible that working directly in higher-order logic may lead to the development of more transparent \changed[lo]{and generic} techniques.
To benefit from this would necessitate a move to a different  technology from our system of refinement types used for solving.

\section{Conclusion and future work}\label{sec:conclusion}

In this work, we have presented our notion of higher-order constrained Horn clauses and the first foundational results, with an emphasis on making connections to existing work in the verification of higher-order programs.
By analogy with the situation for first-order program verification,
we believe that higher-order constrained Horn clauses can be an attractive, programming-language independent setting for developing automated techniques for the verification of higher-order programs.
Let us conclude by giving some more justification to this belief through a discussion of future work.

We have shown, in Example \ref{ex:incompleteness}, that our method for solving is bound by the same limitations as typical refinement type systems in the literature.
However, there is a lot of scope to develop new approaches to solving which may not suffer in the same way.
A general result in this direction would be to show that these limitations are not intrinsic to the higher-order constrained Horn clause problem.
For example, it seems plausible that higher-order clauses are expressive (in the sense of Cook) for suitable Hoare logics over higher-order programs, mirroring the case at first-order \cite{blass-gurevich-ctl1987, bjorner-et-al-flc2015}.
A specific technique for solving,
which we plan to pursue,
is an approach for reducing higher-order clauses to first-order clauses with datatypes, using the ideas of \citet{bjorner-et-al-arxiv2013} (which is itself in the spirit of \changed[lo]{Reynolds'} defunctionalisation \cite{reynolds-pacm1972}).

In defunctionalisation, as in our method of Section \ref{sec:types}, the goal is to reduce the problem to that of first-order clauses by doing some reasoning about the behaviour of higher-order functions.
Without further investigation, it is unclear whether this reduction should happen at the level of programs (as is the case in e.g. \cite{bjorner-et-al-arxiv2013}) or at the level of a higher-order intermediate representation such as higher-order constrained Horn clauses (as we have suggested in the introduction).
However, if one believes that the reasoning involved in such a reduction is generic, in the sense of being essentially the same whether the programs are written in Haskell, or ML, or Javascript, then this suggests that it is worthwhile to investigate whether this reasoning can be done efficiently on the intermediate representation, rather than have the reasoning re-implemented in each different analysis of each different higher-order programming language.

Finally, we make the observation that higher-order constraints may be useful even in the verification of first-order procedures.
For example, in refinement type systems, a refinement of a base type is typically a predicate on values of the type.
Therefore, it seems reasonable that a refinement of a type \emph{constructor} should be a (Boolean-valued) function on predicates, i.e.~a higher-order relation.
It would be interesting to develop a type inference algorithm that can reason about refinements of type constructors (a concern that is orthogonal to the existence of higher-order procedures) using higher-order constraints, and to understand the connections with the work of \citet*{vazou-et-al-esop2013}.

\paragraph{Acknowledgements.}
Research was partially completed while the second and third authors were visiting the Institute for Mathematical Sciences, National University of Singapore in 2016.

\clearpage

%

%
%
\bibliographystyle{ACM-Reference-Format}
\bibliography{references}

\ifsupp
  \appendix

\section{Supplementary material for Section~4}

\subsection{Proof of Lemma~\ref{lem:embed-mngs}}

We first consider the special case in which $G$ is existential quantification and rephrase the latter claim equivalently as $\lomon \circ \smng{\exists} \subseteq \mmng{\exists} \circ \lomon$ (which is easier to prove).

\begin{lemma}\label{lem:exists-mngs}
  For sorts $\sigma$ (either $\iota$ or some $\rho$), \ \ $\mexistsfn_\sigma \circ \upmon_{\sigma \to o} \; \subseteq \; \existsfn_\sigma \; \subseteq \; \mexistsfn_\sigma \circ \lomon_{\sigma \to o}$.
\end{lemma}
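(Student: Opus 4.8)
The plan is to reduce both inequalities to the triangle identities of the Galois connections $\lomon_{\sigma\to o}\dashv\iembed_{\sigma\to o}$ and $\jembed_{\sigma\to o}\dashv\upmon_{\sigma\to o}$ furnished by Lemma~\ref{galois-relations}. The case $\sigma=\iota$ is immediate: since $A_\iota$ is discrete, every function into $\mathbbm 2$ is monotone, so $\mmng{\iota\to o}=\smng{\iota\to o}$, the maps $\iembed,\jembed,\lomon,\upmon$ at these sorts are all the identity, and $\mexistsfn_\iota=\existsfn_\iota$; the claim then reads $\existsfn_\iota\subseteq\existsfn_\iota\subseteq\existsfn_\iota$. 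So I would assume $\sigma$ is some relational sort $\rho$.

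The crux is the pair of identities
\[
  \mexistsfn_\sigma \;=\; \existsfn_\sigma\circ\jembed_{\sigma\to o}
  \qquad\text{and}\qquad
  \mexistsfn_\sigma \;=\; \existsfn_\sigma\circ\iembed_{\sigma\to o}.
\]
Both follow by unfolding the inductive definitions. Since $\iembed_o$ and $\jembed_o$ are the identity, the defining clauses specialise to $\jembed_{\sigma\to o}(r)=r\circ\upmon_\sigma$ and $\iembed_{\sigma\to o}(r)=r\circ\lomon_\sigma$, so that for $r\in\mmng{\sigma\to o}$ one gets $\existsfn_\sigma(\jembed_{\sigma\to o}(r))=\max\{r(\upmon_\sigma(v))\mid v\in\smng{\sigma}\}$, and likewise with $\lomon_\sigma$. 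The key observation is that $\upmon_\sigma$ and $\lomon_\sigma$ are surjective onto $\mmng{\sigma}$, because $\jembed_\sigma$ and $\iembed_\sigma$ are sections: $\upmon_\sigma\circ\jembed_\sigma=\mathsf{id}$ and $\lomon_\sigma\circ\iembed_\sigma=\mathsf{id}$. Hence $\{\upmon_\sigma(v)\mid v\in\smng{\sigma}\}=\mmng{\sigma}=\{\lomon_\sigma(v)\mid v\in\smng{\sigma}\}$, and both maxima collapse to $\max\{r(d)\mid d\in\mmng{\sigma}\}=\mexistsfn_\sigma(r)$.

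Granting these identities, both inequalities fall out of monotonicity of $\existsfn_\sigma$ together with the unit and counit of the respective connections. For the left inequality I would write $\mexistsfn_\sigma\circ\upmon_{\sigma\to o}=\existsfn_\sigma\circ\jembed_{\sigma\to o}\circ\upmon_{\sigma\to o}$ and apply the counit $\jembed_{\sigma\to o}\circ\upmon_{\sigma\to o}\subseteq\mathsf{id}$ of $\jembed_{\sigma\to o}\dashv\upmon_{\sigma\to o}$, obtaining $\mexistsfn_\sigma\circ\upmon_{\sigma\to o}\subseteq\existsfn_\sigma$. For the right inequality I would write $\mexistsfn_\sigma\circ\lomon_{\sigma\to o}=\existsfn_\sigma\circ\iembed_{\sigma\to o}\circ\lomon_{\sigma\to o}$ and apply the unit $\mathsf{id}\subseteq\iembed_{\sigma\to o}\circ\lomon_{\sigma\to o}$ of $\lomon_{\sigma\to o}\dashv\iembed_{\sigma\to o}$, obtaining $\existsfn_\sigma\subseteq\mexistsfn_\sigma\circ\lomon_{\sigma\to o}$.

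The only genuinely non-routine ingredient — and thus the main obstacle — is justifying the section equations $\upmon_\sigma\circ\jembed_\sigma=\mathsf{id}$ and $\lomon_\sigma\circ\iembed_\sigma=\mathsf{id}$ that drive the two identities. I expect to establish these by induction on $\sigma$, alongside Lemma~\ref{galois-relations} itself, using that a composite of sections is again a section and that the function-space construction of fact~(iv) preserves the section property, the base case $\sigma=o$ being trivial since all four maps are the identity. Everything downstream is then bookkeeping with the explicit forms of $\iembed_{\sigma\to o}$, $\jembed_{\sigma\to o}$ and the standard triangle inequalities.
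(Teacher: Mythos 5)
Your proposal is correct, but it takes a genuinely different --- and heavier --- route than the paper's own proof. The paper argues pointwise by mapping witnesses: for relational $\rho$ it shows that $\upmon_{\rho \to o}(s)(r) = 1$ implies $s(\jembed_\rho(r)) = 1$, and that $s(t) = 1$ implies $\lomon_{\rho\to o}(s)(\lomon_\rho(t)) = 1$; both facts need only the definitional unfoldings $\jembed_{\rho\to o}(r) = r \circ \upmon_\rho$ and $\iembed_{\rho\to o}(r) = r \circ \lomon_\rho$ together with the lax unit/counit (inflationary/deflationary) inequalities already furnished by Lemma~\ref{galois-relations}, so no new structural facts and no extra induction are required. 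You instead route through the exact identities $\mexistsfn_\sigma = \existsfn_\sigma \circ \jembed_{\sigma\to o} = \existsfn_\sigma \circ \iembed_{\sigma\to o}$, whose proof genuinely does require the section equations $\upmon_\sigma \circ \jembed_\sigma = \mathsf{id}$ and $\lomon_\sigma \circ \iembed_\sigma = \mathsf{id}$ (equivalently, surjectivity of $\upmon_\sigma$ and $\lomon_\sigma$) --- a strictly stronger property than anything stated in the paper. You correctly flag this as the non-routine ingredient; it is true and your sketched induction goes through (injectivity of the embeddings propagates through the function-space construction of fact~(iv), and composing with the inclusion of the monotone into the full function space preserves it), so the proof is sound. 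What your route buys is a sharper, reusable statement: the monotone existential is exactly the standard existential transported along either embedding, and the embeddings are split. What it costs is an entire parallel induction on sorts. In fact that cost is avoidable: each direction of the lemma needs only one inequality from each of your identities, namely $\mexistsfn_\sigma \subseteq \existsfn_\sigma \circ \jembed_{\sigma\to o}$ (given a witness $d \in \mmng{\sigma}$ for $r$, pass to $\jembed_\sigma(d)$, using $\mathsf{id} \subseteq \upmon_\sigma \circ \jembed_\sigma$ and monotonicity of $r$) and $\existsfn_\sigma \circ \iembed_{\sigma\to o} \subseteq \mexistsfn_\sigma$ (trivial, since every $\lomon_\sigma(v)$ already lies in $\mmng{\sigma}$). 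Weakening your identities to these two inequalities eliminates the section equations entirely and essentially collapses your argument into the paper's witness-mapping proof.
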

\begin{proof}
  In case $\sigma$ is $\iota$, by definition $\mexistsfn_\iota = \existsfn_\iota$ and $\upmon_{\iota \to o}(s) = s = \lomon_{\iota \to o}(s)$, so the result is clear.  Otherwise, $\sigma$ is some relational sort $\rho$ and we observe that both of the following are true for all $s \in \smng{\rho \to o}$:
 \begin{itemize}[(i)]
 \item for all $r \in \mmng{\rho}$: $\upmon_{\rho \to o}(s)(r) = 1$ implies $s(\jembed_\rho(t)) = 1$
 \item for all $t \in \smng{\rho}$: $s(t) = 1$ implies $\lomon_{\rho \to o}(s)(\lomon_\rho(t)) = 1$
 \end{itemize}
 Hence, (i) witnesses $r$ to the satisfiability of $\upmon(s)$ can be mapped to witnesses $\jembed(t)$ to the satisfiability of $s$ and (ii) witnesses $t$ to the satisfiability of $s$ can be mapped to witnesses $\lomon(t)$ to the satisfiability of $\lomon(s)$; thus proving the lemma.
 To see that (i) is true we just observe that if $\upmon_{\rho \to o}(s)(r) = 1$, then $s(\jembed_\rho(r)) = 1$ by definition.  To see that (ii) is true, we reason as follows.  If $s(t) = 1$, then since $\iembed \circ \lomon$ is inflationary, also $\iembed_{\rho \to o}(\lomon_{\rho \to o}(s))(t) = 1$.  Hence, by definition, $\lomon_{\rho \to o}(s)(\lomon_{\rho}(t)) = 1$.
\end{proof}

\begin{lemn}[\ref{lem:embed-mngs}]
For all goal terms $\Delta \sorts G : \rho$,\ \
$\jembed_{\rho} \circ \mmng{G} \circ \upmon_{\Delta} \; \subseteq \; \smng{G} \; \subseteq \; \iembed_{\rho} \circ \mmng{G} \circ \lomon_{\Delta}$.
\end{lemn}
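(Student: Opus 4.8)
The plan is to prove the two inclusions simultaneously by structural induction on the goal term $\Delta \sorts G : \rho$, working throughout with the given rephrased composite forms $\jembed_\rho \circ \mmng{G} \circ \upmon_\Delta \subseteq \smng{G}$ and $\smng{G} \subseteq \iembed_\rho \circ \mmng{G} \circ \lomon_\Delta$, precisely because they unfold cleanly through the semantic clauses of Figure~\ref{fig:mono-term-semantics}. The two properties I would lean on at every step are supplied by the Galois connections of Lemma~\ref{galois-relations} and Corollary~\ref{lem:galois-valuations}: monotonicity of each of $\iembed$, $\jembed$, $\lomon$, $\upmon$, and the transposition laws $\jembed_\rho(s) \subseteq t \iff s \subseteq \upmon_\rho(t)$ and $s \subseteq \iembed_\rho(t) \iff \lomon_\rho(s) \subseteq t$.

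For the base cases I would argue as follows. For a relational variable $x$ both semantics return $\alpha(x)$, so the two inclusions are just the counit $\jembed_\rho \circ \upmon_\rho \subseteq \mathsf{id}$ and the unit $\mathsf{id} \subseteq \iembed_\rho \circ \lomon_\rho$ read off the valuation-level connections. For a constraint $\phi$ the two semantics agree and depend only on the individual part of the valuation, on which $\upmon_\Delta$ and $\lomon_\Delta$ act as the identity; since $\jembed_\boolsort$ and $\iembed_\boolsort$ are identities, both inclusions hold with equality. For the constants $\wedge$ and $\vee$ a short computation using that $\iembed_\boolsort, \jembed_\boolsort, \lomon_\boolsort, \upmon_\boolsort$ are all identities shows that $\iembed$ and $\jembed$ at sort $\boolsort \to \boolsort \to \boolsort$ fix the monotone functions $\andfn$ and $\orfn$, so again equality holds. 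The only delicate constant is $\exists_\sigma$: unfolding the definitions gives $\jembed_{(\sigma \to \boolsort) \to \boolsort}(\mexistsfn_\sigma) = \mexistsfn_\sigma \circ \upmon_{\sigma \to \boolsort}$ and $\iembed_{(\sigma \to \boolsort) \to \boolsort}(\mexistsfn_\sigma) = \mexistsfn_\sigma \circ \lomon_{\sigma \to \boolsort}$, which reduces the two inclusions to exactly the content of Lemma~\ref{lem:exists-mngs}.

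For an abstraction $\abs{x}{G}$ I would use compatibility of the valuation maps with extension, namely $\upmon_\Delta(\beta)[x \mapsto \upmon_\sigma(v)] = \upmon_{\Delta, x:\sigma}(\beta[x \mapsto v])$ (and its $\lomon$-analogue, trivial when $\sigma = \iota$), after which both inclusions follow pointwise from the induction hypothesis for $G$ over the extended environment $\Delta, x:\sigma$. The load-bearing case — the one I expect to be the main obstacle — is relational application $\term{G H}$ with $\Delta \sorts G : \rho_1 \to \rho_2$ and $\Delta \sorts H : \rho_1$. Writing $r = \mmng{G}(\upmon_\Delta \beta)$ and $s = \mmng{H}(\upmon_\Delta \beta)$, the target is $\jembed_{\rho_2}(r(s)) \subseteq \smng{G}(\beta)(\smng{H}(\beta))$. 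The induction hypothesis for $H$ gives $\jembed_{\rho_1}(s) \subseteq \smng{H}(\beta)$, which transposes across $\jembed_{\rho_1} \dashv \upmon_{\rho_1}$ to $s \subseteq \upmon_{\rho_1}(\smng{H}(\beta))$; monotonicity of $r$ and $\jembed_{\rho_2}$ then yields $\jembed_{\rho_2}(r(s)) \subseteq \jembed_{\rho_2}(r(\upmon_{\rho_1}(\smng{H}(\beta)))) = \jembed_{\rho_1 \to \rho_2}(r)(\smng{H}(\beta))$, and the last term is bounded by $\smng{G}(\beta)(\smng{H}(\beta))$ by the hypothesis for $G$ instantiated at the argument $\smng{H}(\beta)$. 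The second inclusion is perfectly dual, transposing the hypothesis for $H$ across $\lomon_{\rho_1} \dashv \iembed_{\rho_1}$. Application to an individual term $\term{G N}$ is easier, since $N$ depends only on individuals (so $\smng{N}$ is unaffected by $\upmon_\Delta$ and $\lomon_\Delta$) and no transposition is required.

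In short, the crux is the relational application case, and the whole reason for rephrasing the statement into the given composite form is so that the adjunction transpose can be applied to the argument subterm $H$ while monotonicity propagates the resulting bound through the operator subterm $G$. Once this interplay is set up correctly, the remaining cases reduce to routine unfolding together with the unit, counit, and Lemma~\ref{lem:exists-mngs}.
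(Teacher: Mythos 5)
Your proposal is correct and follows essentially the same route as the paper's own proof: induction on the sorting judgement for goal terms in the rephrased composite form, with the variable, constraint and connective cases handled by unit/counit and identity arguments, the existential constant reduced to exactly the auxiliary inclusion $\mexistsfn_\sigma \circ \upmon_{\sigma \to \boolsort} \subseteq \existsfn_\sigma \subseteq \mexistsfn_\sigma \circ \lomon_{\sigma \to \boolsort}$ (Lemma~\ref{lem:exists-mngs}), abstraction via compatibility of $\upmon/\lomon$ with valuation extension, and relational application via the adjunction. The only cosmetic difference is in the application case, where you transpose the induction hypothesis for the argument across $\jembed_{\rho_1} \dashv \upmon_{\rho_1}$ before invoking monotonicity, whereas the paper first inserts the unit $\mathsf{id} \subseteq \upmon_{\rho_1} \circ\, \jembed_{\rho_1}$ and applies that induction hypothesis afterwards --- the same adjunction argument performed in a different order.
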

\begin{proof}
  The proof of the inclusion $\jembed \circ \mmng{G} \circ \upmon \subseteq \smng{G}$ is by induction on the sorting judgement for goal terms.  We give here only the more interesting cases:
  \begin{itemize}
    \item If $\Delta \sorts x : \rho$ then:
    \[
      \jembed(\mmng{x}(\upmon(\alpha))) = \jembed(\upmon(\alpha(x))) \subseteq \alpha(x) = \smng{x}(\alpha)
    \]
    by definition and the fact that $\jembed \circ \upmon$ is deflationary.
    \item If $\Delta \sorts \phi : o$ with $\phi$ a formula of the constraint language, then:
    \[
      \jembed(\mmng{\phi}(\upmon(\alpha))) = \smng{\phi}(\upmon(\alpha)) = \smng{\phi}(\alpha)
    \]
    by definition and since the free variables of $\phi$ are assumed to be all first-order.
    \item If $\Delta \sorts H K : \rho_2$ with $\Delta \sorts H : \rho_1 \to \rho_2$ and $\Delta \sorts K : \rho_1$, then we reason as follows.  First observe that:
    \[
      \begin{array}{rcl}
      \jembed_{\rho_2}(\mmng{HK}(\upmon(\alpha))) & = & \jembed_{\rho_2}(\mmng{H}(\upmon(\alpha))(\mmng{K}(\upmon(\alpha)))) \\
      &\subseteq&  \jembed_{\rho_2}(\mmng{H}(\upmon(\alpha))(\upmon_{\rho_1}(\jembed_{\rho_1}(\mmng{K}(\upmon(\alpha))))))
      \end{array}
    \]
    by definition and because $\jembed \circ \mmng{H}(\upmon(\alpha))$ is monotone and $\upmon \circ \jembed$ is inflationary.  Then:
    \[
      \begin{array}{rcl}
         \jembed_{\rho_2}(\mmng{H}(\upmon(\alpha))(\upmon_{\rho_1}(\jembed_{\rho_1}(\mmng{K}(\upmon(\alpha)))))) &
        = & \jembed_{\rho_1 \to \rho_2}(\mmng{H}(\upmon(\alpha)))(\jembed_{\rho_1}(\mmng{K}(\upmon(\alpha)))) \\
        &\subseteq& \jembed_{\rho_1 \to \rho_2}(\mmng{H}(\upmon(\alpha)))(\smng{K}(\alpha))
      \end{array}
    \]
    by definition, the induction hypothesis and because $\jembed(r)$ is monotone in general because $r$ is.  Finally:
    \[
      \jembed_{\rho_1 \to \rho_2}(\mmng{H}(\upmon(\alpha)))(\smng{K}(\alpha))
      \subseteq \smng{H}(\alpha)(\smng{K}(\alpha))
      = \smng{HK}(\alpha)
    \]
    by definition and the induction hypothesis.
  \item If $\Delta \sorts \abs{x}{H} : \rho_1 \to \rho_2$ and $\Delta,x:\!\rho_1 \sorts H : \rho_2$, we first observe that:
  \[
  \jembed_{\rho_1 \to \rho_2}(\mmng{\abs{x}{H}}(\upmon(\alpha)))(s)
  = \jembed_{\rho_2}(\mmng{\abs{x}{H}}(\upmon(\alpha))(\upmon_{\rho_1}(s)))
  = \jembed_{\rho_2}(\mmng{H}(\upmon(\alpha[x \mapsto s])))
  \]
  follows from the definitions.  Then we note that:
  \[
    \jembed_{\rho_2}(\mmng{H}(\upmon(\alpha[x \mapsto s])))
    \subseteq \smng{H}(\alpha[x \mapsto s]))
    = \smng{\abs{x}{H}}(\alpha)(s)
  \]
  follows by definition and the induction hypothesis.
  \item If $\Delta \sorts \vee : o \to o \to o$ or $\Delta \sorts \wedge : o \to o \to o$, the result holds by definition.
  \item If $\Delta \sorts \exists_\sigma : (\sigma \to o) \to o$ then:
  \[
    \jembed(\mmng{\exists}(\upmon(\alpha))) = \jembed(\mexistsfn) = \mexistsfn \circ \upmon \subseteq \existsfn = \smng{\exists}(\alpha)
  \]
  follows by definition and Lemma \ref{lem:exists-mngs}.
  \end{itemize}
  Showing the inclusion $\smng{G} \subseteq \iembed \circ \mmng{G} \circ \lomon$ is dual: just observe that whenever, in the above proof, an inclusion is justified by $\upmon \circ \jembed$ being inflationary (respectively $\jembed \circ \upmon$ deflationary), the corresponding reverse inclusion can be justified by noting that $\lomon \circ \iembed$ is deflationary (respectively $\iembed \circ \lomon$ inflationary).
\end{proof}

\section{Supplementary Material for Section 5}

\subsection{Proof of Lemma \ref{lem:types-are-pideals}}

\begin{lemma}\label{lem:ty-interp-mono}
Let $\Delta \sorts T_1 \to T_2 :: \rho_1 \to \rho_2$ be a type and let $\alpha \in \mmng{\Delta}$ be a first order valuation.  Then $\rmmng{T_1 \to T_2}(\alpha)$ is monotone.
\end{lemma}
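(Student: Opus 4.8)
The plan is to unfold the relational semantics at the arrow type and verify monotonicity by a direct case analysis, the only delicate point being the behaviour on arguments that fail the domain condition. Recall that for $r \in \mmng{\rho_1}$ the value $\rmmng{T_1 \to T_2}(\alpha)(r)$ equals $\rmmng{T_2}(\alpha)$ when $r \subseteq \rmmng{T_1}(\alpha)$ and equals $\rmmng{\toptype{\rho_2}}(\emptyset)$ otherwise. First I would record, from the induction hypothesis of the ambient induction establishing that each $\rmmng{T}(\alpha)$ is a monotone relation, that both $\rmmng{T_2}(\alpha)$ and $\rmmng{\toptype{\rho_2}}(\emptyset)$ are genuine elements of $\mmng{\rho_2}$, so that $\rmmng{T_1 \to T_2}(\alpha)$ is at least a well-defined function $\mmng{\rho_1} \to \mmng{\rho_2}$; it then remains only to show it preserves the order.

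Next I would fix $r_1 \subseteq r_2$ in $\mmng{\rho_1}$ and split on whether $r_2 \subseteq \rmmng{T_1}(\alpha)$. If it does, then by transitivity $r_1 \subseteq \rmmng{T_1}(\alpha)$ as well, so both arguments fall into the first clause and $\rmmng{T_1 \to T_2}(\alpha)(r_1) = \rmmng{T_2}(\alpha) = \rmmng{T_1 \to T_2}(\alpha)(r_2)$, giving the required inclusion by reflexivity. If instead $r_2 \not\subseteq \rmmng{T_1}(\alpha)$, then $\rmmng{T_1 \to T_2}(\alpha)(r_2) = \rmmng{\toptype{\rho_2}}(\emptyset)$, while $\rmmng{T_1 \to T_2}(\alpha)(r_1)$ is one of $\rmmng{T_2}(\alpha)$ or $\rmmng{\toptype{\rho_2}}(\emptyset)$; in either case the inclusion holds provided $\rmmng{\toptype{\rho_2}}(\emptyset)$ is the greatest element of $\mmng{\rho_2}$.

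The one auxiliary fact that needs to be established, and which I expect to be the crux, is therefore that $\rmmng{\toptype{\rho}}(\beta)$ is the top element of $\mmng{\rho}$ for every $\rho$ and every valuation $\beta$ (and hence independent of $\beta$, since $\toptype{\rho}$ is a closed type). I would prove this by a short separate induction on $\rho$. At $\boolsort$, $\rmmng{\toptype{\boolsort}}(\beta) = \smng{\truetm}(\beta) = 1$, the top of $\mathbbm{2}$. At $\iota \to \rho$, the relation $\rmmng{\dto{z}{\iota}{\toptype{\rho}}}(\beta)$ maps each $n \in A_\iota$ to $\rmmng{\toptype{\rho}}(\beta[z \mapsto n])$, which by the induction hypothesis is the top of $\mmng{\rho}$; being the constant top-valued function it is the top of $\mmng{\iota \to \rho}$ under the pointwise order. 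The arrow case is the pleasant one: since $\toptype{\rho_1 \to \rho_2} = \bottype{\rho_1} \to \toptype{\rho_2}$, both clauses of the relational semantics return $\rmmng{\toptype{\rho_2}}(\emptyset)$ — the first clause because its result type is exactly $\toptype{\rho_2}$, and the ``otherwise'' clause by definition — so the value at every argument is $\rmmng{\toptype{\rho_2}}(\emptyset)$, which is the top of $\mmng{\rho_2}$ by the induction hypothesis, and the resulting constant function is the top of $\mmng{\rho_1 \to \rho_2}$.

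It is worth noting that this subsidiary induction needs no companion statement about $\bottype{\rho}$: because the codomain type in $\toptype{\rho_1 \to \rho_2}$ is itself $\toptype{\rho_2}$, the domain condition involving $\bottype{\rho_1}$ is irrelevant to the computed value, so the two branches coincide regardless of the properties of $\bottype{\rho_1}$. With the top fact in hand, the second case of the main argument closes, since any element of $\mmng{\rho_2}$ — in particular $\rmmng{T_2}(\alpha)$ — is contained in $\rmmng{\toptype{\rho_2}}(\emptyset)$, and this completes the proof that $\rmmng{T_1 \to T_2}(\alpha)$ is monotone.
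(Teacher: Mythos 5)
Your proof is correct and follows essentially the same route as the paper's: unfold the relational semantics at the arrow type and argue by a two-case analysis (the paper splits on whether $t_1 \subseteq \rmmng{T_1}(\alpha)$, you split on $r_2$; the two are interchangeable). The only difference is that you prove explicitly, by a subsidiary induction, that $\rmmng{\toptype{\rho}}(\beta)$ is the greatest element of $\mmng{\rho}$ --- a fact the paper's proof takes for granted by simply writing $\top$.
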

\begin{proof}
Let $t_1,t_2 \in \mmng{\rho_1}$ and consider the following two cases.
\begin{itemize}
\item If $t_1 \subseteq \rmmng{T_1}(\alpha)$, then $\rmmng{T_1 \to T_2}(\alpha)(t_1) = \rmmng{T_2}(\alpha)$.  Since $\rmmng{T_1 \to T_2}(\alpha)(t_2)$ is either $\rmmng{T_2}(\alpha)$ or $\top$, it follows that $\rmmng{T_1 \to T_2}(\alpha)(t_1) \subseteq \rmmng{T_1 \to T_2}(\alpha)(t_2)$.
\item Otherwise, $\rmmng{T_1 \to T_2}(\alpha)(t_1) = \top$ and it follows from the assumption that $t_2 \not\subseteq \rmmng{T_1}(\alpha)$.  Hence, also $\rmmng{T_1 \to T_2}(\alpha)(t_2) = \top$.
\end{itemize}
\end{proof}

\begin{lemn}[\ref{lem:types-are-pideals}]
For all types $\Delta \sorts T :: \rho$ and $\alpha \in \mmng{\Delta}$, $\mmng{T}(\alpha) = \pideal\rmmng{T}(\alpha)$.
\end{lemn}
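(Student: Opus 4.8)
The plan is to prove the identity by induction on the derivation of the refinement judgement $\Delta \sorts T :: \rho$, showing at each step that $\mmng{T}(\alpha) = \pideal\rmmng{T}(\alpha)$, where by definition $\pideal\rmmng{T}(\alpha) = \{\, s \in \mmng{\rho} \mid s \subseteq \rmmng{T}(\alpha) \,\}$. First I would record that the statement is well-posed: since $\rmmng{T}(\alpha)$ is a monotone relation (noted just before the statement, and established for the arrow case in Lemma~\ref{lem:ty-interp-mono}), it is a genuine element of $\mmng{\rho}$, so its downward closure in $\mmng{\rho}$ makes sense; and every member of $\mmng{T}(\alpha)$ is by construction an element of $\mmng{\rho}$, so both sides are subsets of $\mmng{\rho}$.

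For the base case $T = \boolty{\phi}$ over $\boolsort$, the ideal semantics gives $\{0, \smng{\phi}(\alpha)\}$ while the relational semantics gives the single value $\smng{\phi}(\alpha) \in \mathbbm{2}$; since $0$ is the bottom of $\mathbbm{2}$, the downward closure of $\{\smng{\phi}(\alpha)\}$ is exactly $\{0,\smng{\phi}(\alpha)\}$, so the two coincide. For the dependent case $T = \dto{x}{\iota}{T'}$ over $\iota \to \rho'$, I would unfold the ideal semantics to $\{\, r \mid \forall n \in A_\iota.\, r(n) \in \mmng{T'}(\alpha[x \mapsto n]) \,\}$, apply the induction hypothesis to rewrite each membership $r(n) \in \mmng{T'}(\alpha[x\mapsto n])$ as $r(n) \subseteq \rmmng{T'}(\alpha[x\mapsto n]) = \rmmng{T}(\alpha)(n)$, and then observe that the condition ``$r(n) \subseteq \rmmng{T}(\alpha)(n)$ for all $n$'' is precisely the definition of $r \subseteq \rmmng{T}(\alpha)$ in the pointwise order $\subseteq_{\iota\to\rho'}$. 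Both of these cases are routine.

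The arrow case $T = T_1 \to T_2$ over $\rho_1 \to \rho_2$ is where the work lies and is the step I expect to be the main obstacle, because the relational semantics is defined by cases on whether the argument lies below $\rmmng{T_1}(\alpha)$. Using the induction hypothesis to replace $\mmng{T_1}(\alpha)$ by $\{\, r \mid r \subseteq \rmmng{T_1}(\alpha)\,\}$ and $\mmng{T_2}(\alpha)$ by $\{\, s \mid s \subseteq \rmmng{T_2}(\alpha)\,\}$, the ideal semantics becomes $\{\, f \mid \forall r \subseteq \rmmng{T_1}(\alpha).\, f(r) \subseteq \rmmng{T_2}(\alpha)\,\}$, which I must match against $\pideal\rmmng{T_1 \to T_2}(\alpha) = \{\, f \mid \forall r \in \mmng{\rho_1}.\, f(r) \subseteq \rmmng{T_1 \to T_2}(\alpha)(r)\,\}$. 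The crux is the ``otherwise'' branch: when $r \not\subseteq \rmmng{T_1}(\alpha)$ the relational semantics returns $\rmmng{\toptype{\rho_2}}(\emptyset)$, and I would first prove by a side induction on $\rho_2$ that this is the top element of the lattice $\mmng{\rho_2}$ (checking the three defining clauses of the family $\toptype{\rho}$, where the arrow clause $\bottype{\rho_1} \to \toptype{\rho_2}$ reduces to the inductive hypothesis). Granting this, the constraint $f(r) \subseteq \rmmng{T_1 \to T_2}(\alpha)(r)$ is vacuous on arguments $r \not\subseteq \rmmng{T_1}(\alpha)$ and reduces to $f(r) \subseteq \rmmng{T_2}(\alpha)$ on arguments $r \subseteq \rmmng{T_1}(\alpha)$; hence $f \subseteq \rmmng{T_1 \to T_2}(\alpha) \iff \forall r \subseteq \rmmng{T_1}(\alpha).\, f(r) \subseteq \rmmng{T_2}(\alpha)$, so the two comprehensions describe the same monotone functions, completing the induction.
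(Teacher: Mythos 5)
Your proof is correct and takes essentially the same route as the paper's: an induction on the derivation of $\Delta \sorts T :: \rho$ in which the base and dependent cases are routine and the arrow case turns on the same split according to whether the argument lies below $\rmmng{T_1}(\alpha)$ --- the paper merely packages each case as three separate checks (downward closure, membership of $\rmmng{T}(\alpha)$, and maximality of $\rmmng{T}(\alpha)$) where you rewrite both sides as comprehensions and prove the biconditional directly. The one substantive difference is in your favour: you explicitly prove, by a side induction on $\rho_2$, that $\rmmng{\toptype{\rho_2}}(\emptyset)$ is the top element of the lattice $\mmng{\rho_2}$, a fact the paper's proof uses silently (it writes $\top_{\rho_2}$ for this value and declares the extremal-property case ``immediate'').
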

\begin{proof}
  The proof is by induction on the derivation of the sorting judgement.
  \begin{itemize}
    \item When the judgement is of shape $\Delta \sorts \boolty{\phi} :: \boolsort$, clearly $\{0,\smng{\phi}(\alpha)\} = \pideal\rmmng{\phi}(\alpha)$.

    \item When the judgement is of shape $\Delta \sorts \dto{x}{\iota}{T} :: \iota \to \rho$ we first observe that, for all $n \in A_\iota$, it follows from the induction hypothesis that $\mmng{T}(\alpha[x \mapsto n]) = \pideal\rmmng{T}(\alpha[x \mapsto n])$.  To see downwards closure, let $t_2 \in \mmng{\dto{x}{\iota}{T}}$ and $t_1 \subseteq t_2$.  Then, for all $n \in A_\iota$, $t_2(n) \in \mmng{T}(\alpha[x \mapsto n])$.  Since $t_1(n) \subseteq t_2(n)$ and this set is downwards closed, it follows that $t_1(n)$ is also a member.  To see membership, $\rmmng{\dto{x}{\iota}{T}}(\alpha)(n) \in \mmng{T}(\alpha[x \mapsto n])$ so, by definition, $\rmmng{\dto{x}{\iota}{T}}(\alpha) \in \mmng{\dto{x}{\iota}{T}}(\alpha)$.  Finally, let $r \in \mmng{\dto{x}{\iota}{T}}(\alpha)$ and let $n \in A_\iota$.  Then it follows that $r(n) \in \mmng{T}(\alpha[x \mapsto n])$ so $r(n) \subseteq \rmmng{T}(\alpha[x \mapsto n])$ as required.

    \item When the judgement has shape $\Delta \sorts T_1 \to T_2 :: \rho_1 \to \rho_2$ we proceed as follows.  To see that $\mmng{T_1 \to T_2}(\alpha)$ is downwards closed, let $t_2$ be a member and $t_1 \subseteq t_2$.  Then let $t \in \mmng{T_1}(\alpha)$.  It follows that $t_2(t) \in \mmng{T_2}(\alpha)$ and $t_1(t) \subseteq t_2(t)$, so the result follows from the induction hypothesis.  To see membership, let $t \in \mmng{T_1}(\alpha)$.  It follows from the induction hypothesis that, therefore $t \subseteq \rmmng{T_1}(\alpha)$.  Hence $\rmmng{T_1 \to T_2}(\alpha)(t) = \rmmng{T_2}(\alpha)$ and it follows from the induction hypothesis that $\rmmng{T_2}(\alpha) \in \mmng{T_2}(\alpha)$, as required.  To see the extremal property, let $s \in \mmng{T_1 \to T_2}(\alpha)$ and let $t \in \mmng{\rho_1}$.  If $t \in \mmng{T_1}(\alpha)$, so that $t \subseteq \rmmng{T_1}(\alpha)$ follows from the induction hypothesis, then the fact that $s(t) \subseteq \rmmng{T_1 \to T_2}(\alpha)(t)$ follows from the fact that $s(t) \in \mmng{T_2}(\alpha)$, $\rmmng{T_1 \to T_2}(\alpha)(t) = \rmmng{T_2}(\alpha)$ and the induction hypothesis.  Otherwise, $t \notin \mmng{T_1}(\alpha)$ and it follows from the induction hypothesis that, therefore, $t \not\subseteq \rmmng{T_1}(\alpha)$.  Hence $\rmmng{T_1 \to T_2}(\alpha)(t) = \top_{\rho_2}$ and the result is immediate.
  \end{itemize}
\end{proof}

\subsection{Proof of Theorem \ref{thm:ty-ass-soundness}}

\begin{thmn}[\ref{thm:ty-ass-soundness}]
For all $\sorts \theta : \Delta$, $\sorts \Gamma :: \Delta$, $\Delta \sorts G : \sigma$ and $\Delta \sorts T, T_1, T_2 :: \sigma$, the following is true:
\begin{enumerate}[(i)]
  \item $\types T_1 \subtype T_2$ implies $\models T_1 \subtype T_2$
  \item $\Gamma \types G : T$ implies $\Gamma \models G : T$
  \item $\types P : \Gamma$ implies $\models P : \Gamma$
\end{enumerate}
\end{thmn}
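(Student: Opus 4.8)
The plan is to prove the three claims by the natural structural inductions, with (i) and (ii) being genuine inductions on the relevant derivations and (iii) a short deduction from (ii) combined with the correspondence between the two semantics of types supplied by Lemma~\ref{lem:types-are-pideals}.

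For (i) I would induct on the derivation of $\types T_1 \subtype T_2$. At base sort the rule is licensed by $\atheory \models \phi \implies \psi$, so $\smng{\phi}(\alpha) \leq \smng{\psi}(\alpha)$ in $\mathbbm{2}$ and hence $\{0, \smng{\phi}(\alpha)\} \subseteq \{0, \smng{\psi}(\alpha)\}$, which is exactly $\mmng{\boolty{\phi}}(\alpha) \subseteq \mmng{\boolty{\psi}}(\alpha)$. The dependent-arrow case is immediate from the induction hypothesis applied pointwise over all $n \in A_\iota$. The function-arrow case is the usual contravariant/covariant argument: given $f \in \mmng{T_1 \to T_2}(\alpha)$ and $r \in \mmng{T_1'}(\alpha)$, the hypothesis for $\types T_1' \subtype T_1$ places $r$ in $\mmng{T_1}(\alpha)$, whence $f(r) \in \mmng{T_2}(\alpha) \subseteq \mmng{T_2'}(\alpha)$ by the hypothesis for $\types T_2 \subtype T_2'$.

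For (ii) I would induct on the derivation of $\Gamma \types G : T$. The cases \TVar, \TSub, \TAbsR, \TAppR and \TAbsI are routine: \TVar is the definition of satisfaction, \TSub closes under part (i), and the abstraction/application pairs unfold the matching semantic clause against the definition of the ideal at an arrow type, using in \TAbsR that relational bound variables never occur in types, so that $\mmng{T_2}(\alpha[x \mapsto r]) = \mmng{T_2}(\alpha)$. For the propositional rules \TConstraint, \TExists, \TAnd and \TOr the cleanest route is to read each membership $\mmng{G}(\alpha) \in \mmng{\boolty{\phi}}(\alpha)$ through Lemma~\ref{lem:boolty-meaning} as the entailment $\alpha \models G \implies \phi$; every rule then collapses to an elementary propositional manipulation in the background theory, where for \TExists one additionally uses that $\psi$ does not contain the bound $x$ freely, so that a witness for $\exists x.\,G$ yields a model of $\psi$ under $\alpha$ itself. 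The single non-routine ingredient is \TAppI, for which I need the semantic substitution lemma $\mmng{T[N/x]}(\alpha) = \mmng{T}(\alpha[x \mapsto \smng{N}(\alpha)])$, proved by a separate induction on $T$; this is where restricting dependence to the sort $\iota$ (so that $N$ is already a constraint-language term) earns its keep.

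Claim (iii) is then obtained by applying (ii) at the relational interpretation of the environment. Writing $P$ as $x_1 = G_1,\ldots,x_m = G_m$, I would first invoke Lemma~\ref{lem:types-are-pideals}, which identifies $\mmng{T}(\alpha)$ with the principal ideal $\pideal \rmmng{T}(\alpha)$ and so exhibits $\rmmng{T}(\alpha)$ as its greatest element, to conclude that the (monotone) valuation $\rmmng{\Gamma}$ satisfies $\Gamma$. Feeding $\rmmng{\Gamma}$ into (ii) on each hypothesis $\Gamma \types G_i : \Gamma(x_i)$ of \TProg gives $\mmng{G_i}(\rmmng{\Gamma}) \in \mmng{\Gamma(x_i)}$, that is $\mmng{G_i}(\rmmng{\Gamma}) \subseteq \rmmng{\Gamma}(x_i)$ (the types in $\Gamma$ being closed, since $\Delta$ lists only the relational unknowns); this is precisely the assertion that $\rmmng{\Gamma}$ is a prefixed point of $\mfunc_{P:\Delta}$. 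Since $\mmng{P} = \mu\mfunc_{P:\Delta}$ is the least such, $\mmng{P} \subseteq \rmmng{\Gamma}$, and downward closure of the ideals then delivers $\mmng{P}(x_i) \in \mmng{\Gamma(x_i)}$, i.e.\ $\mmng{P}$ satisfies $\Gamma$. I expect the main obstacle to be keeping the propositional cases of (ii) disciplined — it is far smoother to reason about entailments via Lemma~\ref{lem:boolty-meaning} than to compute directly in the monotone semantics — together with stating the substitution lemma for \TAppI; once Lemma~\ref{lem:types-are-pideals} is in hand, the bridge in (iii) from the type-level ideal semantics to the fixpoint-level operator $\mfunc_{P:\Delta}$ is essentially bookkeeping.
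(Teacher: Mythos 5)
Your proposal is correct and follows essentially the same route as the paper: the same inductions for (i) and (ii), the same use of Lemma~\ref{lem:boolty-meaning} to reduce the propositional rules to entailments in the background theory, and the same bridge in (iii) via Lemma~\ref{lem:types-are-pideals}, prefixed points of $\mfunc_{P:\Delta}$, and leastness of $\mmng{P}$. The only difference is that you state the semantic substitution lemma $\mmng{T[N/x]}(\alpha) = \mmng{T}(\alpha[x \mapsto \smng{N}(\alpha)])$ for \TAppI{} explicitly, which the paper uses silently — a point in your favour, not a divergence.
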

\begin{proof}
The proof of the first claim is by induction on the derivation.
\begin{itemize}
  \item If the conclusion is $\types \intty \subtype \intty$ the result follows immediately.
  \item If the conclusion is $\types \boolty{\phi} \subtype \boolty{\psi}$, then necessarily, for all $\alpha \in \mmng{\Delta}$, $\alpha \models \phi \implies \psi$.  Then let $\alpha \in \mmng{\Delta}$.  It follows that $\mmng{\boolty{\phi}}(\alpha) = \{0,\mmng{\phi}(\alpha)\} \subseteq \{0,\mmng{\psi}(\alpha)\}$.
  \item If the conclusion is $\types \dto{x}{T_1}{T_2} \subtype \dto{y}{T_1'}{T_2'}$, let $\alpha \in \mmng{\Delta}$, $r \in \mmng{x:T_1 \to T_2}(\alpha)$ and $s \in \mmng{T_1'}$.  It follows from the induction hypothesis that
  \[
  \begin{array}{c}
  \mmng{T_1'}(\alpha) = \mmng{T_1'[z/y]}(\gamma) \subseteq \mmng{T_1[z/x]}(\beta) = \mmng{T_1}(\alpha) \\
  \mmng{T_2}(\alpha) = \mmng{T_2[z/x]}(\beta) \subseteq \mmng{T_2'[z/y]}(\gamma) = \mmng{T_2'}(\alpha)
  \end{array}
  \]
  where $\beta = (\alpha \setminus \{x \mapsto \alpha(x)\}) \cup \{z \mapsto \alpha(x)\}$ and $\gamma = (\alpha \setminus \{y \mapsto \alpha(y)\}) \cup \{z \mapsto \alpha(y)\}$.
  It follows that $s \in \mmng{T_1}(\alpha)$ and hence $r(s) \in \mmng{T_2[s/y]}(\alpha)$.  Finally, we observe that therefore $r(s) \in \mmng{T_2'[s/y]}(\alpha)$.
\end{itemize}
The proof of the second claim is by induction on the typing derivation.
\begin{itemize}
  \item When the judgement has shape $\Gamma_1, x:T, \Gamma_2 \types x : T$, assume $\alpha \models \Gamma$.  Then $\mmng{G}(\alpha) = \alpha(x)$.  From our assumption, $\alpha(x) \in \mmng{\Gamma(x)}$, i.e. $\alpha(x) \in \mmng{T}(\alpha)$ as required.
  \item When the judgement has shape $\Gamma \types \phi : \boolty{\phi}$, it follows immediately that $\mmng{\phi}(\alpha) \in \mmng{\boolty{\phi}}$.
  \item When the judgement has shape $\Gamma \types G \wedge H : \boolty{\phi \wedge \psi}$, let $\alpha \models \Gamma$.  It follows from the induction hypothesis that $\Gamma \models G : \boolty{\phi}$ and $\Gamma \models H : \boolty{\psi}$ and consequently, $\mmng{G}(\alpha) \in \mmng{\boolty{\phi}}$ and $\mmng{H}(\alpha) \in \mmng{\boolty{\phi}}$.  The result then follows from Lemma \ref{lem:boolty-meaning}.
  \item When the judgement has shape $\Gamma \types G \vee H : \boolty{\phi \vee \psi}$ the proof is analogous to the case above.
  \item When the judgement has shape $\Gamma \types \exists x.G : \boolty{\exists x.\phi}$ let $\alpha \models \Gamma$.  It follows from the induction hypothesis that $\Gamma,x:\intsort \models G : \boolty{\phi}$.  Hence, for any $n \in \mmng{\intsort}$, $\alpha[x \mapsto n] \models G \implies \phi$.  Assume $\alpha \models \exists x.G$, then there is some $n$ such that $\alpha[x \mapsto n] \models G$ which has the consequence that $\alpha[x \mapsto n] \models \phi$ and hence $\alpha \models \exists x.\phi$.  The result then follows from Lemma \ref{lem:boolty-meaning}.
  \item When the judgement has shape $\Gamma \types \term{G H} : T_2[H/x]$ and $\Gamma \types H : \intty$, let $\alpha \models \Gamma$.  It follows from the induction hypothesis that $\mmng{G}(\alpha) \in \mmng{\dto{x}{\intty}{T_2}}(\alpha)$ and $\mmng{H}(\alpha) \in \mmng{\intsort}(\alpha)$.  Hence, by definition, $\mmng{\term{G H}}(\alpha) \in \mmng{T_2}(\alpha[x \mapsto \mmng{H}(\alpha)]) = \mmng{T_2[H/x]}(\alpha)$.
  \item When the judgement has shape $\Gamma \types \term{G H} : T_2$ and $\Gamma \types H : T_1$ and $T_1$ is not $\intty$, let $\alpha \models \Gamma$.  It follows from the induction hypothesis that $\mmng{G}(\alpha) \in \mmng{T_1 \to T_2}$ and $\mmng{H} \in \mmng{T_1}$.  Hence, it follows by definition that $\mmng{\term{G H}}(\alpha) \in \mmng{T_2}(\alpha)$.
  \item When the judgement has shape $\Gamma \types \term{\abs{x:\intty}{G}} : \dto{x}{\intty}{T}$, let $\alpha \models \Gamma$.  It follows from the induction hypothesis that, for all $n \in \mmng{\intsort}$, $\mmng{G}(\alpha[x \mapsto n]) \in \mmng{T}(\alpha[x \mapsto n])$.  By definition $\mmng{\abs{x:\intty}{G}}(\alpha) \in \mmng{\dto{x}{\intty}{T}}(\alpha)$ just if, for all $n \in \mmng{\intsort}$, $\mmng{\abs{x:\intty}{G}}(\alpha)(n) \in \mmng{T}(\alpha[x \mapsto n])$.  The result then follows from the definition of $\mmng{\abs{x}{G}}(\alpha)(n)$ and the previous observation.
  \item When the judgement has shape $\Gamma \types \abs{x:\rho}{G} : T_1 \to T_2$ for $T_1 \neq \intty$, let $\alpha \models \Gamma$.  It follows from the induction hypothesis that, for all $r \in \mmng{T_1}(\alpha)$, then $\mmng{G}(\alpha[x \mapsto r]) \in \mmng{T_2}(\alpha[x \mapsto r])$.  Let $r \in \mmng{T_1}(\alpha)$, then $\mmng{\abs{x}{G}}(\alpha)(r) = \mmng{G}(\alpha[x \mapsto r])$.  It follows from the previous observation that this expression is an element of $\mmng{T_2}(\alpha[x \mapsto r])$, but $T_2$ cannot have any occurrence of relational variable $x$ since it is built out of constraint formulas.
\end{itemize}
In the third case we reason as follows.
Assume $\types P : \Gamma$ so that, necessarily, for all $x \in \dom(\Delta)$, $\Gamma \types P(x) : \Gamma(x)$.
Then it follows from part (ii) that, for each $x \in \dom(\Delta)$, and $\alpha$ satisfying $\Gamma$, $\mmng{P(x)}(\alpha) \in \mmng{\Gamma(x)}$.
It follows from Lemma \ref{lem:types-are-pideals} that $\rmmng{\Gamma}$ satisfies $\Gamma$.
Consequently, for all $x \in \dom(\Delta)$, $\mmng{P(x)}(\rmmng{\Gamma}) \in \mmng{\Gamma(x)}$, which is to say that $\mmng{P(x)}(\rmmng{\Gamma}) \subseteq \rmmng{\Gamma}(x)$.
Hence, $\rmmng{\Gamma}$ is a prefixpoint of $\mfunc_{P:\Delta}$.
It follows from the canonicity of $\mmng{P}$ that $\mmng{P} \subseteq \rmmng{\Gamma}$ and, by Lemma \ref{lem:types-are-pideals}, that therefore, for all $x \in \dom(\Delta)$, $\mmng{P}(x) \in \mmng{\Gamma(x)}$.

\end{proof}

\subsection{Proof of Lemma \ref{lem:G-defns}}

\begin{lemn}[\ref{lem:G-defns}]
For any closed type $\sorts T :: \rho$:
\begin{enumerate}[(i)]
  \item The goal term $\com(T)$ is a $G$-definition of the class $\{r \in \mmng{\rho} \mid r \notin \mmng{T}(\emptyset)\}$.
  \item The goal term $\lar(\falsetm)(T)$ is a $G$-definition of the relation $\rmmng{T}(\emptyset)$.
\end{enumerate}
\end{lemn}
\begin{proof}
We generalise the statement, proving for all $\Delta \sorts T :: \rho$, $\alpha \in \mmng{\Delta}$ and goal formulas $\Delta \sorts G : \boolsort$: $\mmng{\com(T)}(\alpha) = \{r \in \mmng{\rho} \mid r \notin \mmng{T}(\alpha)\}$ and, for all $\vv{d}$ of the appropriate sorts: $\mmng{\lar(G)(T)}(\alpha)(\vv{d}) = 1$ iff $\mmng{G}(\alpha)$ = 1 or $\rmmng{T}(\alpha)(\vv{d}) = 1$.
The proof is by induction on $\Delta \sorts T :: \rho$.
\begin{itemize}
  \item If the judgement is of shape $\Delta \sorts \boolty{\phi} :: o$ then we reason as follows.
  \begin{enumerate}[(i)]

    \item For all $b \in \mathbbm{2}$:
    \[
      \mmng{\com(T)}(\alpha)(b) = \mmng{x \wedge \neg\phi}(\alpha[x \mapsto b])
    \]
    This latter expression evaluates to $1$ iff $\alpha[x \mapsto b] \not\models x \implies \phi$ (by the variable convention we assume that $x$ does not occur in $\phi$).
    It follows from Lemma \ref{lem:boolty-meaning} that this is the case iff $\mmng{x}(\alpha[x \mapsto b]) = b \notin \mmng{T}(\alpha[x \mapsto b]) = \mmng{T}(\alpha)$.

    \item Also, $\mmng{\lar(G)(T)}(\alpha) = \mmng{G \vee \phi}(\alpha) = \orfn(\mmng{G}(\alpha))(\mmng{\phi}(\alpha))$.  This latter expression denotes $1$ iff $\mmng{G}(\alpha) = 1$ or $\mmng{\phi}(\alpha) = 1$.
  \end{enumerate}

  \item If the judgement is of shape $\Delta \sorts \dto{x}{\iota}{T'} :: \iota \to \rho$ then we reason as follows.
  \begin{enumerate}[(i)]
    \item For all $r \in \mmng{\iota \to \rho}$:
    \[
      \mmng{\com(T)}(\alpha)(r) = \mmng{\exists x\!\!:\!\!\iota.\,\com(T')(z\,x)}(\alpha[z \mapsto r])
    \]
    This latter expression denotes $1$ iff there is some $n \in A_\iota$ such that $\mmng{\com(T')(x\,y)}(\alpha[z \mapsto r][x \mapsto n]) = 1$ and this is true iff there is some $n$ such that $\mmng{\com(T')}(\alpha[x \mapsto n])(r(n)) = 1$.  It follows from the induction hypothesis that this is true iff there is some $n$ such that $r(n) \notin \mmng{T'}(\alpha[x \mapsto n])$.  This is exactly $r \notin \mmng{\dto{x}{\iota}{T'}}(\alpha)$.

    \item For all $n \in A_\iota$ and $\vv{d}$:
    \[
      \mmng{\lar(G)(T)}(\alpha)(n)(\vv{d}) = \mmng{\lar(G)(T')}(\alpha[x \mapsto n])(\vv{d})
    \]
    It follows from the induction hypothesis that this denotes $1$ iff either $\mmng{G}(\alpha[x \mapsto n]) = 1$ or $\rmmng{T'}(\alpha[x \mapsto n])(\vv{d}) = 1$.  We may assume, by the variable convention, that $x$ does not occur in $G$.  This latter expression denotes $1$ iff $\rmmng{\dto{x}{\iota}{T'}}(\alpha)$ by definition.
  \end{enumerate}

  \item If the judgement is of shape $\Delta \sorts T_1 \to T_2 :: \rho_1 \to \rho_2$ then we reason as follows.
  \begin{enumerate}[(i)]
    \item For all $r \in \mmng{\rho_1 \to \rho_2}$:
    \[
      \mmng{\com(T)}(\alpha)(r) = \mmng{\com(T_2)(x\,\lar(\falsetm)(T_1))}(\alpha[x \mapsto r])
    \]
    This latter expression is true iff $\mmng{\com(T_2)}(\alpha)(r(\mmng{\lar(\falsetm)(T_1)}(\alpha))) = 1$.  It follows from the induction hypothesis, part (ii), that this is the case iff\\ $\mmng{\com(T_2)}(\alpha)(r(\rmmng{T_1}(\alpha))) = 1$.  By the monotonicity of the operator and the fact that $\rmmng{T_1}(\alpha)$ is the largest element of $\mmng{T_1}(\alpha)$ (Lemma \ref{lem:types-are-pideals}), this is true iff there is some $s \in \mmng{T_1}(\alpha)$ such that $\mmng{\com(T_2)}(\alpha)(r(s)) = 1$.  It follows from the induction hypothesis, part (i), that this is true iff there is some $s \in \mmng{T_1}(\alpha)$ such that $r(s) \notin \mmng{T_2}(\alpha)$, which is exactly $r \notin \mmng{T_1 \to T_2}(\alpha)$.

    \item Fix $r \in \mmng{\rho_1}(\alpha)$ and $\vv{d}$ appropriate to the argument sorts of $\rho_2$.  Then:
    \[
      \mmng{\lar(G)(T)}(\alpha)(r)(\vv{d}) = \mmng{\lar(G \vee \com(T_1)\,x)(T_2)}(\alpha[x \mapsto r])(\vv{d})
    \]
    It follows from the induction hypothesis, part (ii), that this expression denotes $1$ iff (P1) $\mmng{G \vee \com(T_1)\,x}(\alpha[x \mapsto r]) = 1$ or (P2) $\rmmng{T_2}(\alpha)(\vv{d}) = 1$.
    It follows that this first possibility (P1) is true iff $\mmng{G}(\alpha) = 1$ or $\mmng{\com(T_1)}(\alpha)(r) = 1$.
    It follows from the induction hypothesis, part (i), that $\mmng{\com(T_1)}(\alpha)(r) = 1$ iff $r \notin \mmng{T_1}(\alpha)$.
    Consequently, (P1) or (P2) iff $\mmng{G}(\alpha) = 1$ or (Q1) $r \notin \mmng{T_1}(\alpha)$ or (Q2) $\rmmng{T_2}(\alpha)(\vv{d}) = 1$.
    We claim that (Q1) or (Q2) iff $\rmmng{T}(\alpha)(r)(\vv{d}) = 1$.
    In the backward direction, assume that (Q1) does not hold.  Then, by Lemma \ref{lem:types-are-pideals}, $r \subseteq \rmmng{T_1}(\alpha)$ and $\rmmng{T}(\alpha)(r)(\vv{d}) = 1$ implies $\rmmng{T_2}(\alpha)(r)(\vv{d}) = 1$ which is (Q2).
    In the forward direction we analyse the two cases.
    In case (Q1), by Lemma \ref{lem:types-are-pideals}, $r \not\subseteq \rmmng{T_1}(\alpha)$.  Then $\rmmng{T}(\alpha)(r)(\vv{d}) = \top_{\rho_2}(r)(\vv{d}) = 1$ by definition.
    Otherwise we assume (Q2) and not (Q1).
    Then, by Lemma \ref{lem:types-are-pideals}, $r \subseteq \rmmng{T_1}(\alpha)$, so $\rmmng{T_2}(\alpha)(\vv{d}) = 1$ implies $\rmmng{T}(\alpha)(r)(\vv{d}) = 1$ by definition.
  \end{enumerate}
\end{itemize}
\end{proof}

\section{Type Inference}\label{sec:apx-inference}

In this appendix we give an algorithm, presented as a collection of syntax directed rules of inference, for determining the typability of a logic program over quantifier free integer linear arithmetic.
For convenience, we assume that the signature of the constraint language includes, for each sort $\intsort \to \cdots \to \intsort \to \boolsort$ of any arity, a countable supply of uninterpreted relation constants\footnote{We use relation constants rather than variables to ensure we stay within a first-order language.}, which will appear in the inference as constraint formulas refining the propositional sort.
The finite subset of these constants that are used in a given inference become the unknown relation symbols to be solved for in a first-order constrained Horn clause problem.

\begin{figure}

\begin{prooftree}
\AxiomC{}
\LeftLabel{\ISubBool}
\UnaryInfC{$\phi \implies \psi \infers \boolty{\phi} \subtype \boolty{\psi}$}
\end{prooftree}

\begin{prooftree}
\AxiomC{$C_1 \infers T_1' \subtype T_1$}
\AxiomC{$C_2 \infers T_2 \subtype T_2'$}
\LeftLabel{\ISubArrow}
\BinaryInfC{$C_1 \wedge C_2 \infers T_1 \to T_2 \subtype T_1' \to T_2'$}
\end{prooftree}

\begin{prooftree}
\AxiomC{$C \infers T[z/x] \subtype T'[z/y]$}
\LeftLabel{\ISubProd}
\RightLabel{$z$ fresh}
\UnaryInfC{$\forall z:\intsort.\,C \infers \dto{x}{\intty}{T} \subtype \dto{y}{\intty}{T'}$}
\end{prooftree}

\bigskip

\begin{prooftree}
\AxiomC{\phantom{Hello}}
\LeftLabel{\IVar}
\UnaryInfC{$\Gamma_1,\,x\!\!:T,\,\Gamma_2 \mm \truetm \infers x : T$}
\end{prooftree}

\begin{prooftree}
\AxiomC{\phantom{World}}
\LeftLabel{\IConst}
\UnaryInfC{$\Gamma \mm \truetm \infers \phi : \boolty{\phi}$}
\end{prooftree}

\begin{prooftree}
\AxiomC{$\Gamma \mm C \infers G : \dto{x}{\intsort}{T}$}
\LeftLabel{\IAppI}
\UnaryInfC{$\Gamma \mm C \infers \term{G N} : T[N/x]$}
\end{prooftree}

\begin{prooftree}
\AxiomC{$\Gamma \mm C_1 \infers G : T_1 \to T_2$}
\AxiomC{$\Gamma \mm C_2 \infers H : T_3$}
\AxiomC{$C_3 \infers T_3 \subtype T_1$}
\LeftLabel{\IAppR}
\TrinaryInfC{$\Gamma \mm C_1 \wedge C_2 \wedge C_3 \infers \term{G H} : T_2$}
\end{prooftree}

\begin{prooftree}
\AxiomC{$\Gamma,\,x:\intty \mm C \infers G : T$}
\LeftLabel{\IAbsI}
\UnaryInfC{$\Gamma \mm \forall x:\intsort.\, C \infers \abs{x\!\!:\!\!\intsort}{G} : \dto{x}{\intty}{T}$}
\end{prooftree}

\begin{prooftree}
\AxiomC{$T_1 = \freshty{\Gamma^\flat}{\sigma}$}
\AxiomC{$\Gamma,\,x:T_1 \mm C \infers G : T_2$}
\LeftLabel{\IAbsR}
\BinaryInfC{$\Gamma \mm C \infers \abs{x\!\!:\!\!\rho}{G} : T_1 \to T_2$}
\end{prooftree}

\begin{prooftree}
\AxiomC{$\Gamma \mm C_1 \infers G : \boolty{\phi_1}$}
\AxiomC{$\Gamma \mm C_2 \infers H : \boolty{\phi_2}$}
\LeftLabel{\IAnd}
\BinaryInfC{$\Gamma \mm C_1 \wedge C_2 \infers G \wedge H : \boolty{\phi_1 \wedge \phi_2}$}
\end{prooftree}

\begin{prooftree}
\AxiomC{$\Gamma \mm C_1 \infers G : \boolty{\phi_1}$}
\AxiomC{$\Gamma \mm C_2 \infers H : \boolty{\phi_2}$}
\LeftLabel{\IOr}
\BinaryInfC{$\Gamma \mm C_1 \wedge C_2 \infers G \vee H : \boolty{\phi_1 \vee \phi_2}$}
\end{prooftree}

\begin{prooftree}
\AxiomC{$\Gamma,\, x\!\!:\intty \mm C \infers G : \boolty{\phi}$}
\LeftLabel{\IExists}
\UnaryInfC{$\Gamma \mm \forall x:\intsort.\,C \infers \exists x\!\!:\intsort.\,G : \boolty{\exists x.\,\phi}$}
\end{prooftree}

\bigskip

\begin{prooftree}
\AxiomC{
  $\Gamma = \freshenv{\Delta}$
}
\noLine
\UnaryInfC{
  $C_x' \infers T_x \subtype \Gamma(x)$ (for each $x \in \dom(\Delta)$)
}
\noLine
\UnaryInfC{
  $\Gamma \mm C_x \infers P(x) : T_x$ (for each $x \in \dom(\Delta)$)
}
\LeftLabel{\IProg}
\UnaryInfC{$\bigwedge_{x \in \dom(\Delta)} (C_x \wedge C_x') \infers P : \Gamma$}
\end{prooftree}
\caption{Rules of inference.}\label{fig:inference-rules}
\end{figure}

The sort of a well formed refinement type is determined uniquely so we define the \emph{underlying sort} $T^\flat$ of a refinement type $T$ recursively:
\[
  \begin{array}{rcl}
  \boolty{s}^\flat &=& \boolsort \\
  (\dto{x}{\intsort}{T})^\flat &=& \intsort \to T^\flat \\
  ({T_1} \to {T_2})^\flat &=& T_1^\flat \to T_2^\flat \\
  \end{array}
\]
The underlying sort environment $\Gamma^\flat$ of a type environment $\Gamma$ is obtained recursively as follows:
\[
\begin{array}{rcl}
  \epsilon^\flat &=& \emptyset \\
  (\Gamma, x\!\!:T)^\flat &=& \Gamma^\flat, x\!\!:T^\flat \\
\end{array}
\]

\paragraph{Fresh relational variables.}
We suppose a function $\mathsf{freshRel}$ that, given a sort environment $\Delta$ and a second-order relational sort $\rho$, yields some term $\term{R x_1 \cdots{} x_k}$ with
\[
  R\!\!: \underbrace{\intsort \to \cdots{} \to \intsort}_{\text{$k$-times}} \to \rho
\]
a fresh, second-order relation constant, %
and the set $\makeset{x_1:\intsort, \ldots, x_k:\intsort}$ exactly the subset of $\Delta$ consisting of variables of integer sort.
We denote this situation by the judgement:
  \begin{prooftree}
  \AxiomC{}
  \RightLabel{$R$ fresh}
  \UnaryInfC{$\term{R x_1 \cdots{} x_k} = \freshrel{\Delta}{\rho}$}
  \end{prooftree}

\paragraph{Fresh types.}
Similarly, given a sort environment $\Delta$ and a sort $\sigma$, we suppose a function $\mathsf{freshTy}$ that %
yields a choice of type $T$ refining $\sigma$ and which is constructed from fresh relations as described above.
We will write a judgement of shape $T = \freshty{\Delta}{\sigma}$, which is defined by the following system:
\begin{prooftree}
\AxiomC{}
\UnaryInfC{$\intty = \freshty{\Delta}{\intsort}$}
\end{prooftree}
\begin{prooftree}
\AxiomC{$\phi = \freshrel{\Delta}{\boolsort}$}
\UnaryInfC{$\boolty{\phi} = \freshty{\Delta}{\boolsort}$}
\end{prooftree}
\begin{prooftree}
\AxiomC{$T_1 = \freshty{\Delta}{\rho_1}$}
\AxiomC{$T_2 = \freshty{\Delta}{\rho_2}$}
\BinaryInfC{$T_1 \to T_2 = \freshty{\Delta}{\rho_1 \to \rho_2}$}
\end{prooftree}
\begin{prooftree}
\AxiomC{$T = \freshty{\Delta,z:\intsort}{\rho}$}
\RightLabel{$z$ fresh}
\UnaryInfC{$\dto{z}{\intty}{T} = \freshty{\Delta}{\intsort \to \rho}$}
\end{prooftree}

\paragraph{Fresh type environments.}
We extend the judgement to sort environments $\Delta$ of shape $x_1\!\!:\sigma_1,\ldots,x_k\!\!:\sigma_k$ by the following rule:
\begin{prooftree}
\AxiomC{$T_1 = \freshty{\Delta}{\sigma_1} \quad \cdots \quad T_k = \freshty{\Delta}{\sigma_k}$}
\UnaryInfC{$x_1\!\!:T_1,\ldots,x_k\!\!:T_k = \freshenv{\Delta}$}
\end{prooftree}

\paragraph{Inference.}
Rather than giving a recursive procedure directly, we define three syntax-directed typing judgements, which have the following shape:
\[
C \infers T_1 \subtype T_2 \qquad\qquad \Gamma \mm C \infers G : T \qquad\qquad C \infers P : \Gamma
\]
for $\Gamma$ the type environment in which the type assignment is conducted, $C$ a first-order Horn constraint describing possible assignments to the uninterpreted relational constants, $G$ the term and $T$ the inferred type.

The three judgement forms are defined according to the rules in Figure \ref{fig:inference-rules}, where
we have generalised the judgements $\mathsf{freshRel}$ and $\mathsf{freshTy}$ to sets of sorts in the obvious way.  Since the system is syntax directed it can be read as an algorithm by regarding:

\noindent
\begin{center}
\begin{tabular}{r@{\hspace{10pt}}l}
$C \infers T_1 \subtype T_2$ & $T_1$ and $T_2$ as inputs and $C$ as output. \\
$\Gamma \mm C \infers G : T$ & $\Gamma$ and $G$ as inputs and $C$ and $T$ as outputs. \\
$C \infers P : \Gamma$ & $P$ as input and $C$ and $\Gamma$ as outputs.
\end{tabular}
\end{center}

Given an instance of the monotone problem $\abra{\Delta,P,G}$, let $C_1 \infers P : \Gamma$ and $\Gamma \mm C_2 \infers G : \boolty{\phi}$.
The type environment $\Gamma$ and the type $\boolty{\phi}$, which we think of as outputs, are  built from types containing formulas with occurrences of fresh relational constants.
Then there is a type environment $\Gamma'$ (without fresh relational constants) such that $\types P : \Gamma'$ and $\Gamma' \types G : \boolty{\falsetm}$ iff the first order system of constrained Horn clauses $\abra{\Delta',C_1 \wedge C_2 \wedge \phi \implies \falsetm}$ has a symbolic model, i.e. a solution that is expressible within quantifier free integer linear arithmetic.

\section{Extension by type guards}\label{sec:extension}

It is worthwhile to consider the kinds of safety verification questions that can be posed in our formalisation.
In particular, since we are interested in verifying higher-order programs, it would be helpful to be able to state properties of higher-order functions.
In type-based approaches to verification, such as \cite{rondon-et-al-pldi2008,unno-et-al-popl2013}, it is possible to do this.  For example, one can state a typing judgement
asserting that some program expression is guaranteed to map evenness preserving functions to non-zero integers.

It is not immediate how to state a similar property using higher-order constrained Horn clauses.
Here, properties are specified by stating their negation as a goal formula.
The negation of the example property above has the form:
$
  \exists f\!\!:\intsort \to \intsort \to \boolsort.\, \exists x:\!\!\intsort.\, \mathsf{EvenPreserving}\ f \wedge x = 0 \wedge \term{G f x}
$,
with $G$ playing the role of the program expression,
and $\mathsf{EvenPreserving}$ is required to be a goal formula asserting that its argument relates even integers only to other even integers. That is, a logical formulation of the refinement type $\dto{x}{\intty}{\dto{y}{\intty}{\boolty{x \equiv 0 \mathrel{\mathsf{mod}} 2 \implies y \equiv 0 \mathrel{\mathsf{mod}} 2}}}$.
However, the existence of such a goal term is problematic, since evenness preservation of this kind is not a monotone property --- it is clearly satisfied by the empty relation, yet violated by many relations larger than it.

The aim of this section is to show that, despite this serious deficiency of expression, it is possible to state higher-type properties of the kind given above.
The key is to observe that, in the formulation of a type judgement as a goal, the properties expressed by refinement types will only occur guarding existentially quantified variables.
Hence, these ``type guards'' (such as the hypothetical $\mathsf{EvenPreserving}$ predicate above), serve only to limit the search space from which the existential witness is drawn, rather than to  test some quality of a particular, concrete individual.
We begin by making the restricted shape of these type guards precise then, in Section \ref{sec:elim-guards}, we show how monotonicity allows for such guards to be entirely eliminated.

\paragraph{Type guarded existentials.}
We extend the syntax of higher-order Horn clauses by adding a family of \emph{type guarded existential quantifier} constants $\exists_{T::\rho} : \rho \to \boolsort$, which are parametrised by a closed type $\sorts T :: \rho$.  We write $\exists x\!\!:\!\!T.\,M$ rather than $\exists_{T::\rho}(\abs{x}{M})$, omitting $\rho$ since it can be uniquely determined from $T$.

\paragraph{Interpretation.}
For the purposes of interpretation, we need to give two standard semantics to types that parallels their monotone semantics.
The definitions are completely analogous:
\[
\begin{array}{rcl}
  \smng{\Delta \sorts \boolty{\phi} :: \boolsort}(\alpha) &=& \{0, \smng{\Delta \types \phi : \boolsort}(\alpha) \} \\
  \smng{\Delta \sorts \dto{x}{\iota}{T} :: \iota \to \rho}(\alpha) &=& \{ r  \mid \forall n \in A_\iota.\; r(n) \in \smng{\Delta,x:\iota \sorts T :: \rho}(\alpha[x \mapsto n])\} \\
  \smng{\Delta \sorts T_1 \to T_2 :: \rho_1 \to \rho_2}(\alpha) &=& \{ f  \mid \forall r \in \smng{\Delta \types T_1 :: \rho_1}(\alpha).\, f(r) \in \smng{\Delta \sorts T_2 :: \rho_2}(\alpha)\} \\
\end{array}
\]
but now the set from which e.g. $f$ is drawn in the third clause is the set of all propositional functions $\smng{\rho_1 \to \rho_2}$.  Similarly for the relational semantics.
\[
\begin{array}{rcll}
  \rsmng{\Delta \types \boolty{\phi} :: \boolsort}(\alpha) &=& \smng{\Delta \types \phi : \boolsort}(\alpha) &\\
  \rsmng{\Delta \types \dto{x}{\iota}{T} :: \iota \to \rho}(\alpha)(n) &=& \rsmng{\Delta, x:\iota \sorts T :: \rho}(\alpha[x \mapsto n]) &\\
  \rsmng{\Delta \types T_1 \to T_2 :: \rho_1 \to \rho_2}(\alpha)(r) &=&
      \rsmng{\Delta \sorts T_2 :: \rho_2}(\alpha) & \text{if $r \subseteq \rsmng{\Delta \sorts T_1 :: \rho_1}(\alpha)$} \\
  \rsmng{\Delta \types T_1 \to T_2 :: \rho_1 \to \rho_2}(\alpha)(r) &=& \rsmng{ \epsilon \sorts \toptype{\rho_2} :: \rho_2}(\emptyset) & \text{otherwise} \\
\end{array}
\]
It follows that $\rsmng{T}(\alpha)$ is actually a monotone relation. An inspection of the proof of Lemma \ref{lem:types-are-pideals} shows that it carries over to the case of the standard semantics.

Having defined this semantics of types, the family of type guarded existential quantifiers can be interpreted in the standard semantics by the family of functions $\existsfn_{T::\rho} \in \smng{\rho} \To \mathbbm{2}$ and in the monotone semantics by the family of functions $\mexistsfn_{T::\rho} \in \mmng{\rho} \mTo \mathbbm{2}$ defined by
$\existsfn_{T::\rho}(s) = \max\{s(d) \mid d \in \smng{\sorts T :: \rho} \}$ and
$\mexistsfn_{T::\rho}(r) = \max\{r(d) \mid d \in \mmng{\sorts T :: \rho} \}$.

\paragraph{Type guarded higher-order Horn clauses}
We extend the higher-order Horn clause problem to incorporate type guarded existentials by adding to the grammar for constrained goal formulas: $G \Coloneqq \cdots \mid \exists x\!\!:\!\!T.\: G$.
Note that this subsumes the existential quantifier on relations completely as: $\smng{\exists_\rho} = \smng{\exists_{\toptype{\rho} :: \rho}}$, since the meaning $\smng{\toptype{\rho}}(\alpha)$ of the top type in any valuation $\alpha$ is the whole universe $\smng{\rho}$.
The extended problem is called the \emph{type-guarded, higher-order, constrained Horn clause problem}.
We extend logic programs in the same way, adding the possibility that constant $c$ can be $\exists_{T::\rho}$ in the side condition of the rule $\GCst$.
The resulting problem we name the \emph{type-guarded, monotone logic safety problem}.
The reduction given by Theorem \ref{thm:reduction-to-mono} still holds in this extended setting.
\begin{theorem}\label{thm:ty-guard-red-s-to-m}
Type-guarded, higher-order constrained Horn clause problem $\abra{\Delta,D,G}$ is solvable iff type-guarded, monotone logic safety problem $\abra{\Delta,P_D,G}$ is solvable.
\end{theorem}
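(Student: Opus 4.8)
The plan is to show that the proof of Theorem~\ref{thm:reduction-to-mono} is entirely modular in the constants of the goal-term language, so that adding the type-guarded quantifiers $\exists_{T::\rho}$ forces us to redo only its single constant-sensitive ingredient, Lemma~\ref{lem:embed-mngs}. The construction of $P_D$ from $D$ is unaffected, so Lemma~\ref{lem:models-pfps} (models of $D$ are the prefixed points of $\sfunc_{P_D}$) still holds; the Galois apparatus of Lemma~\ref{galois-relations} and Corollary~\ref{lem:galois-valuations} concerns sorts and valuations only and is untouched; and Lemma~\ref{lem:prefix-point-mapping} together with the chain of implications proving Theorem~\ref{thm:reduction-to-mono} are derived formally from Lemma~\ref{lem:embed-mngs} and these facts. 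Hence it suffices to re-establish Lemma~\ref{lem:embed-mngs} for goal terms that may contain $\exists_{T::\rho}$, after which the equivalence of $\abra{\Delta,D,G}$ and $\abra{\Delta,P_D,G}$ transfers verbatim.

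The inductive proof of Lemma~\ref{lem:embed-mngs} proceeds on the sorting derivation, and the only new base case is the constant $\exists_{T::\rho}$. By analogy with the ordinary existential, which was discharged using Lemma~\ref{lem:exists-mngs}, I would isolate and prove the type-guarded analogue
\[
  \mexistsfn_{T::\rho}\circ\upmon_{\rho\to o}\;\subseteq\;\existsfn_{T::\rho}\;\subseteq\;\mexistsfn_{T::\rho}\circ\lomon_{\rho\to o}.
\]
Using $\upmon_{\rho\to o}(s)=s\circ\jembed_\rho$ and $\lomon_{\rho\to o}(s)=s\circ\iembed_\rho$ (exactly as in the proof of Lemma~\ref{lem:exists-mngs}), the left inclusion reduces to the claim that $\jembed_\rho$ sends the monotone extent $\mmng{T}$ into the standard extent $\smng{T}$, and the right inclusion to the claim that $\lomon_\rho$ sends $\smng{T}$ into $\mmng{T}$: a monotone witness for the guarded quantifier is transported to a standard one and vice versa. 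By Lemma~\ref{lem:types-are-pideals} and its standard-semantics counterpart (noted when $\smng{T}$ and $\rsmng{T}$ are introduced), both extents are principal ideals generated by $\rmmng{T}$ and $\rsmng{T}$; so, by monotonicity of the embeddings, the two claims reduce to $\jembed_\rho(\rmmng{T})\subseteq\rsmng{T}$ and $\lomon_\rho(\rsmng{T})\subseteq\rmmng{T}$.

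These last inclusions I would establish by induction on $T$, in each case pairing the inclusion about the generators with its adjoint companion so that the two can be proved simultaneously: for the $\jembed/\upmon$ connection the pair $\jembed_\rho(\rmmng{T})\subseteq\rsmng{T}$ and $\upmon_\rho(\rsmng{T})\subseteq\rmmng{T}$, and dually for the $\iembed/\lomon$ connection. The base type $\boolty{\phi}$ is immediate since there the two semantics coincide and the maps are the identity, and the dependent type $\dto{x}{\iota}{T'}$ is handled pointwise. The crux, and what I expect to be the main obstacle, is the function type $T_1\to T_2$: here $\rmmng{T_1\to T_2}$ and $\rsmng{T_1\to T_2}$ are defined by a case split on whether the argument lies below the domain type, while the embeddings act by pre- and post-composition with the adjoints at $\rho_1$ and $\rho_2$. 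The delicate point is to rule out the spurious ``top'' branch of that case split, which is precisely where the two inclusions of the pair feed into each other at the smaller type $T_1$ — for instance $r'\subseteq\rmmng{T_1}$ yields $\jembed_{\rho_1}(r')\subseteq\jembed_{\rho_1}(\rmmng{T_1})\subseteq\rsmng{T_1}$ by the companion hypothesis, so the argument stays inside the domain type and the top branch cannot be reached.

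Once the type-guarded existential lemma is in hand, the new case of Lemma~\ref{lem:embed-mngs} closes exactly as the ordinary-existential case does, and nothing else in the development changes: Lemma~\ref{lem:prefix-point-mapping} and each step of the implication chain in the proof of Theorem~\ref{thm:reduction-to-mono} go through unaltered, yielding the claimed equivalence between solvability of the type-guarded higher-order constrained Horn clause problem $\abra{\Delta,D,G}$ and solvability of the type-guarded monotone logic safety problem $\abra{\Delta,P_D,G}$.
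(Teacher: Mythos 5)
Your proposal is correct and takes essentially the paper's own route: the paper gives no explicit proof of Theorem~\ref{thm:ty-guard-red-s-to-m}, asserting only that the reduction of Theorem~\ref{thm:reduction-to-mono} ``still holds in this extended setting'', and your argument is precisely the working-out of that assertion --- the sole new obligation is the $\exists_{T::\rho}$ case of Lemma~\ref{lem:embed-mngs}, discharged via a type-guarded analogue of Lemma~\ref{lem:exists-mngs} together with the principal-ideal structure of types (Lemma~\ref{lem:types-are-pideals} and its standard-semantics counterpart, which the paper notes carries over). Your simultaneous induction, pairing each generator inclusion with its adjoint companion so that the companion at $T_1$ rules out the spurious top branch in the arrow case, is exactly the argument needed, and the remainder of the reduction (Lemmas~\ref{lem:models-pfps} and~\ref{lem:prefix-point-mapping} and the implication chain) is indeed untouched.
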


\subsection{Elimination of type guards within monotone models}\label{sec:elim-guards}

We now show that type-guarded existentials can be eliminated under the monotone semantics, that is: any instance of the type-guarded monotone problem can be reduced to an instance of the ordinary monotone problem.
The obvious approach to this is to try to capture, using a formula, the class of relations defined by a closed type.
If $H$ is a goal term representing the class $\mmng{T}(\emptyset)$, then a type guarded existential formula $\exists x\!\!:\!\!T.\,G$ can be eliminated in favour of an ordinary existential formula $\exists x.\, H\,x \wedge G$.

\paragraph{Goal definability.}
We say that a relation $r \in \mmng{\rho}$ is \emph{goal term definable} (more briefly \emph{$G$-definable}) just if there exists a closed goal term $\types H : \rho$ and $\mmng{H}(\emptyset) = r$.  We say that a class of relations is $G$-definable just if the characteristic predicate of the class is $G$-definable.
However, having made this precise, we can now observe that it is not generally possible to capture the class of relations defined by a type.
\begin{lemma}\label{lem:non-definability}
The class of relations $\mmng{(\dto{x}{\intsort}{\boolty{x \equiv 0 \mathrel{\mathsf{mod}} 2}}) \to \boolty{\falsetm}}(\emptyset)$ is not $G$-definable.
\end{lemma}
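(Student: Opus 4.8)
The plan is to compute the class $\mmng{T}(\emptyset)$ explicitly, where $T$ is the type in the statement, and then observe that its characteristic predicate fails to be monotone, so that it cannot be the denotation of any goal term. Writing $T_1 = \dto{x}{\intsort}{\boolty{x \equiv 0 \mathrel{\mathsf{mod}} 2}}$, I would first use Lemma~\ref{lem:types-are-pideals} to identify $\mmng{T_1}(\emptyset)$ with the principal ideal $\pideal\,\rmmng{T_1}(\emptyset)$. Unfolding the relational semantics gives $\rmmng{T_1}(\emptyset)(n) = \smng{x \equiv 0 \mathrel{\mathsf{mod}} 2}([x \mapsto n])$, i.e. $\rmmng{T_1}(\emptyset)$ is the element $E \in \mmng{\intsort \to \boolsort}$ picking out the even integers, so that $\mmng{T_1}(\emptyset) = \{ r \mid r \subseteq E \}$. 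Unfolding the ideal semantics of the arrow, together with $\mmng{\boolty{\falsetm}}(\emptyset) = \{0\}$, then yields $\mmng{T}(\emptyset) = \{ f \in \mmng{(\intsort \to \boolsort) \to \boolsort} \mid f(r) = 0 \text{ for all } r \subseteq E \}$. Since every such $f$ is monotone and $E$ is the greatest element of $\{ r \mid r \subseteq E \}$, the universal quantifier collapses to a single test, giving the clean description $\mmng{T}(\emptyset) = \{ f \mid f(E) = 0 \}$.

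Next I would examine the characteristic predicate $\chi$ of this class, the map sending $f \mapsto 1$ precisely when $f(E) = 0$. The crux is that $\chi$ is antitone rather than monotone: let $f_0$ be the least relation in $\mmng{(\intsort \to \boolsort) \to \boolsort}$ (the constant-$0$ map) and $f_1$ the greatest (the constant-$1$ map). Then $f_0 \subseteq f_1$, yet $f_0(E) = 0$ and $f_1(E) = 1$, so $\chi(f_0) = 1$ while $\chi(f_1) = 0$. Hence $\chi$ is not order-preserving, and in particular is not even an element of $\mmng{((\intsort \to \boolsort) \to \boolsort) \to \boolsort}$. This is exactly the mirror image of Lemma~\ref{lem:G-defns}(i): the \emph{complement} class $\{ f \mid f(E) = 1 \}$ has a monotone characteristic predicate (it grows with $f$) and is $G$-definable by $\com(T)$, whereas the class itself does not.

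Finally I would conclude by contradiction. Suppose the class were $G$-definable, so there is a closed goal term $\types H : ((\intsort \to \boolsort) \to \boolsort) \to \boolsort$ with $\mmng{H}(\emptyset) = \chi$. By the lemma establishing that goal terms denote monotone relations (namely $\mmng{\Delta \sorts G : \rho} \in \mmng{\Delta} \mTo \mmng{\rho}$), the relation $\mmng{H}(\emptyset)$ is monotone, contradicting the non-monotonicity of $\chi$ established above; hence no such $H$ exists. I expect the only delicate step to be the explicit computation of $\mmng{T}(\emptyset)$ — specifically, correctly handling the contravariant (domain) occurrence of the arrow and justifying, via monotonicity of $f$ and maximality of $E$, that the quantification over the ideal $\{ r \mid r \subseteq E \}$ reduces to the single condition $f(E) = 0$. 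Once the class is pinned down as $\{ f \mid f(E) = 0 \}$, the failure of monotonicity and the appeal to the monotone denotation of goal terms are both immediate.
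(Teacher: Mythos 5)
Your proposal is correct and takes essentially the same route as the paper: both compute the class as the relations that are false on every set of even integers, refute monotonicity of its characteristic predicate by comparing the least and greatest relations (your $f_0$, $f_1$ are the paper's $\bot$, $\top$), and conclude via the fact that goal terms denote elements of the monotone frame. Your extra step collapsing the class to $\{ f \mid f(E) = 0 \}$ using maximality of the even-integer relation is a correct but inessential refinement that the paper's terser argument omits.
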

\begin{proof}
This type, which we shall abbreviate by $T$, defines the predicate $\mmng{T}(\emptyset)$ which contains $r$ just if $r$ is false of every set of even integers (identifying the elements of $\mmng{\intsort \to \boolsort}$ with sets of integers).
However, this predicate $\mmng{T}(\emptyset)$ is not monotone and hence cannot be defined by a goal term.
Clearly, the empty relation $\bot \in \mmng{(\intsort \to \boolsort) \to \boolsort}$ is not an element of  $\mmng{T}(\emptyset)$, since it is false of every set of integers. However, $\bot \subseteq \top$, the universal relation, and yet $\top$ is not in $\mmng{T}(\emptyset)(\top)$ since it assigns true to all sets of integers.
\end{proof}
\noindent
This result makes it clear that although every $r \in \mmng{T}(\emptyset)$ is itself monotone, the set of relations $\mmng{T}(\emptyset)$, viewed as a monadic predicate, may not be monotone.

Despite the foregoing result, it is possible to eliminate type guards by representing their semantics logically.  Consider a type guarded formula $\exists x\!\!:\!\!T.\,G$.  We can think of $G$ as a monotone function of $x$, so, if $G$ is true of some relation $r \in \mmng{T}(\emptyset)$ then it will also be true of any larger relation in $\mmng{T}(\emptyset)$.
Consequently, $\exists x\!\!:\!\!T.\,G$ is true iff $G$ is true of the largest relation in $\mmng{T}(\emptyset)$, namely (following Lemma \ref{lem:types-are-pideals}) $\rmmng{T}(\emptyset)$.
Hence, we do not need to represent the whole class of relations $\mmng{T}(\emptyset)$ but only its largest member because, if $H$ is a $G$-definition of $\rmmng{T}(\emptyset)$, then we can eliminate $\exists x\!\!:\!\!T\,G$ in favour of $G[H/x]$, which is logically equivalent.

\paragraph{Definability of type complements and the relational semantics.}
In order to give a $G$-definition of the largest member of a type $\mmng{T :: \rho}(\emptyset)$, it is useful to be able to give a $G$-definition of the complement of the type, i.e. the class that contains $r \in \mmng{\rho}$ just if $r \notin \mmng{T}(\emptyset)$.
This is because the largest element of a type $T_1 \to T_2$ is a relation that, in particular, is true of its argument $s$ whenever $s$ is not an element of $\mmng{T_1}(\emptyset)$.
In fact, the notions of complement of $T$ and largest element of $T$ are here intertwined: To understand when a relation $r$ is not in the type $T_1 \to T_2$ you must understand when there exists a relation $s \in \mmng{T_1}(\emptyset)$ such that $r(s)$ is not in $\mmng{T_2}(\emptyset)$.
Since this condition is a  type-guarded existential, it follows from the above discussion that it can be re-expressed using the largest element of $\mmng{T_1}(\emptyset)$.
This leads to the definitions by mutual induction on type:
\[
  \begin{array}{ccc}
    \arraycolsep=1.4pt
    \begin{array}{rcl}
      \com(\boolty{\phi}) &=& \abs{z}{z \wedge \neg\phi} \\
      \com(\dto{x}{\iota}{T}) &=& \abs{z}{\exists  x\!:\!\iota.\,\com(T)\,(z\,x)} \\
      \com(T_1 \to T_2) &=& \abs{z}{\com(T_2)\,(z\,\lar(\falsetm)(T_1))}
    \end{array}
    &&
    \arraycolsep=1.4pt
    \begin{array}{rcl}
      \lar(G)(\boolty{\phi}) &=& G \vee \phi \\
      \lar(G)(\dto{x}{\iota}{T}) &=& \abs{z}{\lar(G)(T[z/x])} \\
      \lar(G)(T_1 \to T_2) &=& \abs{z}{\lar(G \vee{} \com(T_1)\,z)(T_2)}
    \end{array}
  \end{array}
\]
in which $\com(T)$ is a $G$-definition of the complement of the class $\mmng{T}(\emptyset)$.
The definition of $\lar$ is parametrised by a goal formula representing domain conditions that are accumulated during the analysis of function types.
It follows that $\lar(\falsetm)(T)$ is a $G$-definition of the largest element of the class $\mmng{T}(\emptyset)$.\\

\begin{lemn}[\ref{lem:G-defns}]
For any closed type $\sorts T :: \rho$:
\begin{enumerate}[(i)]
  \item The goal term $\com(T)$ is a $G$-definition of the class $\{r \in \mmng{\rho} \mid r \notin \mmng{T}(\emptyset)\}$.
  \item The goal term $\lar(\falsetm)(T)$ is a $G$-definition of the relation $\rmmng{T}(\emptyset)$.
\end{enumerate}
\end{lemn}
\begin{proof}
The following more general statement can be proven by a straightforward induction on $\Delta \sorts T :: \rho$. For all $\Delta \sorts T :: \rho$, $\alpha \in \mmng{\Delta}$ and goal formulas $\Delta \sorts G : \boolsort$, $\mmng{\com(T)}(\alpha) = \{r \in \mmng{\rho} \mid r \notin \mmng{T}(\alpha)\}$ and, for all $\vv{d}$ of the appropriate sorts: $\mmng{\lar(G)(T)}(\alpha)(\vv{d}) = 1$ iff $\mmng{G}(\alpha)$ = 1 or $\rmmng{T}(\alpha)(\vv{d}) = 1$.
\end{proof}

With a $G$-definition of the largest element satisfying a type, we have the necessary apparatus to eliminate type-guarded existentials.  To this end, for each type guarded goal term $G$, let $\mathsf{Elim}(G)$ be defined by:
\[
  \begin{array}{cc}
  \begin{array}{rcl}
    \mathsf{Elim}(x) &=& x \\
    \mathsf{Elim}(\abs{x\!\!:\!\!\iota}{G}) &=& \abs{x\!\!:\!\!\iota}{\mathsf{Elim}(G)} \\
    \mathsf{Elim}(\abs{x\!\!:\!\!\rho}{G}) &=& \abs{x\!\!:\!\!\rho}{\mathsf{Elim}(G)} \\
    \mathsf{Elim}(G_1\,N) &=& \mathsf{Elim}(G_1)\,N \\
    \mathsf{Elim}(G_1\,G_2) &=& \mathsf{Elim}(G_1)\,\mathsf{Elim}(G_2)
  \end{array}
  &
  \begin{array}{rcl}
    \mathsf{Elim}(\phi) &=& \phi \\
    \mathsf{Elim}(G_1 \vee G_2) &=& \mathsf{Elim}(G_1) \vee \mathsf{Elim}(G_2) \\
    \mathsf{Elim}(G_1 \wedge G_2) &=& \mathsf{Elim}(G_1) \wedge \mathsf{Elim}(G_2) \\
    \mathsf{Elim}(\exists x\!:\!T.\,G) &=& \mathsf{Elim}(G)[\lar(\falsetm)(T)/x] \\
    \mathsf{Elim}(\exists x\!:\!\iota.\,G) &=& \exists x\!:\!\iota.\,\mathsf{Elim}(G) \\
  \end{array}
  \end{array}
\]
and extend this to type guarded logic programs $\sorts P : \Delta$ by, for all $x \in \dom(\Delta)$, $\mathsf{Elim}(P)(x) = \mathsf{Elim}(P(x))$.
It is immediate that $\mathsf{Elim}(G)$ is a goal term not containing any type guarded existentials.

\begin{example}
  We return to the motivating example of this section.
  Consider the type $T$, written explicitly as $\dto{x}{\intty}{\dto{y}{\intty}{\boolty{x \equiv 0 \mathrel{\mathsf{mod}} 2 \implies y \equiv 0 \mathrel{\mathsf{mod}} 2}}}$. It follows that $\lar(\falsetm)(T) = \abs{xy}{x \equiv 0 \mathrel{\mathsf{mod}} 2 \implies y \equiv 0 \mathrel{\mathsf{mod}} 2}$
  and $\com(T) = \abs{z}{\exists x.\,\exists y.\, z x y \wedge x \equiv 0 \mathrel{\mathsf{mod}} 2 \wedge y \not\equiv 0 \mathrel{\mathsf{mod}} 2}$ (after a little simplification).
  It is possible to express that some goal term $H : (\intsort \to \intsort \to \boolsort) \to \intsort \to \boolsort$ is not an element of $\mmng{T \to \dto{z}{\iota}{\boolty{z \neq 0}}}(\emptyset)$ using the goal term $\com(T \to \dto{z}{\iota}{\boolty{z \neq 0}})\ H$.  This term asserts that $H$ violates the (relational formulation of the) property of mapping eveness preserving functions to non-zero integers and is equivalent to:
  \[
    \exists z.\, H\ (\abs{xy}{x \equiv 0 \mathrel{\mathsf{mod}} 2 \implies y \equiv 0 \mathrel{\mathsf{mod}} 2})\ z \wedge z = 0
  \]
  Here we use $\com$ to assert directly the negation of the property represented by the type.
  However, as mentioned in the discussion following the definition of $\com$, type guarded existentials are used implicitly.  For example, it is easy to see that this goal formula is logically equivalent to  $\mathsf{Elim}(\exists z_1\!\!:\!(T \to \dto{z}{\iota}{\boolty{z \neq 0}}).\, \exists z_2.\, H\ z_1 z_2 \wedge z_2 = 0)$.
\end{example}

As sketched above, due to monotonicity the (ordinary) goal term $\mathsf{Elim}(G)$ obained from type-guarded goal term $G$ is logically equivalent to $G$.
Consequently, there is no loss in performing the elimination: an instance of the type-guarded monotone problem is solvable iff the instance of the (ordinary) monotone problem obtained by elimination is solvable.
The proof follows the sketch outlined following Lemma \ref{lem:non-definability}.

\begin{theorem}\label{thm:ty-guard-elim-final}
Type guarded, higher-order constrained Horn clause problem $\abra{\Delta,D,G}$ is solvable iff monotone logic program safety problem $\abra{\Delta,\mathsf{Elim}(P_D),\mathsf{Elim}(G)}$ is solvable.
\end{theorem}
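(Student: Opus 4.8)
The plan is to factor the statement through the semantics-preserving reduction already established in Theorem~\ref{thm:ty-guard-red-s-to-m}, and then to show that the elimination map $\mathsf{Elim}$ leaves the \emph{monotone} meaning of every goal term — and hence of every program — unchanged. By Theorem~\ref{thm:ty-guard-red-s-to-m}, the type-guarded Horn clause problem $\abra{\Delta,D,G}$ is solvable iff the type-guarded monotone problem $\abra{\Delta,P_D,G}$ is solvable, so it suffices to prove that $\abra{\Delta,P_D,G}$ and the type-guard-free monotone problem $\abra{\Delta,\mathsf{Elim}(P_D),\mathsf{Elim}(G)}$ are solvable under exactly the same conditions. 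In fact I would prove the stronger claim that these two monotone problems determine literally the same one-step functional and the same goal meaning, so their solvability is trivially equivalent.

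The heart of the argument is the semantic invariance $\mmng{G}(\alpha) = \mmng{\mathsf{Elim}(G)}(\alpha)$, for every type-guarded goal term $\Delta \sorts G : \rho$ and every $\alpha \in \mmng{\Delta}$, which I would prove by structural induction on $G$. Since $\mathsf{Elim}$ is a homomorphism for variables, constraints, both forms of abstraction and application, $\wedge$, $\vee$ and the individual existential, and the monotone semantics is compositional, all cases except the type-guarded existential follow immediately from the induction hypothesis. For $G = \exists x\!\!:\!\!T.\,H$, I would unfold the monotone interpretation of $\exists_{T::\rho}$ via $\mexistsfn_{T::\rho}$ to obtain $\mmng{\exists x\!\!:\!\!T.\,H}(\alpha) = \max\{\mmng{H}(\alpha[x \mapsto d]) \mid d \in \mmng{T}(\emptyset)\}$, the maximum being taken over the whole (generally non-monotone) class $\mmng{T}(\emptyset)$.

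Two earlier results then collapse this maximum to a single substitution. First, $\mmng{H}$ is monotone in the argument $x$ because $H$ is a goal term, and by Lemma~\ref{lem:types-are-pideals} the class $\mmng{T}(\emptyset)$ is the principal ideal $\pideal\,\rmmng{T}(\emptyset)$ with greatest element $\rmmng{T}(\emptyset)$; hence the maximum is attained at $d = \rmmng{T}(\emptyset)$, giving $\mmng{\exists x\!\!:\!\!T.\,H}(\alpha) = \mmng{H}(\alpha[x \mapsto \rmmng{T}(\emptyset)])$. Second, by Lemma~\ref{lem:G-defns}(ii) the closed, type-guard-free goal term $\lar(\falsetm)(T)$ is a $G$-definition of $\rmmng{T}(\emptyset)$, so $\mmng{\lar(\falsetm)(T)}(\alpha) = \rmmng{T}(\emptyset)$ independently of $\alpha$. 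Combining these with the induction hypothesis $\mmng{H} = \mmng{\mathsf{Elim}(H)}$ and the routine substitution lemma $\mmng{K[N/x]}(\alpha) = \mmng{K}(\alpha[x \mapsto \mmng{N}(\alpha)])$ yields $\mmng{\exists x\!\!:\!\!T.\,H}(\alpha) = \mmng{\mathsf{Elim}(H)[\lar(\falsetm)(T)/x]}(\alpha) = \mmng{\mathsf{Elim}(\exists x\!\!:\!\!T.\,H)}(\alpha)$, completing the induction. The substitution is well formed, and the result type-guard-free, precisely because $\lar(\falsetm)(T)$ is itself built only from individual existentials, conjunction, disjunction, abstraction and application.

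To finish, I would apply the invariance pointwise to the bodies of $P_D$, obtaining $\mmng{P_D(x)} = \mmng{\mathsf{Elim}(P_D)(x)}$ for each $x \in \dom(\Delta)$, whence $\mfunc_{P_D:\Delta} = \mfunc_{\mathsf{Elim}(P_D):\Delta}$ as operators on $\mmng{\Delta}$; in particular they share the same prefixed points and the same least fixed point. Since also $\mmng{G} = \mmng{\mathsf{Elim}(G)}$, the solvability condition for the monotone problem — existence, in each model of the background theory, of a prefixed point with goal value $0$ — holds for $\abra{\Delta,P_D,G}$ exactly when it holds for $\abra{\Delta,\mathsf{Elim}(P_D),\mathsf{Elim}(G)}$. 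Chaining this with Theorem~\ref{thm:ty-guard-red-s-to-m} gives the theorem. I expect the main obstacle to be the existential case of the invariance lemma: one must argue carefully that monotonicity of $\mmng{H}$ together with the principal-ideal structure of $\mmng{T}(\emptyset)$ justifies replacing quantification over the entire non-monotone class by evaluation at its single greatest element, and that this greatest element is exactly the relation $G$-defined by $\lar(\falsetm)(T)$.
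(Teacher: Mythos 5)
Your proposal is correct and takes essentially the same route as the paper: it chains Theorem~\ref{thm:ty-guard-red-s-to-m} with the semantic invariance $\mmng{G} = \mmng{\mathsf{Elim}(G)}$, whose only non-trivial case (the type-guarded existential) is discharged exactly as the paper sketches --- monotonicity of the body together with Lemma~\ref{lem:types-are-pideals} collapses the guarded quantifier to instantiation at the greatest element $\rmmng{T}(\emptyset)$, which Lemma~\ref{lem:G-defns}(ii) identifies with the meaning of $\lar(\falsetm)(T)$. The paper leaves this as a sketch ("the proof follows the sketch outlined following Lemma~\ref{lem:non-definability}"), so your structural induction and the pointwise transfer to $\mfunc_{P_D:\Delta}$ simply supply the details that argument alludes to.
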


\noindent
Since every existential quantifier $\exists_\rho$ at relational sort can be viewed as an existential quantifier $\exists_{\top::\rho}$ guarded by the top type $\top_\rho$, it follows that, without loss of generality, one may assume that the latter problem contains only existential quantification over individuals.
This justifies the absence of relational sort existential quantifiers in the type system of Section \ref{sec:types}.

\else
\fi

\end{document}